\documentclass{article}

\usepackage[utf8]{inputenc}
\usepackage[T1]{fontenc}

\usepackage{url}
\usepackage{booktabs}
\usepackage{amsfonts}
\usepackage{nicefrac}
\usepackage[dvipsnames]{xcolor}
\usepackage{amsthm}
\usepackage{amsmath}
\usepackage{amssymb}
\usepackage{dsfont}
\usepackage{multirow}
\usepackage{float}
\usepackage{tikz}
\usetikzlibrary{arrows.meta, calc, shapes.misc, shapes.geometric}

\usepackage[margin=1.2in]{geometry}
\usepackage{natbib}
\usepackage{hyperref}
\usepackage[capitalise]{cleveref}

\hypersetup{colorlinks,citecolor=blue,urlcolor=blue,linkcolor=blue}

\usepackage{nikos_tex}

\newcommand{\apptocentry}[2]{%
  \par\noindent
  \makebox[2em][l]{\textbf{\ref{#1}.}}%
  \textbf{\hyperref[#1]{#2}}%
  \nobreak\leaders\hbox to .7em{\hss.\hss}\hfill\nobreak
  \mbox{Page~\pageref{#1}}%
  \par%
}

\newcommand{\rb}{\mathrm{rb}}
\newcommand{\db}{\mathrm{db}}

\newcommand{\biased}{\hat{\theta}^\mathrm{b}}
\newcommand{\rebiased}{\hat{\theta}^\rb}
\newcommand{\debiased}{\hat{\theta}^\mathrm{db}}
\newcommand{\unbiased}{\hat{\theta}^{\mathrm{ub}}}

\newcommand{\ML}{\text{ML}}
\newcommand{\PPI}{\text{PPI}}
\newcommand{\hthetaML}{\hat\theta_i^{\ML}}
\newcommand{\hthetaPPI}{\hat\theta_i^{\PPI}}
\newcommand{\PT}{\text{PT}}
\newcommand{\hthetaPT}{\hat\theta_i^{\PT}}

\theoremstyle{definition}
\newtheorem{proposition}{Proposition}
\newtheorem{assumption}[proposition]{Assumption}

\newtheorem{lemm}[proposition]{Lemma}
\newtheorem{theorem}[proposition]{Theorem}

\title{Empirical Bayes Rebiasing}

\author{
\begin{tabular}{ll}
\begin{tabular}{l}
Wanyi Ling \\
\texttt{wanyiling@uchicago.edu}
\end{tabular}
&
\begin{tabular}{l}
Sida Li \\
\texttt{listar2000@uchicago.edu}
\end{tabular}
\\[1.2em]
\begin{tabular}{l}
Junming Guan\\
\texttt{junmingguan@uchicago.edu}
\end{tabular}
&
\begin{tabular}{l}
Nikolaos Ignatiadis\\
\texttt{ignat@uchicago.edu}
\end{tabular}
\end{tabular}
\vspace{1em}
}

\date{Draft Manuscript, May 2026}

\begin{document}

\maketitle

\begin{abstract}
We study methods for simultaneous analysis of many noisy and biased estimates, each paired with an even noisier estimate of its own bias. The analyst's goal is to construct short calibrated intervals for each parameter. The standard debiasing approach, which subtracts the bias estimate from each biased estimate, inflates variance and yields long intervals. In this paper, we propose an empirical Bayes rebiasing strategy that starts from the fully debiased estimates and learns from data how much bias to reintroduce by estimating the unknown bias distribution. We provide convergence rates for the coverage of our intervals when the bias distribution is estimated using nonparametric maximum likelihood. Furthermore, we demonstrate substantial precision gains in prediction-powered inference, including pairwise LLM win-rate evaluations, as well as for inference of direct genetic effects in family-based GWAS.
\end{abstract}

\section{Introduction}
\label{sec:introduction}
One of the simplest, yet most widely used statistical constructions is the Wald interval. Given a parameter $\theta$ and an estimator \smash{$\hat{\theta}$}, we report the interval \smash{$\mathcal{I}=\hat{\theta} \pm 1.96\SEInline{\hat{\theta}}$}, where \smash{$\hVarInline{\cdot}$} denotes the estimated variance. This interval satisfies $\PP{\theta \in \mathcal{I}} \approx 95\%$ under three conditions:
\begin{equation}
\text{(i) }\;\, \hat{\theta} \approx \mathrm{N}\big(\EEInline[\theta]{\hat{\theta}},\, \VarInline[\theta]{\hat{\theta}} \big),\quad \text{(ii) }\;\, \frac{\hVarInline{\hat{\theta}}}{\VarInline[\theta]{\hat{\theta}}} \approx 1,\quad \text{(iii) }\; \,\frac{|\BiasInline[\theta]{\hat{\theta}}|^2}{\VarInline[\theta]{\hat{\theta}}} \approx 0,
\label{eq:wald_assumptions}
\end{equation}
where \smash{$\BiasInline[\theta]{\hat{\theta}} = \EEInline[\theta]{\hat{\theta}} - \theta$}. In this paper, we are concerned with (iii); namely the requirement that \smash{$\hat{\theta}$} be (nearly) unbiased for $\theta$, and we propose methods that relax this condition. The approximate normality in (i) can be justified by the central limit theorem, and the variance in (ii) can be estimated accurately, e.g., via the bootstrap.

Our starting point is that the analyst may prefer a potentially \underline{b}iased estimator \smash{$\biased$} of $\theta$ (say, an ML-based estimator computed on a large unlabeled corpus) because it has substantially lower variance than unbiased alternatives, and because the analyst has reason to believe the bias to be small (Fig.~\ref{fig:intro-progression}a). In terms of overall mean squared error, \smash{$\biased$} may incur a favorable bias-variance tradeoff. At the same time, the analyst may hesitate to report Wald intervals centered at \smash{$\biased$}, since the coverage guarantee breaks down when (iii) is violated. A standard remedy is to debias: with access to a noisy unbiased estimator \smash{$\hat{b}$} of the bias \smash{$b = \BiasInline[\theta]{\biased}$}, one forms the \underline{d}e\underline{b}iased estimator \smash{$\debiased = \biased - \hat{b}$}, which is unbiased for $\theta$ (Fig.~\ref{fig:intro-progression}b). The debiased estimator is protected against arbitrarily large bias in \smash{$\biased$}: it is unbiased no matter how badly biased \smash{$\biased$} happens to be. Such robustness is well-motivated, since the naïve alternative of trusting \smash{$\biased$} fails to deliver coverage unless the bias is negligible, and without prior knowledge about the bias, it is prudent to be robust against bias of any magnitude. The cost is that the variance of \smash{$\debiased$} is generally larger than that of \smash{$\biased$} (by \smash{$\VarInline[b]{\hat{b}}$} when \smash{$\biased$} and \smash{$\hat{b}$} are uncorrelated), undoing the efficiency gains that motivated \smash{$\biased$} in the first place.

\begin{figure}
    \centering
    \includegraphics[width=\linewidth]{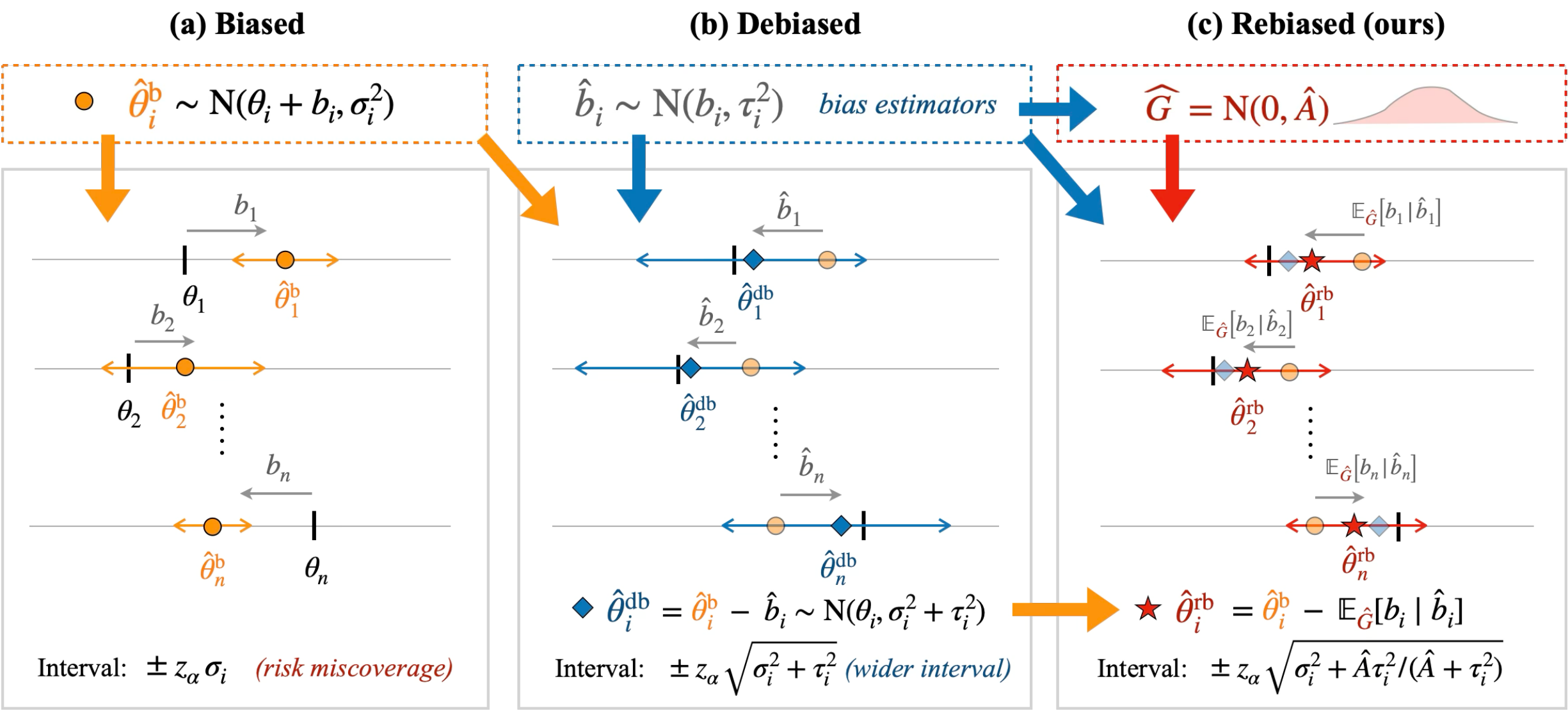}
    \vspace{-1em}
    \caption{\textbf{From biased to debiased to rebiased.}
    Vertical \textbf{black} ticks mark the true $\theta_i$.
    \textbf{(a)} Across $n$ parallel tasks, each biased estimator
    \smash{$\biased_i$} is offset from its target $\theta_i$ by a draw
    $b_i\sim G$; the small variance $\sigma_i^2$ makes \smash{$\biased_i$}
    precise, but the resulting interval may undercover.
    \textbf{(b)} Subtracting an unbiased but noisy estimator \smash{$\hat b_i$} centers
    the estimator at $\theta_i$ \emph{on average} but each individual
    \smash{$\debiased_i$} is itself noisy, with a wider confidence interval.
    \textbf{(c)} Empirical Bayes rebiasing estimates the task-level bias distribution \smash{$\widehat G$} from \smash{$\hat b_1,\ldots,\hat b_n$} (shown here as $\widehat G=\mathrm{N}(0,\widehat A)$ for illustration) and uses its posterior mean to partially undo the bias correction and yield a substantially shorter calibrated interval.}
    \label{fig:intro-progression}
    \vspace{-1em}
\end{figure}
In many settings, however, the analyst faces not a single inference task but many. We consider parameters $\theta_1, \ldots, \theta_n$, where $i$ indexes the task; for instance, the $\theta_i$ may be win-rates for $n$ LLM pairs, or effect sizes for $n$ genetic variants. Each task comes with its own pair \smash{$(\biased_i, \hat{b}_i)$}. Across these tasks, the bias estimators \smash{$\hat{b}_1, \ldots, \hat{b}_n$} carry rich indirect information about the typical magnitude of the bias. The central premise of this paper is that bias should not always be treated as either zero, as in the naïve approach of reporting \smash{$\biased_i$} directly, or as arbitrary, as in full debiasing. Instead, we propose to model the bias, and we develop new calibrated inference methods that use empirical Bayes (EB) ideas~\citep{robbins1956empirical, efron2010largescale,
stephens2017false} to learn how much bias correction to apply. Concretely, we place a distribution on task-level biases, \smash{$b_i \sim G$} for an unknown $G$, estimate $G$ as \smash{$\widehat{G}$} via empirical Bayes, and use \smash{$\widehat{G}$} to partially undo the full bias correction when the evidence supports doing so. We refer to this approach as empirical Bayes rebiasing (Fig.~\ref{fig:intro-progression}c).

To preview the methodology, suppose for now that \smash{$G = \mathrm{N}(0, A)$}, where $A \geq 0$ is the variance of the bias, and write \smash{$\sigma_i^2 = \VarInline[\theta_i]{\biased_i}$} and \smash{$\tau_i^2 = \VarInline[b_i]{\hat{b}_i}$}. We model \smash{$\hat{b}_i \mid b_i \sim \mathrm{N}(b_i, \tau_i^2)$} with $\tau_i^2$ known, in the spirit of (i) and (ii) in \eqref{eq:wald_assumptions}. Given access to the unbiased estimators of bias \smash{$\hat{b}_1, \ldots, \hat{b}_n$}, we learn $A$ via marginal maximum likelihood: if biases are small across tasks, then \smash{$\widehat{A} \approx 0$}, while if they are large and heterogeneous, then \smash{$\widehat{A} \gg 0$}. We then combine the prior information \smash{$b_i \sim \mathrm{N}(0, \widehat{A})$} with the unbiased estimator \smash{$\hat{b}_i$} via Bayes' rule, yielding the posterior \smash{$b_i \mid \hat{b}_i \sim \mathrm{N}(\widehat{A}/\{\widehat{A}+\tau_i^2\}\hat{b}_i,\, \widehat{A}\tau_i^2/\{\widehat{A}+\tau_i^2\})$}, and subtract the posterior mean of the bias rather than \smash{$\hat{b}_i$} itself. The resulting \underline{r}e\underline{b}iased estimator takes three equivalent forms,
\begin{equation}
\rebiased_i \;=\; \biased_i - \frac{\widehat{A}}{\widehat{A} + \tau_i^2}\hat{b}_i \;=\; \debiased_i + \frac{\tau_i^2}{\widehat{A} + \tau_i^2}\hat{b}_i \;=\; \frac{\widehat{A}}{\widehat{A} + \tau_i^2}\debiased_i + \frac{\tau_i^2}{\widehat{A} + \tau_i^2}\biased_i,
\label{eq:rebiased_point_estimator}
\end{equation}
and satisfies \smash{$\rebiased_i - \theta_i \sim \mathrm{N}(0,\, \sigma_i^2 + \widehat{A}\tau_i^2/\{\widehat{A}+\tau_i^2\})$} (with $b_i$ integrated out), yielding the rebiased interval $\mathcal{I}_{i}^{\rb} = \rebiased_i \pm z_{\alpha}(\sigma_i^2 + \widehat{A}\tau_i^2/\{\widehat{A}+\tau_i^2\})^{1/2}$, where $z_{\alpha}$ is the $(1-\alpha/2)$-standard normal quantile. The estimator interpolates between the two endpoints the analyst was previously stuck choosing between: \smash{$\widehat{A} = 0$} recovers \smash{$\biased_i$}, \smash{$\widehat{A} = \infty$} recovers \smash{$\debiased_i$}, and intermediate \smash{$\widehat{A}$} partially undoes the bias correction. As Fig.~\ref{fig:intro-progression} illustrates, $\mathcal{I}_i^{\rb}$ is shorter than the Wald interval of the debiased estimator, with the gap governed by \smash{$\widehat{A}$}. The interval \smash{$\mathcal{I}_i^{\rb}$} synthesizes three sources of information: the biased estimator \smash{$\biased_i$}, the bias correction \smash{$\hat{b}_i$}, and the estimated task-level distribution of the bias \smash{$\widehat{G}$}.

\section{Background and related work}
\label{sec:background}
A common modeling assumption is that biases across tasks are exchangeable draws from a distribution, often a normal $b_i \sim \mathrm{N}(0, A)$ with $A \geq 0$ (or uncentered, $b_i \sim \mathrm{N}(\mu, A)$). Such a normal component can capture, e.g., biases from unobserved confounding in observational studies, and $A$ can be estimated from negative control studies. For other studies, one then reports the inflated interval \smash{$\biased_i \pm z_{\alpha}(\sigma_i^2 + \widehat{A})^{1/2}$}, where \smash{$\sigma_i^2 = \VarInline[\theta_i]{\biased_i}$}. This construction~\citep{schuemie2014interpreting, schuemie2016robust} is routinely used before reporting OHDSI (Observational Health Data Sciences and Informatics) results~\citep{hripcsak2015observational} and has recently been advocated in economics~\citep{bernard2024howmuch}. As~\citet{efron2022discussion} notes, treating bias as a further random component is akin to the physicist's ``propagation of uncertainty''; 
we observe that it is also closely related to empirical null
modeling~\citep{efron2004largescale}. We likewise model bias as drawn from a distribution $G$, but synthesize this distributional information with the direct per-task estimators \smash{$\hat{b}_i$} to obtain shorter intervals, and we allow $G$ to be estimated nonparametrically. The normality assumption on biases also appears in related but distinct settings: \citet{wu2026illusion} use it for meta-analysis combining several biased observational estimators with a single unbiased experimental estimators, and \citet{zhao2020statistical} uses it to model bias from invalid genetic instruments (systematic pleiotropy) in Mendelian randomization.

Several authors have proposed taking task-wise convex combinations of
unbiased and biased estimators using empirical Bayes
ideas~\citep{green1991jamesstein, green2005improved, chen2015data,
ignatiadis2019covariatepowered, rosenman2023combining,
li2025predictionpowered}, which is precisely the form of the rebiased
estimator in~\eqref{eq:rebiased_point_estimator} (last equality).
Building on the seminal work of~\citet{james1961estimation}, these papers
seek estimators with smaller frequentist mean squared error, but focus
on point estimation rather than interval coverage. Such methods admit an
empirical Bayes interpretation under a normal prior for $b_i$ and a flat
improper prior for $\theta_i$~\citep{green1991jamesstein}. A fully
hierarchical approach with proper priors on both $b_i$ and $\theta_i$ is
taken in~\citet{gelman2021slamming, rosenman2023empirical}. We instead
adopt a partially Bayes analysis~\citep{brown1965secondarily, cox1975note}
that places a prior only on $b_i$, while
treating $\theta_i$ as fixed; in particular, we do not assume
exchangeability of the $\theta_i$ across tasks. Our framing lets us
state coverage guarantees in the empirical-Bayes-coverage
tradition~\citep{morris1983parametric, rubin1984bayesianly}, with formal
results in~\S\ref{sec:theory}.

A different line of work assumes a deterministic upper bound on the bias,
$|b_i| \leq \Delta_i$, and combines~\smash{$\biased_i$}
and~\smash{$\hat{b}_i$} so as to (nearly) minimize the worst-case risk
over $\{|b_i| \leq \Delta_i\}$~\citep{donoho1994statistical,
armstrong2018optimal, lin2026introducing}. Adaptation
results~\citep{cai2004adaptation, armstrong2021sensitivity} show that
$\Delta_i$ cannot be learned adaptively in a way that yields shorter
intervals. These impossibility results concern a single task in
isolation, whereas our setting offers a different opportunity: that of
having many related tasks and the empirical Bayes
paradigm~\citep{robbins1956empirical, efron2010largescale,
stephens2017false}. By replacing the deterministic constraint
$|b_i| \leq \Delta_i$ with a distributional assumption $b_i \sim G$, we
can learn $G$ from the data across tasks.

\section{Statistical setting and proposed methods}
\label{sec:setting}
Our statistical model for the $i$-th task is as follows,
\begin{align}
\label{eq:modeling}
    \begin{pmatrix}
    \biased_i \\
    \hat b_i
    \end{pmatrix}  \,\bigg |\, \begin{pmatrix}
    \theta_i  \\
    b_i
    \end{pmatrix} \,\sim\, 
    \mathrm{N}\left\{
    \begin{pmatrix}
    \theta_i + b_i\\
    b_i
    \end{pmatrix},\,
    \begin{pmatrix}
    \sigma_i^2
    &
    \rho_i \sigma_i \tau_i
    \\
    \rho_i \sigma_i \tau_i
    &
    \tau_i^2
    \end{pmatrix}
    \right\},\qquad b_i \sim G,\qquad \theta_i \,\text{ is fixed}.
\end{align}
To focus on the issue of the bias, we treat normality as holding exactly and assume that $\rho_i \in (-1,1)$ and $\sigma_i^2, \tau_i^2>0$ are known (these pertain to (i) and (ii) in~\eqref{eq:wald_assumptions}). 
Meanwhile, $\theta_i$, $b_i$, and $G$ are unknown. We posit that all biases $b_i$ are drawn from the same distribution $G$, that is, we assume bias is exchangeable across tasks and that $G$ does not depend on $\theta_i$. We treat the $\theta_i$ as fixed (as in a frequentist analysis). An analysis in which we treat the primary parameters ( $\theta_i$) as fixed, while the nuisance parameters ($b_i$) as drawn from a prior is called a partially Bayes analysis~\citep{cox1975note}, which dates back to a proposal by John Tukey~\citep{brown1965secondarily}; also see~\citet{ignatiadis2025empirical}.

\paragraph{Oracle partially Bayes rebiasing.} We start by explaining our approach to inference when the bias distribution $G$ in~\eqref{eq:modeling} is known. (Later, we will explain how to estimate $G$.) The goal is to combine the distribution of the biases ($G$) with  \smash{$\hat b_i$} via (a partial version of) Bayes' rule. 
We state our constructions in a general way that accommodates \emph{any} fixed prior $G$ (when $G=\mathrm{N}(\mu,A)$ expressions simplify as previewed in \S\ref{fig:intro-progression} and further explained in Appendix~\ref{app:normal}). To streamline exposition, below we assume $\rho_i = 0$; the general case (relevant for our applications) is treated in
Appendix~\ref{app:generalization}.

Although point estimation is not our focus, for intuition, we first state a generalization of the estimator~\smash{$\rebiased_i$} in~\eqref{eq:rebiased_point_estimator}. Rather than subtracting
\smash{$\hat b_i$} from \smash{$\biased_i$}, we subtract the posterior mean of the bias,\footnote{
A similar estimator for general $G$ appears in~\citet[Proposition 1]{kwon2024empirical} based on a flat prior for $\theta_i$.
}
\begin{equation}
\rebiased_i=\biased_i-\mathbb E_G[b_i\mid \hat b_i]=\hat\theta_i^{\mathrm{db}} +\left\{\hat b_i-\mathbb E_G[b_i\mid \hat b_i]\right\}.
\label{eq:postmean_general}
\end{equation}
We also interpret this procedure via the right-hand side of~\eqref{eq:postmean_general}
as rebiasing the debiased estimator \smash{$\debiased_i$}. To be more explicit about the bias-only posterior, for any measurable set $A \subset \RR$, we write
$$
\PPInline[G]{b_i \in A \cond \hat{b}_i} = \frac{ \int_{A} \varphi(\hat{b}_i - b;\, \tau_i^2) G(\dd b)}{\int \varphi(\hat{b}_i - b;\, \tau_i^2) G(\dd b)},\;\;\; \EEInline[G]{b_i \cond \hat{b}_i} = \frac{ \int b\,\varphi(\hat{b}_i - b;\, \tau_i^2) G(\dd b)}{\int \varphi(\hat{b}_i - b;\, \tau_i^2) G(\dd b)},
$$
where $\varphi(\cdot; \tau_i^2)$ is the density of the $\mathrm{N}(0, \tau_i^2)$ distribution. For our purposes, the  relevant object is the conditional density of \smash{$\debiased_i - \theta_i$} conditional on \smash{$\hat{b}_i$} (that is, integrating over the posterior \smash{$b_i \mid \hat{b}_i)$},
\begin{equation}
\label{eq:db_cond_density}
    f_{G,i}(t \mid \hat{b}_i) \;=\; \frac{\int \varphi(t - b + \hat{b}_i;\, \sigma_i^2)\, \varphi(\hat{b}_i - b;\, \tau_i^2)\, G(\dd b)}{\int \varphi(\hat{b}_i - b;\, \tau_i^2)\, G(\dd b)}.
\end{equation}
We also write \smash{$F_{G,i}(\cdot \mid \hat{b}_i)$} for the distribution function with density \smash{$f_{G,i}(\cdot \mid \hat{b}_i)$} and \smash{$q_{G,i,\alpha}(\hat{b}_i)$} for its $\alpha$-quantile. The oracle rebiased equal-tailed $(1-\alpha)$-interval is then defined as
\begin{equation}
\label{eq:oracle_ci}
    \mathcal{I}_{G,i}^{\rb}(1-\alpha)=\left[\hat\theta_i^{\mathrm{db}}-q_{G,i,1-\alpha/2}(\hat b_i),\,\hat\theta_i^{\mathrm{db}}-q_{G,i,\alpha/2}(\hat b_i) \right].
\end{equation}
Beyond intervals, we  also consider the corresponding testing
problem. For testing $H_{0i}: \theta_i=\theta_{i0}$ for pre-specified $\theta_{i0}$, we define the oracle rebiased p-value as,
\begin{equation}
\label{eq:def_pvalue}
P_{G,i}^{\rb}=P_{G,\theta_{i0}}^{(i)}(\hat\theta_i^{\mathrm{db}},\hat b_i),\,\text{ with }\,  P_{G,\theta_{i0}}^{(i)}(z,l)
    =
    2\min\!\left\{
    F_{G,i}(z-\theta_{i0}\mid l),
    1-F_{G,i}(z-\theta_{i0}\mid l)
    \right\}.
\end{equation}
Our proposed oracle rebiased interval in~\eqref{eq:oracle_ci} can be derived by test inversion of the oracle p-values, 
\begin{equation}
\label{eq:CI_inversion}
    \mathcal{I}^{\rb}_{G,i}(1-\alpha)=\{\theta_{i0}\in\mathbb{R}\,:\,\alpha/2 \leq F_{G,i}(\hat{\theta}_i^{\db}-\theta_{i0}\mid\hat{b}_i)\leq1-\alpha/2\}.
\end{equation}

\paragraph{Empirical Bayes rebiasing.}  We next turn to the estimation of $G$ in~\eqref{eq:modeling}. As mentioned in \S\ref{sec:background}, a common modeling choice for the bias distribution is that $G=\mathrm{N}(\mu, A)$. We can estimate the unknown $\mu$ and $A$ using maximum marginal likelihood; see Appendix~\ref{app:computation}. We emphasize, that unlike normality statements for \smash{$\biased_i$} and \smash{$\hat{b}_i$} in~\eqref{eq:modeling}, normality of $b_i$ does not follow from the central limit theorem. For this reason, when normal $G$ is imposed, it is prudent to check the fit of the model. In what follows we pursue an alternative avenue: estimating $G$ in a fully nonparametric way, without imposing a parametric form.

We propose to estimate $G$ based on \smash{$(\hat{b}_1,...,\hat{b}_n)$} by maximizing the marginal log-likelihood,
\begin{equation}
\label{eq:npmle_G}
    \widehat G \in \argmax_{G^{\prime}} \cb{\frac{1}{n}\sum_{i=1}^n \log\p{ \int \varphi(\hat{b}_i - b;\, \tau_i^2)\, G'\!(\dd b)}},
\end{equation}
over \emph{all} possible priors $G'$. This is the nonparametric maximum likelihood estimator (NPMLE) of \cite{robbins1950generalization, keifer_npmle}. The NPMLE is tuning-free and has strong regret guarantees for empirical Bayes problems~\citep{jiang2009general}. Although the feasible set is infinite-dimensional, the optimizer is a discrete distribution with at most $n$ atoms, and typically only $O(\log n)$, supported in \smash{$[\min_i\{\hat b_i\},\, \max_i \{\hat b_i\}]$} \citep{lindsay1995mixture, polyanskiy2020selfregularizing}. The problem can be discretized without sacrificing statistical guarantees \citep{dicker2016highdimensional, soloff2024multivariate} and recast as a conic program \citep{koenker2014convex, koenker2017rebayes}, which we solve with MOSEK \citep{aps2024mosek}. Further details appear in Appendix~\ref{app:computation}.

Given our estimate of \smash{$\widehat G$} (which could be the parametric normal prior or the NPMLE), we then compute the rebiased intervals using the plug-in principle, that is, pretending that \smash{$\widehat{G}$} is the true prior $G$.
The rebiased $(1-\alpha)$-interval is constructed as
\begin{equation}
\label{eq:CI_npmle}
\mathcal{I}^{\rb}_{\widehat G,i}(1-\alpha)
    =
    \left[
    \hat\theta_i^{\mathrm{db}}-q_{\widehat G,i,1-\alpha/2}(\hat{b}_i),
    \,
    \hat\theta_i^{\mathrm{db}}- q_{\widehat G,i,\alpha/2}(\hat{b}_i)
    \right].
\end{equation}
We can also estimate the oracle rebiased p-value \smash{$P^{\rb}_{G,i}$} by the empirical Bayes rebiased p-value \smash{$\widehat{P}^{\rb}_{\widehat{G},i}$}, where the estimators are computed using \eqref{eq:def_pvalue} by replacing $G$ with \smash{$\widehat G$}. 
All of these computations can be carried out readily; e.g., for the NPMLE we can leverage the fact that it yields a discrete prior and so integrals reduce to sums.

\section{Applications of empirical Bayes rebiasing}
\label{sec:applications}

Our framework can be applied in two ways, depending on the auxiliary estimator paired with~\smash{$\biased_i$}. In the first, the researcher has the biased estimator~\smash{$\biased_i$} and a direct bias-correction estimator~\smash{$\hat{b}_i$}, as in the setup so far. In the second, the researcher has~\smash{$\biased_i$} together with an \underline{u}n\underline{b}iased estimator~\smash{$\unbiased_i$}, in which case the framework applies with~\smash{$\hat{b}_i := \biased_i - \unbiased_i$} (so that~\smash{$\debiased_i = \unbiased_i$}). The two are formally equivalent but arise differently in practice: in some applications a bias estimate is computed directly, while in others it is only implicit in the difference between two estimators. Here, we consider two prototypical applications: prediction-powered inference (PPI) and family-based genome-wide association studies (GWAS). The framework is broadly applicable beyond these examples; for instance, it covers combining observational and experimental causal estimates~\citep{rosenman2023combining, rosenman2023empirical} and combining exposed-only with difference estimates in sham-controlled experiments~\citep{gelman2021slamming}.

\paragraph{Prediction-powered inference (PPI).} PPI~\citep{angelopoulos2023predictionpowered, angelopoulos2024ppi} is a framework for constructing more accurate estimates by incorporating predictions from a black-box ML model $h$. For task $i$, we observe a small labeled sample \smash{$\{(X_{ij}, Y_{ij})\}_{j=1}^{m_i}$} with $Y_{ij} \in \RR$ and a larger unlabeled sample \smash{$\{\tilde X_{ij}\}_{j=1}^{M_i}$} (with $m_i \ll M_i$ and \smash{$\tilde X_{ij} \overset{d}{=} X_{ij}$}); the goal is to estimate $\theta_i = \EE{Y_{ij}}$.\footnote{Our framework extends to general estimands beyond the mean via the predict-then-debias strategy of~\citet{kluger2025predictionpowered}.} The classical mean $\bar Y_i = m_i^{-1}\sum_j Y_{ij}$ is unbiased but has variance $\propto 1/m_i$, while the ML-only estimate $\hthetaML = M_i^{-1}\sum_j h(\tilde X_{ij})$ has small variance ($\propto 1/M_i$) but bias $b_i = \EE[\theta_i]{h(X_{ij})} - \theta_i$ that depends on the quality of $h$. The PPI estimator bridges the two by debiasing \smash{$\hthetaML$} with the labeled sample:
\begin{equation}
\hthetaPPI \,=\, \hthetaML - \hat b_i, \quad \hat b_i \,=\, m_i^{-1}\textstyle\sum_{j=1}^{m_i}\{h(X_{ij}) - Y_{ij}\}.
\label{eq:ppi}
\end{equation}
Setting \smash{$\biased_i := \hthetaML$} as the biased estimator, the PPI estimator is exactly the debiased estimator $\debiased_i$ in our framework. The vanilla PPI is asymptotically dominated by the PPI++/PT (power tuned) estimator of~\citet{angelopoulos2024ppi} that takes the form \smash{$\hthetaPT := (1-\lambda_i^*)\bar Y_i + \lambda_i^*\hthetaPPI$} for a choice of \smash{$\lambda_i^*$}. In this case, the vantage point to apply our rebiasing is to treat \smash{$\hthetaPT$} as the unbiased estimator and \smash{$\hthetaML$} as the biased estimator. In our experiments, we only focus on PT. We provide more details in Appendix~\ref{subsec:ppi_app_details} including expressions and plug-in estimates for $\rho_i, \sigma_i, \tau_i$ in~\eqref{eq:modeling} that we use in practice.

PPI/PT benefit from the ML predictor without requiring it to be accurate, but they consider only unbiased estimators of $\theta_i$. Our rebiasing goes further by learning the bias distribution across tasks from the observed bias estimates. When the data reveal that biases are small, i.e., $h$ is approximately calibrated (which is a milder requirement than accurately predicting $Y_{ij}$), the rebiased estimator moves back toward \smash{$\hthetaML$}, yielding shorter intervals; when biases are large, it stays close to PPI/PT.
A related approach is the prediction-powered adaptive shrinkage (PAS) estimator of~\citet{li2025predictionpowered} that shares information across tasks to reduce mean squared error of PT; our rebiasing pools across tasks to deliver calibrated inference instead.

\paragraph{Inference of direct genetic effects in family-based GWAS.}
A standard genome-wide association study (GWAS) regresses a trait on each SNP (single nucleotide polymorphism) across the genome. The regression coefficient, denoted $\smash{\biased_i}$ in our framework, and often termed the population-effect estimate in the literature, is biased in the sense that it captures not only the direct genetic effect $\theta_i$, but also a bias component $b_i$, which may reflect contributions from population stratification, indirect genetic effects, and assortative mating \citep{Kong2018-ld}. The variance of $\smash{\biased_i}$ typically scales as $1/m_i$, where $m_i$ is the regression sample size for the $i$th SNP. When parental genotypes are observed or unbiasedly imputed~\citep{Young2022-tz, Guan2025-jq}, family-based GWAS includes them as covariates in the SNP-level regression; the resulting offspring-genotype coefficient \smash{$\unbiased_i$} is unbiased for $\theta_i$, but has the larger variance \smash{$\propto 1/\{m_i(1-r_i^2)\}$}, where $r_i$ is the offspring-parental genotype correlation at SNP $i$. The pair \smash{$(\biased_i, \unbiased_i)$} fits our framework directly, with bias estimator \smash{$\hat b_i = \biased_i - \unbiased_i$}, which coincides with the parental-genotype regression coefficient.
Summary statistics provided for family-based GWAS allow us to also recover $\rho_i, \sigma_i^2, \tau_i^2$ in~\eqref{eq:modeling} (see Appendix~\ref{subsec:gwas_app_details} for details) and so we can apply our empirical Bayes rebiasing strategy that 
learns the bias distribution across SNPs.

\section{Theoretical results}
\label{sec:theory}

We work under the partially Bayes model~\eqref{eq:modeling}, with $\theta_i$ a fixed parameter, \smash{$b_i \sim G$}, and \smash{$(\hat{\theta}_i^{\db}, \hat{b}_i)$} jointly normal given $(\theta_i, b_i)$; all probabilities and expectations below are taken jointly over the $n$ tasks under this model, with $\theta_1, \ldots, \theta_n$ fixed and we only make the dependence on $G$ explicit (via a subscript). Our goal is to show coverage guarantees in this setting, in the spirit of the empirical Bayes coverage tradition~\citep{morris1983parametric}, except that the classical version also averages over the parameters of interest $\theta_i$.\footnote{
One could formally obtain our intervals by imposing a flat prior on $\theta_i$ (cf.~\S\ref{sec:background}). However, such an
improper prior does not define a sampling distribution,
so the frequency interpretation of any coverage statement is unclear in
that formulation.
}
Throughout, \smash{$\widehat{G}$} denotes the NPMLE in~\eqref{eq:npmle_G}; 
we focus on the nonparametric case, which both reveals the statistical 
structure of the problem and the cost of fully nonparametric modeling.

Let \smash{$\tilde{\sigma}^2_i=\VarInline{\hat{\theta}_i^{\db}}=\sigma_i^2+\tau_i^2-2\rho_i\sigma_i\tau_i$}, \smash{$\gamma_i = \CorrInline{\hat{\theta}_i^{\mathrm{db}},\hat b_i}= (\rho_i\sigma_i-\tau_i)/\tilde{\sigma}_i$}, and 
$\check\gamma=\max_i|\gamma_i|$. Moreover, for $\Gamma>0$ we write $\mathcal{G}_{\Gamma} = \{G'\,:\,\EE[G']{\exp\{\lambda (b - \EE[G']{b})\}} \le \exp(\Gamma\lambda^2/2) \text{ for all }\lambda\in\RR\}$ for the class of $\Gamma$-sub-Gaussian distributions. Our main assumption in this section is follows.
\begin{assumption}
\label{assm:subgaussian}
    We assume that the true bias distribution satisfies $G \in \mathcal{G}_{\Gamma}$ for some $\Gamma>0$ and that the triples \smash{$(b_i, \biased_i, \hat{b}_i)$} for $i=1,\ldots,n$ are jointly independent. Finally, we assume that for all $i$, $ 0<\underline{\gamma}\leq|\gamma_i| \leq\bar{\gamma}<1$ and $0<\underline{\sigma}^2\leq\tilde{\sigma}_i^2,\tau_i^2\leq \bar{\sigma}^2 <\infty$ for some constants $\underline{\gamma},\bar{\gamma},\underline{\sigma}, \bar{\sigma}>0$.
\end{assumption}
The next result establishes the rate at which we estimate the oracle rebiasing p-values in~\eqref{eq:def_pvalue}.
\begin{theorem}
\label{thm:covergence_pvalue}
Under Assumption~\ref{assm:subgaussian}, there exists 
$C=C(\Gamma,\underline{\gamma},\bar{\gamma},\underline{\sigma},\bar{\sigma})$ 
such that, 
$$
    \frac{1}{n}\sum_{i=1}^n\EE[G]{\left|P_{G,\theta_{i0}}^{(i)}(\hat{\theta}^{\mathrm{db}}_i,\hat{b}_i)-P_{\widehat{G},\theta_{i0}}^{(i)}(\hat{\theta}^{\mathrm{db}}_i,\hat{b}_i)\right|} \leq C\frac{(\log{n})^{3/2}}{n^{(1-\check{\gamma}^2)/2}}\quad\text{for all } n\in\mathbb{N}_{\geq2}.
$$
\end{theorem}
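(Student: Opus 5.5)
Under the null-free partially Bayes model, $\hat\theta_i^{\db}-\theta_i$ and $\hat b_i$ are jointly normal given $b_i$. Write $Z_i=\hat\theta_i^{\db}-\theta_i$, so that $Z_i = \gamma_i\tilde\sigma_i(\hat b_i-b_i)/\tau_i+\sqrt{1-\gamma_i^2}\,\tilde\sigma_i\varepsilon_i$ with $\varepsilon_i$ standard normal and independent of everything else. Integrating over $b_i\mid\hat b_i$ then shows that $F_{G,i}(t\mid l)$ is a location mixture of $\mathrm{N}(\cdot,(1-\gamma_i^2)\tilde\sigma_i^2)$ distributions. The mixing measure is the pushforward of the posterior $G(\cdot\mid l)$ under $b\mapsto \gamma_i\tilde\sigma_i(l-b)/\tau_i$.

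Since $|2\min\{x,1-x\}-2\min\{y,1-y\}|\le 2|x-y|$, it suffices to bound $\frac1n\sum_i\mathbb E|F_{G,i}(z\mid\hat b_i)-F_{\widehat G,i}(z\mid \hat b_i)|$ at $z=\hat\theta_i^{\db}-\theta_{i0}$. The random point $z$ is correlated with $\hat b_i$ and with $\widehat G$. I will therefore take a supremum over $z$ (Kolmogorov distance between the conditional laws), which removes the dependence on $\hat\theta_i^{\db}$. Only the pair $(\hat b_i,\widehat G)$ remains.

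The core step is to write
\[
F_{G,i}(z\mid l)=\frac{N_{G,i}(z,l)}{p_{G,i}(l)},\qquad N_{G,i}(z,l)=\int \Phi\!\Big(\tfrac{z-\gamma_i\tilde\sigma_i(l-b)/\tau_i}{\sqrt{1-\gamma_i^2}\tilde\sigma_i}\Big)\varphi(l-b;\tau_i^2)\,G(\dd b),
\]
where $p_{G,i}$ is the marginal density of $\hat b_i$. The difference then splits as
\[
\frac{|N_{G}-N_{\widehat G}|}{p_{\widehat G}}+\frac{F_G\,|p_G-p_{\widehat G}|}{p_{\widehat G}}.
\]
The NPMLE theory (Jiang--Zhang / Soloff et al.\ style, extended to heteroscedastic $\tau_i$ and sub-Gaussian $G$) gives the average Hellinger bound
\[
\frac1n\sum_i h^2(p_{\widehat G,i},p_{G,i})\lesssim (\log n)^2/n
\]
with high probability. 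The remaining work is to convert marginal-density closeness into closeness of the numerator $N$, and that conversion is where the exponent $1-\gamma^2$ appears.

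To handle the numerator, view $N_{G,i}(z,\cdot)$ as the convolution of $G$ with the kernel $b\mapsto \Phi(\cdots)\varphi(l-b;\tau_i^2)$. Its Fourier transform, relative to that of $\varphi(\cdot;\tau_i^2)$, amounts to a deconvolution with Gaussian factor $e^{\omega^2 \tau_i^2\gamma_i^2/(2(1-\gamma_i^2))}$ roughly. A cleaner route is Hermite/Taylor smoothing. Truncate the frequency at $|\omega|\le \sqrt{c\log n}/\tau_i$, where the marginal Fourier transforms agree up to $n^{-1/2}$ polylog. Bound the amplification of the transfer by $n^{\gamma_i^2/2}$, and the tail beyond the cutoff by $n^{-(1-\gamma^2)/2}$. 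Together these give $|N_G-N_{\widehat G}|\lesssim n^{-(1-\check\gamma^2)/2}(\log n)^{c}$ uniformly in $(z,l)$ over $|l|\lesssim\sqrt{\log n}$. The constraint $\bar\gamma<1$ keeps the smoothing nondegenerate, and $\underline\gamma>0$ is only needed for uniform constants.

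For the denominator, restrict to the event $p_{G,i}(\hat b_i)\ge n^{-1}$ or similar. The complement has probability $O(\sqrt{\log n}/n)$ summed, using the sub-Gaussian tail of $G$ and the standard bound $\mathbb P(p_G(\hat b)<\epsilon)\lesssim \epsilon\sqrt{\log(1/\epsilon)}$, and there the p-value difference is at most 1. On the good event, I will not divide naively but integrate against $p_{G,i}(l)\,\dd l$. Then $\mathbb E|F_G-F_{\widehat G}|=\int |N_G-F_G p_G|\,\dd l$-type quantities, and ratio issues reduce to $\int|p_G-p_{\widehat G}|$ and $\int|N_G-N_{\widehat G}|$, each over $|l|\le C\sqrt{\log n}$. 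This integral produces the extra $(\log n)^{1/2}$ factor. Finally, add the low-probability failure event of the NPMLE bound and the dependence of $\widehat G$ on $\hat b_i$ itself, which is handled by the uniformity in $l$.

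I expect the main obstacle to be the numerator transfer step, specifically obtaining the sharp exponent $(1-\check\gamma^2)/2$ together with only $(\log n)^{3/2}$ overhead.
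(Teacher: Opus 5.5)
Your skeleton is the same as the paper's. It has the $2$-Lipschitz reduction to $\sup_z|F_{G,i}(z\mid\hat b_i)-F_{\widehat G,i}(z\mid\hat b_i)|$, the ratio $F=N/D$, and the average Hellinger rate $\bar d(G,\widehat G)\lesssim\log n/\sqrt n$. It also has the Plancherel transfer: the mixing variance drops from $\tau_i^2$ to $\tau_i^2(1-\gamma_i^2)$, so cutting at $|\omega|\asymp\sqrt{\log n}/\tau_i$ costs $n^{\gamma_i^2/2}$ and leaves a tail $n^{-(1-\gamma_i^2)/2}$. (The multiplier is $e^{\tau_i^2\gamma_i^2\omega^2/2}$, not the one you wrote.) A final $\sqrt{\log n}$ comes from integrating over $|l|\le B_n$. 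But two steps fail as written.

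The first is the denominator. On $\{p_{G,i}(\hat b_i)\ge n^{-1}\}$, with $p_{G,i}=f^{(i)}_G$, you have no lower bound on $p_{\widehat G,i}(\hat b_i)/p_{G,i}(\hat b_i)$. Hellinger closeness only gives sup-norm errors of order $n^{-1/2}$ up to logarithms, so after dividing by $p_{\widehat G,i}$ you cannot integrate against $p_{G,i}$. The paper instead divides by $D_i(\widehat G_*)$ with $\widehat G_*=(G+\widehat G)/2$. This is legitimate because $\widetilde G\mapsto D_i(\widetilde G)$ is linear and $F\in[0,1]$, and it gives $D_i(\widehat G_*)\ge p_{G,i}(\hat b_i)/2$. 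Then $\mathbb E[\mathds 1(|\hat b_i|\le B_n)/p_{G,i}(\hat b_i)]=2B_n$ produces the $\sqrt{\log n}$.

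The second, more serious, is the claim that the dependence of $\widehat G$ on $\hat b_i$ is handled by uniformity in $l$. That works only if, on the good event, you have a deterministic envelope $\bar S_i\ge\sup_{z,l}|N^{(i)}_G(z,l)-N^{(i)}_{\widetilde G}(z,l)|$. It must hold for every $\widetilde G$ in the ball $\{\bar d(G,\widetilde G)<C_0\log n/\sqrt n\}$, with $\frac1n\sum_i\bar S_i$ of the target order.

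Your transfer bounds $S_i(\widetilde G)$ through the per-task distance $\mathfrak H_i=\mathfrak H(f^{(i)}_G,f^{(i)}_{\widetilde G})$. The ball only constrains $\frac1n\sum_k\mathfrak H_k^2$, which by itself lets an individual $\mathfrak H_i$ be $\sqrt n$ times the radius. So with heteroscedastic $\tau_i$, a per-task bound at rate $n^{-(1-\check\gamma^2)/2}$ does not follow. Nor can the random $S_i(\widehat G)$ be pulled out of $\mathbb E[S_i(\widehat G)\,\mathds 1(|\hat b_i|\le B_n)/p_{G,i}(\hat b_i)]$: the weights are polynomially large, and decoupling by truncating them gives only $n^{-(1-\check\gamma^2)/4}$.

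The paper closes this gap with an empirical-process step:
\begin{enumerate}
\item a proper $n^{-2}$-cover $\{G_j\}_{j\le J}$ of the ball under the sup-norm metric $d_{N,B_n}$, which controls both $f^{(i)}$ and $N^{(i)}$, with $\log J\lesssim(\log n)^2$;
\item the bounded ratios $V_{ij}(\hat b_i)\in[0,1]$, independent over $i$ for fixed $j$;
\item Hoeffding plus a union bound, at cost $\sqrt{\log J/n}$;
\item only then the deterministic bound $\mathbb E V_{ij}\le 1/n+2B_nS_{ij}$, averaged over $i$ via $\frac1n\sum_i\mathfrak H_i\le\bar d$.
\end{enumerate}

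Your envelope route can also be completed, and it would avoid the cover, but it needs an idea you did not state. If $\tau_k\le\tau_i$, then $f^{(i)}_{\widetilde G}$ is $f^{(k)}_{\widetilde G}$ convolved with $\mathrm N(0,\tau_i^2-\tau_k^2)$, so data processing gives $\mathfrak H_i\le\mathfrak H_k$. Hence $\sup_{\widetilde G}\mathfrak H_i\le(n/r_i)^{1/2}C_0\log n/\sqrt n$ with $r_i=\#\{k:\tau_k\le\tau_i\}$, and $\frac1n\sum_i(n/r_i)^{1/2}\le 2$. In the homoscedastic case this issue does not arise, and your argument with the $\widehat G_*$ denominator goes through.
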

For moderate $\check{\gamma}$ and $n$, this rate suggests little cost to 
pursuing the fully nonparametric strategy; as \smash{$\check{\gamma}\nearrow 1$}, 
however, the rate slows, and stronger (e.g., parametric) modeling assumptions on 
$G$ may be worthwhile. The value of \smash{$\check{\gamma}$} is known to the analyst;
in our three applications (\S\ref{sec:numerical}), $\check{\gamma}\in\{0.752, 0.892, 0.992\}$.
The dependence on $\check{\gamma}$ is nearly sharp in a minimax sense.
\begin{proposition}
Suppose that $\tilde{\sigma}_i^2, \tau_i^2, \gamma_i$ do not depend on $i$ and drop the subscript $i$. Let $\check{\gamma}=|\gamma| \in (0,1)$. Fix $t \in \RR$.
Then, for any $\beta > (1-\check{\gamma}^2)/2$ there exists $\Gamma>0$ and $c>0$ such that
$$
\inf_{\widehat{\psi}(t)} \sup_{G \in \mathcal{G}_\Gamma} \EE[G]{\abs{\widehat{\psi}(t) - F_{G,i}(t \mid \hat{b}_i) }} \geq c \frac{1}{n^{\beta}}, \mbox{~where $\widehat{\psi}(t)$ is any measurable function of \smash{$\hat{b}_1,\ldots,\hat{b}_n$}}.
$$

\label{prop:minimax}
\end{proposition}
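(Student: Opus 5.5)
We prove the lower bound by Le Cam's two-point method. The first step is to rewrite $F_{G}(t\mid \hat b)$ in a form that exposes the deconvolution structure. Since $(\debiased-\theta,\hat b)$ given $b$ is bivariate normal with correlation $\gamma$, we can write $\debiased-\theta = \gamma(\tilde\sigma/\tau)(\hat b-b) + \tilde\sigma\sqrt{1-\gamma^2}\,\xi$ with $\xi\sim\mathrm N(0,1)$ independent of $\hat b - b$. (Note that the conditional mean of $\debiased-\theta$ given $b$ is zero.) Hence
$$F_G(t\mid \hat b)=\frac{\int \Phi\big((t-\gamma(\tilde\sigma/\tau)(\hat b-b))/(\tilde\sigma\sqrt{1-\gamma^2})\big)\varphi(\hat b-b;\tau^2)\,G(\dd b)}{\int\varphi(\hat b-b;\tau^2)\,G(\dd b)}.$$
The numerator is a functional of $G$ smoothed by a Gaussian kernel. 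Integrating out $\hat b$ (the target is evaluated at the random $\hat b_i$, so we will lower bound the risk by restricting to a region of $\hat b$ of constant probability) essentially requires estimating $G$ convolved with a kernel that is "narrower" than the noise $\mathrm N(0,\tau^2)$. In Fourier terms, the effective kernel is a product of $\varphi(\cdot;\tau^2)$ and a Gaussian CDF with scale $\tau\sqrt{1-\gamma^2}/|\gamma|$ in the $b$ variable. The resulting ratio of characteristic functions behaves like $e^{\omega^2\tau^2\gamma^2/2}\cdot e^{-\omega^2\tau^2/2}$ relative to the data smoothing. This is exactly the structure that yields rates $n^{-(1-\gamma^2)/2}$, as in the known minimax lower bounds for estimating Gaussian-smoothed functionals and empirical Bayes posterior quantities (e.g. Pensky/Cai–Jin-type constructions).

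The second step is the construction. Take $G_0=\mathrm N(0,s^2)$ for a moderate $s$, and $G_1=G_0+\delta\, h_\omega$, where $h_\omega(b)=\cos(\omega b)\varphi(b;s^2)$ is a signed perturbation with zero mass, $\delta$ small enough that $G_1\geq0$, and $\omega\to\infty$ with $n$. The Gaussian envelope keeps both $G_0$ and $G_1$ in some $\mathcal G_\Gamma$ with $\Gamma$ depending only on $s$. We then need two bounds.

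\textbf{(a) Indistinguishability.} Convolving with $\mathrm N(0,\tau^2)$ damps the frequency-$\omega$ part by $e^{-\omega^2(\tau^2+\text{const})/2}$. The $\chi^2$ distance between the marginals of $\hat b$ is therefore $\lesssim \delta^2 e^{-\omega^2\tau^2}$, up to the envelope effect from $s$, which we make negligible by a suitable choice of $s$ or by a scaling argument. Choosing $\delta^2 e^{-\omega^2\tau^2}\asymp 1/n$ keeps the $n$-fold product distance bounded.

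\textbf{(b) Separation.} The difference $F_{G_1}(t\mid\hat b)-F_{G_0}(t\mid\hat b)$ picks up the frequency-$\omega$ component through the numerator kernel. That kernel, being a product of the likelihood and the narrower $\Phi$-smoothing, damps it only by roughly $e^{-\omega^2\tau^2(1-\gamma^2)/2}$. So the separation is $\gtrsim \delta e^{-\omega^2\tau^2(1-\gamma^2)/2}$ on a set of $\hat b$ with probability bounded below.

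Combining the two bounds, the separation is $n^{-1/2}e^{\omega^2\tau^2/2}e^{-\omega^2\tau^2(1-\gamma^2)/2}=n^{-1/2}e^{\omega^2\tau^2\gamma^2/2}$ subject to $\delta\le1$. The constraint $\delta\le 1$ means $e^{\omega^2\tau^2}\lesssim n$, which gives separation $\asymp n^{-1/2+\gamma^2/2}=n^{-(1-\gamma^2)/2}$ up to the polynomial and envelope losses absorbed into $\beta>(1-\check\gamma^2)/2$. Le Cam's lemma then gives the stated bound.

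The main obstacle is step (b): showing that the ratio-type functional truly retains the frequency-$\omega$ signal with only the $(1-\gamma^2)$-damping. Two things complicate this. The perturbation also enters the denominator, and the $\Phi$ kernel is not exactly Gaussian. To handle this, I would compute the derivative of the numerator in $\hat b$ (or in $t$), which turns $\Phi$ into a Gaussian density and makes the integrals explicit Gaussian–cosine integrals. I would then check that the denominator's perturbation is of strictly smaller order, since it is damped by the full $e^{-\omega^2\tau^2/2}$. Finally, I would pass from the derivative back to $F$ by averaging over a window of $\hat b$ values. The slack in $\beta$ is what pays for these and the other polynomial losses.
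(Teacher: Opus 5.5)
Your proposal follows essentially the same route as the paper's proof. The paper uses Le Cam's two-point method with $G_0=\mathrm N(0,s^2)$ and a cosine-modulated Gaussian perturbation of constant amplitude at frequency $\omega\asymp\sqrt{\log n}$, together with a tensorized $\chi^2$ bound on the $\hat b$-marginals; the separation comes from $\partial_t$ of the numerator being $(2\pi\tilde\sigma^2)^{-1/2}e^{-t^2/(2\tilde\sigma^2)}f_G(l-\gamma\tau t/\tilde\sigma;\,\tau^2(1-\gamma^2))$, so integrating back over $u\le t$ costs only a boundary term of order $1/\omega$ (bounded below on a constant-probability set of values $l$ of $\hat b$), while the denominator perturbation has the smaller order $n^{-1/2}$.

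Two slips need fixing. First, $\cos(\omega b)\varphi(b;s^2)$ has mass $e^{-s^2\omega^2/2}\neq0$, so subtract that multiple of $\varphi(b;s^2)$, as the paper does. Second, the Gaussian envelope \emph{weakens} the damping rather than strengthening it: the marginal factor is $e^{-\omega^2 s^2\tau^2/\{2(s^2+\tau^2)\}}$. This gives the exponent $r_s=(1-\gamma^2)(s^2+\tau^2)/\{s^2+\tau^2(1-\gamma^2)\}>1-\gamma^2$ for every finite $s$, so $s^2$ (and hence $\Gamma>s^2$) must be taken large depending on $\beta$, not ``moderate''.
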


Building upon Theorem \ref{thm:covergence_pvalue}, we have the following convergence rate for the coverage of \smash{$\mathcal{I}^{\rb}_{\widehat{G},i}(1-\alpha)$}.
\begin{theorem}
\label{thm:rate_coverage}
Under Assumption~\ref{assm:subgaussian}, there exists 
$C'=C'(\Gamma,\underline{\gamma},\bar{\gamma},\underline{\sigma},\bar{\sigma})$ 
such that,
    $$
    \sup_{\alpha \in (0,1)}\frac{1}{n}\sum_{i=1}^n\left|\PP[G]{\theta_i\in\mathcal{I}^{\rb}_{\widehat{G},i}(1-\alpha)}-(1-\alpha)\right|\leq C^{\prime}\frac{(\log{n})^{3/2}}{n^{(1-\check{\gamma}^2)/2}}\quad\text{for all } n\in\mathbb{N}_{\geq2}.
    $$
\end{theorem}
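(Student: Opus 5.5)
We reduce coverage to the p-value comparison of Theorem~\ref{thm:covergence_pvalue}, using test inversion~\eqref{eq:CI_inversion}. Set $\theta_{i0}=\theta_i$. By~\eqref{eq:CI_inversion} with $\widehat G$ in place of $G$, the event $\theta_i\in\mathcal{I}^{\rb}_{\widehat G,i}(1-\alpha)$ is the event
$$
\alpha/2\le F_{\widehat G,i}(\hat\theta_i^{\db}-\theta_i\mid\hat b_i)\le 1-\alpha/2 .
$$
Up to the boundary convention, which has probability zero because $F_{\widehat G,i}$ is continuous and strictly increasing, this is $\{\widehat P_i>\alpha\}$ with $\widehat P_i:=P^{(i)}_{\widehat G,\theta_i}(\hat\theta_i^{\db},\hat b_i)$. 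The same holds for the oracle with $P_i:=P^{(i)}_{G,\theta_i}(\hat\theta_i^{\db},\hat b_i)$.

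\textbf{Step 1: the oracle p-value is exactly uniform.} Under~\eqref{eq:modeling} with $b_i\sim G$ and $\theta_i$ fixed, the conditional law of $\hat\theta_i^{\db}-\theta_i$ given $\hat b_i$ has distribution function $F_{G,i}(\cdot\mid\hat b_i)$. This is the definition~\eqref{eq:db_cond_density}, with the general-$\rho_i$ analogue from Appendix~\ref{app:generalization}, and crucially it does not depend on $\theta_i$. The probability integral transform then gives $U_i:=F_{G,i}(\hat\theta_i^{\db}-\theta_i\mid\hat b_i)\sim\mathrm{Unif}(0,1)$ conditionally on $\hat b_i$, hence marginally. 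So $P_i=2\min(U_i,1-U_i)$ is uniform, and $\PP[G]{P_i>\alpha}=1-\alpha$ for every $\alpha$.

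\textbf{Step 2: pass from $\widehat P_i$ to $P_i$ uniformly in $\alpha$.} A naive bound $|\PP{\widehat P_i>\alpha}-\PP{P_i>\alpha}|\le \PP{|\widehat P_i-P_i|>\delta}+\PP{|P_i-\alpha|\le\delta}$ would lose a square root. Instead, write $\widehat U_i:=F_{\widehat G,i}(\hat\theta_i^{\db}-\theta_i\mid\hat b_i)$. The coverage event is $\{\alpha/2\le\widehat U_i\le1-\alpha/2\}$, and its oracle version is $\{\alpha/2\le U_i\le 1-\alpha/2\}$. The difference of the two indicators is nonzero only when $\alpha/2$ or $1-\alpha/2$ lies between $U_i$ and $\widehat U_i$. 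For each $i$ this gives
$$
\left|\PP[G]{\theta_i\in\mathcal{I}^{\rb}_{\widehat G,i}(1-\alpha)}-(1-\alpha)\right|\le \PP[G]{\alpha/2\in[U_i\wedge\widehat U_i,\,U_i\vee\widehat U_i]}+\PP[G]{1-\alpha/2\in[U_i\wedge\widehat U_i,\,U_i\vee\widehat U_i]}.
$$
To control these terms without a square-root loss, I would use an averaging argument over the threshold. Since $U_i$ is uniform, for a fixed level $a$ and any $\delta>0$,
$$
\PP{a\text{ between }U_i,\widehat U_i}\le \PP{|U_i-a|\le\delta}+\PP{|U_i-\widehat U_i|>\delta}\le 2\delta+\EE{|U_i-\widehat U_i|}/\delta,
$$
which still loses a square root. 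The way to avoid the loss is to note that $\widehat G$ depends on $\hat b_i$ but only weakly on $\hat\theta_i^{\db}$. Conditionally on $(\hat b_1,\dots,\hat b_n)$, $\widehat G$ is fixed and $\hat\theta_i^{\db}-\theta_i$ has a smooth conditional density given $\hat b_i$, with $\gamma_i$ bounded away from $\pm1$.

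With $\widehat G$ frozen, $\PP{\widehat U_i\le a\mid \hat b_{1:n}}=F_{G,i}(F_{\widehat G,i}^{-1}(a\mid\hat b_i)\mid\hat b_i)$. Its deviation from $a$ equals $|F_{G,i}(s)-F_{\widehat G,i}(s)|$ at $s=F_{\widehat G,i}^{-1}(a\mid\hat b_i)$, which is bounded by $\sup_s|F_{G,i}(s\mid\hat b_i)-F_{\widehat G,i}(s\mid\hat b_i)|$. That supremum is the quantity Theorem~\ref{thm:covergence_pvalue} controls in averaged form. Its proof, which I would reuse, bounds $\frac1n\sum_i\EE{\sup_t|F_{G,i}-F_{\widehat G,i}|(t\mid\hat b_i)}$ by the stated rate via Hellinger accuracy of the NPMLE marginal density, since $|P_G-P_{\widehat G}|\le 2\sup_t|F_G-F_{\widehat G}|$. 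This makes the bound uniform in $\alpha$, as the supremum is over $t$.

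\textbf{Main obstacle.} The $\hat\theta_i^{\db}$ are not independent of $\widehat G$ given $\hat b_{1:n}$ only through $\hat b_i$; the NPMLE uses only $\hat b$'s, and the $(\hat\theta_i^{\db},\hat b_i)$ are independent across $i$. So conditionally on $\hat b_{1:n}$, $\hat\theta_i^{\db}$ has exactly the conditional law $F_{G,i}(\cdot\mid\hat b_i)$, and the conditioning step is clean. The remaining risk is whether the proof of Theorem~\ref{thm:covergence_pvalue} yields the sup-over-$t$ version rather than a fixed-evaluation version. If it does not, I would redo its key step: bound $\sup_t|F_{G,i}-F_{\widehat G,i}|$ through the ratio of posterior integrals against the Gaussian kernel with correlation $\gamma_i$. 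That gives the factor $n^{-(1-\check\gamma^2)/2}$ after truncating $|\hat b_i|\lesssim\sqrt{\log n}$, with the $(\log n)^{3/2}$ from the NPMLE Hellinger rate. Taking the average over $i$ and the supremum over $\alpha$ then gives the constant $C'$.
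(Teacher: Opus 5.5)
Your proposal is correct and is essentially the paper's argument. Conditioning on $\sigma(\hat b_1,\ldots,\hat b_n)$ freezes $\widehat G$ and $\Delta_i=\sup_z|F_{\widehat G,i}(z\mid\hat b_i)-F_{G,i}(z\mid\hat b_i)|$, while $U_i$ stays uniform because the triples are independent across $i$. This gives a conditional coverage error of at most $2\Delta_i$ uniformly in $\alpha$, and the proof of Theorem~\ref{thm:covergence_pvalue} does control $\frac1n\sum_i \mathbb{E}[\Delta_i]$ in exactly this sup-over-$z$ form. The only cosmetic difference is how the conditional probability is evaluated. You use the quantile map $F_{G,i}(F_{\widehat G,i}^{-1}(a\mid\hat b_i)\mid\hat b_i)$, whereas the paper sandwiches the coverage event between $\{\alpha/2+\Delta_i\le U_i\le 1-\alpha/2-\Delta_i\}$ and $\{\alpha/2-\Delta_i\le U_i\le 1-\alpha/2+\Delta_i\}$.
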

A classical empirical Bayes analysis would instead place an NPMLE prior 
directly on $\theta_i$, in which case no analogous coverage guarantee is 
available. The challenge lies in the discrete nature of the NPMLE, and is reflected in bad coverage in finite
samples~\citep{jiang2019comment, 
koenker2020empirical}. The partially Bayes formulation sidesteps this 
issue, since the NPMLE is applied to $b_i$ and integrated out when 
forming intervals on $\theta_i$, so its discreteness does not propagate 
to the resulting inference.

\section{Numerical results}
\label{sec:numerical}

\subsection{Experiments for prediction-powered inference}

\paragraph{Estimators.} Recall the PPI setup of \S\ref{sec:applications}. We compare our rebiased intervals against three baselines: the \emph{Classical}
interval ($\bar Y_i \pm z_{\alpha}\VarInline{\bar{Y}_i}^{1/2}$; not using ML information); the \emph{Pred Mean} interval (\smash{$\hthetaML \pm z_{\alpha}\VarInline{\hthetaML}^{1/2}$}; without correcting for bias); and the power-tuned PPI
\emph{PT} interval~\citep{angelopoulos2024ppi} centered around the unbiased \smash{$\hthetaPT$}. Our rebiased intervals are built on top of PT by fitting the bias prior $G$ on $b_i$, either as a Normal
$G=\mathrm{N}(\mu, A)$ (suffix \emph{Normal}) or nonparametrically (suffix \emph{NPMLE}). This yields two rebiased intervals: \emph{RB-Normal}, and
\emph{RB-NPMLE}. See Fig.~\ref{fig:bias_fit} for the fitted priors in our applications.

\begin{figure}
\centering
\includegraphics[width=\linewidth]{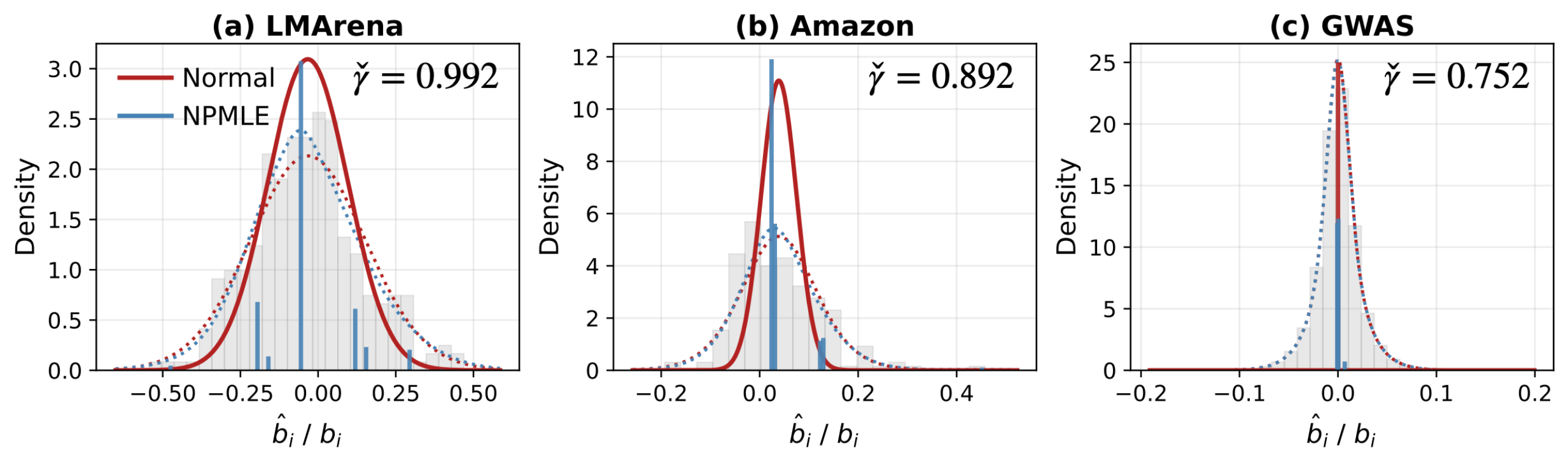}
\vspace{-2em}
\caption{Histogram of the estimator of bias \smash{$\hat{b}_i$} with the prior fitted for the bias $b_i$ (solid line) and the average marginal density of \smash{$\hat{b}_i$} implied by the prior (dotted line) overlaid. Two choices of prior shown: \emph{Normal} prior and the \emph{NPMLE} prior.  The fitted prior distributions are more concentrated than the empirical distribution of \smash{$\hat{b}_i$}, as the latter is further dispersed due to noise in the bias estimators. 
The implied marginal densities align well with the observed histograms.}
\label{fig:bias_fit}
\vspace{-1em}
\end{figure}

\paragraph{Metrics.} We report empirical \emph{coverage} at level
$1-\alpha$ (sometimes the miscoverage rate $1 - \mathrm{coverage}$ instead), the average \emph{width}, and the average \emph{width-ratio} relative to the \emph{Classical} interval (smaller is better). For PPI applications involving real-world datasets, the true $\theta_i$ values are not directly observed; following standard practice in the PPI literature~\citep{angelopoulos2023predictionpowered, li2025predictionpowered}, we treat the empirical mean over the full labeled corpus as a pseudo-ground truth and use it as the target $\theta_i$. In each Monte Carlo replicate we randomly partition the data into a labeled and an unlabeled subset by masking labels (with the per-application split ratio specified below); we report all metrics averaged over $K = 200$ such replicates, with Monte Carlo standard errors included in the full table.

\paragraph{LM Arena.}

We first consider the problem of estimating $n = 298$ average win rates between pairs of LLMs on the LM Arena platform~\citep{chiang2024chatbot}. For each task $i$, the task corresponds to a comparison between two LLMs (LLM A versus LLM B), where the covariates $X_{ij}$ consist of the two models’ responses to a prompt, and $Y_{ij} \in \{0,1\}$ denotes the human preference outcome (with $Y_{ij}=1$ if LLM A is preferred, and $0$ otherwise). The target parameter $\theta_i$ therefore represents the probability that humans prefer LLM A over LLM B across prompts. Recently, applying PPI to LLM evaluation and ranking has attracted growing interest; see, for example,~\citet{chatzi2024predictionpowered}. We use a $10/90$ labeled/unlabeled split ratio for each Monte Carlo replicate, and generate predictions from the raw scores of the \texttt{Skywork-reward-v2} reward model~\citep{liu2025skywork} followed by a Bradley-Terry transformation (see Appendix~\ref{app:experiment-lmarena} for details).

The predictor is bad enough that naïve intervals built from its predictions alone cover the truth only $37\%$ of the time at $\alpha=0.10$. Despite this, rebiasing produces calibrated intervals that are roughly $23\%$ shorter than \emph{Classical} and $19\%$ shorter than \emph{PT} at nominal $90\%$ coverage (Fig.~\ref{fig:ppi-over-alphas}).

At more stringent nominal levels (smaller $\alpha$), all methods under-cover, the unbiased \emph{Classical} and \emph{PT} included. Two features of the setup plausibly contribute, and do so symmetrically across methods: the per-task labeled samples are small enough that the normal approximation underlying the intervals remains loose, and the pseudo-ground truth $\theta_i$ used to assess coverage are themselves estimated rather than directly observed. Within this regime, \emph{RB-Normal} tracks \emph{Classical} and \emph{PT} in coverage while delivering markedly shorter intervals. \emph{RB-NPMLE} sits slightly below in overage, in line with Proposition~\ref{prop:minimax}: the lower bound there implies a slow estimation rate as \smash{$\check{\gamma} \nearrow 1$}, and here \smash{$\check{\gamma} = 0.992$}. The better performance of \emph{RB-Normal} in the
same regime is consistent with the remark following
Theorem~\ref{thm:covergence_pvalue}: when $\check{\gamma}$ is close to
one, committing to a well-motivated parametric family for $G$ is a
defensible alternative to fully nonparametric estimation.

\paragraph{Amazon data.}
Following the experimental design of \citet{li2025predictionpowered}, we
consider PPI problems in which the goal is to recover the average customer rating for each of $n = 200$ Amazon products. For product $i$, the parameter $\theta_i = \EE{Y_{ij}}$ is the population mean rating, with $Y_{ij} \in \{1,\ldots,5\}$ the star score assigned by reviewer $j$ and $X_{ij}$ the concatenation of that review's title and body text. We use a $20/80$ labeled/unlabeled split per replicate, mimicking the regime in which expert ratings are scarce relative to the volume of available text~\citep{fan2024narratives, mozer2025more}. The predictor $h$ is fine-tuned BERT~\citep{devlin2019bert} (with fine-tuning on a disjoint pool of reviews). We leave additional details of our dataset and predictor to~\cref{app:experiment-amazon}.

On the Amazon benchmark, the predictor is fairly informative; this is evident in the concentrated estimated bias distributions (Fig.~\ref{fig:bias_fit} b). Intervals built from the prediction mean alone are about half the width of the \emph{Classical} interval. Even so, our rebiasing approach further stabilizes coverage and delivers the most efficient calibrated intervals (see Fig.~\ref{fig:ppi-over-alphas}). At nominal $90\%$ coverage, \emph{RB-NPMLE} attains 97.6\% coverage with width $0.226$, roughly $49\%$ shorter than \emph{Classical}, $19\%$ shorter than \emph{PT}. 

\begin{figure}[t]
    \centering
    \includegraphics[width=\linewidth]{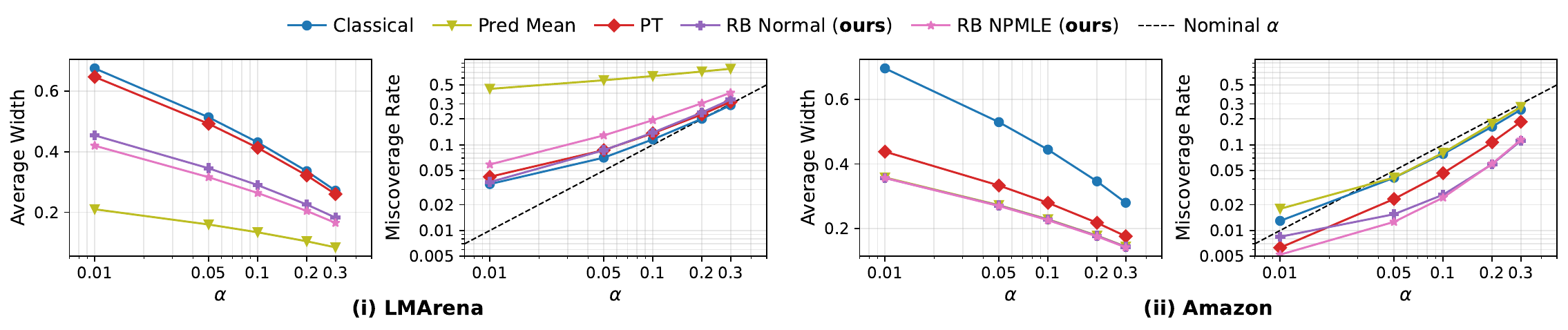}
    \vspace{-2em}
    \caption{Average width (left) and miscoverage rate (right) for \textbf{(i)} $n=298$ pairwise LLM win-rate estimation problems in LMArena dataset and \textbf{(ii)} $n=200$ product rating estimation problems in the Amazon dataset. \textit{RB-Normal} and \textit{NPMLE} achieve a favorable trade-off, producing shorter intervals while maintaining similar coverage to the unbiased methods. Standard errors are reported in Tables~\ref{tbl:lmarena_ci}, \ref{tbl:amazon_ci}.}
    \label{fig:ppi-over-alphas}
    \vspace{-1em}
\end{figure}

\begin{figure}
    \centering
\includegraphics[width=\linewidth]{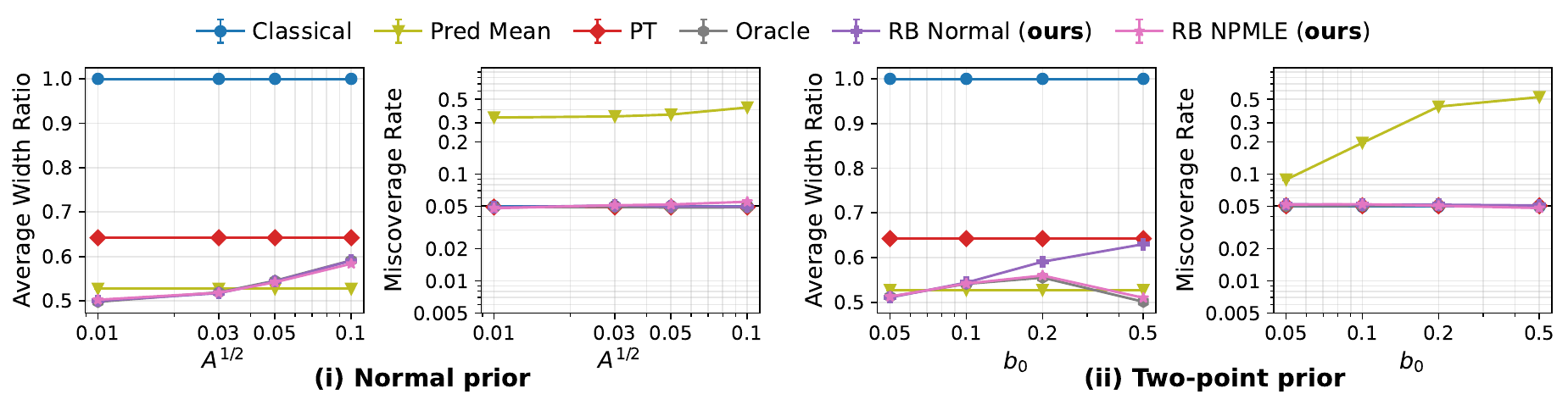}
    \vspace{-1em}
    \caption{Average width ratio (left) and miscoverage rate (right) for $n=200$ simulated tasks in synthetic Amazon dataset with two choices of prior on $b_i$: \textbf{(i)} normal prior, and \textbf{(ii)} two-point prior. \textit{RB-Normal} and \textit{RB-NPMLE} are nearly indistinguishable under the well-specified normal prior, while \textit{RB-NPMLE} is more robust under the two-point
    prior, where the \textit{RB-Normal} is misspecified and therefore loses efficiency. In Table~\ref{tbl:synthetic_amazon} and~\ref{tbl:synthetic_amazon_twopoint} we include the standard errors for the metrics.}
\label{fig:synthetic_normal}
\end{figure}

\paragraph{Synthetic data using covariance structure in Amazon data.}
So far, the analysis is based on real data, where the true biases $b_i$ and their underlying distribution are not directly observed. To further evaluate our proposed rebiased interval in a setting with access to the true bias distribution $G$, we conduct a synthetic simulation study based on the Amazon data that preserves the empirical structure of a real prediction-powered inference problem.
Specifically, we generate bias $b_i$ from two types of prior $G$: normal prior $\mathrm{N}(-0.1,A)$ and two-point prior $(\delta_0+\delta_{b_0})/2$, where $\delta_u$ denotes a Dirac point mass at $u$. See details for the choices of $A$ and $b_0$, and the construction of the synthetic dataset in Appendix~\ref{app:experiment-synth}. 
As a benchmark, we also include an \emph{oracle} variant that plugs in
the true data-generating prior for $b_i$ rather than the estimate
\smash{$\widehat{G}$} when forming rebiased intervals.
Results are displayed in Fig.~\ref{fig:synthetic_normal}. \emph{RB-Normal} and \emph{RB-NPMLE} perform similarly, as well as the oracle interval, when the bias distribution is normal. They maintain close-to-nominal coverage while achieving substantially shorter average widths than the fully debiased \emph{PT} interval. Under this correctly specified setting, the parametric normal model is already sufficient, so the extra flexibility of NPMLE yields no efficiency gain, but, importantly, no efficiency loss either.

Under the two-point prior, the effect of prior misspecification becomes more visible. Although the \emph{RB-Normal} intervals still maintain approximately nominal coverage, their average widths increase as $b_0$ grows and eventually become close to \emph{PT}. This reflects the efficiency loss from approximating a discrete, nonnormal bias distribution by a single normal prior. The \emph{RB-NPMLE} interval is more adaptive in this setting: when $b_0$ is small, the bias distribution is close to a nearly degenerate case, and both rebiased methods perform well; as the two-point structure becomes more pronounced, the \emph{RB-NPMLE} is better able to capture this structure and therefore keeps the interval width closer to the oracle while preserving coverage.

\subsection{Family-based GWAS}

We illustrate rebiased p-values for SNP discovery in family-based GWAS, applying Benjamini-Hochberg (BH)~\citep{BH} to control the false discovery rate (FDR). We analyze human height direct-effect estimates \smash{$\unbiased_i$} and population-effect estimates \smash{$\biased_i$} for $n = 572{,}912$ SNPs from~\citet{Guan2025-jq}, estimating the bias prior $G$ via NPMLE and via a normal parametric model. From each estimated prior we compute empirical Bayes rebiasing p-values and apply BH at FDR $0.05$; baselines are BH on the p-values of (i) the unbiased direct-effect \smash{$\unbiased_i$}, and (ii) the biased population-effect \smash{$\biased_i$}. We also compare against a random set of $5{,}000$ SNPs as a negative control. Lacking individual-level data, we cannot reuse the PPI evaluation strategy and instead compare discoveries against a separate GWAS with a similar family design~\citep{Howe2022-fm} as preliminary external evidence, so results should be interpreted as exploratory. See Appendix~\ref{app:experiment-gwas} for details.

\begin{table}[t]
  \centering
  \caption{Summary of discoveries from the family-based GWAS summary statistics.
We report the numbers of SNPs discovered from BH at FDR $0.05$ applied to our
\textit{RB-NPMLE} and \textit{RB-Normal} p-values, the (unbiased) direct-effect
p-values, and the (biased) population-effect p-values, together with their
overlaps (defined in Appendix~\ref{app:experiment-gwas}) with \citet{Howe2022-fm} signals. We also include overlaps for a random
set of 5,000 SNPs, averaged over 50 repetitions and reported as mean $\pm$
SE.
  }
  \label{tab:discovery-overlap}
  \begin{tabular}{lrrrrr}
  \toprule
   & \multicolumn{1}{c}{\textbf{RB-NPMLE}}
   & \multicolumn{1}{c}{\textbf{RB-Normal}}
   & \multicolumn{1}{c}{\textbf{Direct-effect}}
   & \multicolumn{1}{c}{\textbf{Population-effect}}
   & \multicolumn{1}{c}{\raisebox{-0.5\height}{\textbf{Random}}} \\
   & \multicolumn{1}{c}{\textbf{(ours)}}
   & \multicolumn{1}{c}{\textbf{(ours)}}
   & \multicolumn{1}{c}{\textbf{(unbiased)}}
   & \multicolumn{1}{c}{\textbf{(biased)}}
   & \\
  \midrule
  \textbf{\# SNPs} & $1{,}547$ & $1{,}700$ & $272$ & $1{,}594$ & $5{,}000$ \\
  \textbf{\# overlaps} & $731$ & $743$ & $234$ & $658$ & $10.84{\scriptstyle\,\pm\,2.37}$ \\ 
  \bottomrule
\end{tabular}
\end{table}

A priori, one might trust the population-effect estimator when confounding is believed to be small and otherwise fall back to the much noisier direct-effect estimator. Rebiasing automates this choice: both estimated priors concentrate near zero (Fig.~\ref{fig:bias_fit}c), consistent with existing biological evidence specific to height (Appendix~\ref{app:experiment-gwas}), and the rebiased estimators are similar (but not identical) to the population-effect estimator.  Table~\ref{tab:discovery-overlap} shows that rebiasing recovers more~\citet{Howe2022-fm} signals than the baselines, while the random control shows essentially no overlap. Notably, RB-NPMLE makes slightly fewer total discoveries than the population-effect baseline, yet has more overlaps (see Appendix~\ref{app:experiment-gwas} for further discussion). The reported \smash{$\check\gamma = 0.752$} falls in a regime where our theory (\S\ref{sec:theory}) gives faster rates, supporting the use of the NPMLE here.\footnote{Since nearby SNPs are correlated, the independence assumption of \S\ref{sec:theory} does not strictly hold here. We nevertheless expect our estimated p-values to remain close to their oracle counterparts due to weak dependence across the genome.}

\section{Conclusion}
\label{sec:conclusion}
For point estimation, forfeiting unbiasedness in exchange for variance savings is an uncontroversial tradeoff. For interval inference, the analyst typically either trusts a biased estimator outright, hoping the bias is negligible, or fully debiases and absorbs the variance cost, with little principled middle ground. This paper proposes empirical Bayes rebiasing as a way to navigate the analogous tradeoff for calibrated inference, by modeling task-level bias as drawn from a distribution learned across many related tasks rather than treated as either zero or arbitrary. Our construction is deliberately lightweight: only the biases are modeled as exchangeable; the $\theta_i$ are held fixed. The analyst therefore need not commit to any pooling structure on the $\theta_i$. Two modeling assumptions are worth flagging as limitations. First, exchangeability of the biases, in particular, that their distribution does not depend on the $\theta_i$, is common in the literature (\S\ref{sec:background}) but may fail in practice~\citep{schuemie2018empirical}. Second, motivated by the central limit theorem, we model the estimators as exactly normal with known second moments; our theory does not propagate the  error from the CLT approximation or the uncertainty from plug-in estimates of $\sigma_i, \tau_i, \rho_i$ in~\eqref{eq:modeling}. 

Our framework applies wherever a biased estimator is paired with a bias correction (equivalently, an unbiased estimator) across many parallel tasks; PPI and family-based GWAS as in this paper, or combining observational with experimental causal estimates, and so forth. Calibrated inference in empirical Bayes problems is much less developed than the corresponding point estimation theory despite its practical importance, see e.g.,~\citet{armstrong2022robust} for recent progress; the empirical partially Bayes formulation pursued here seems a promising avenue in this regard.

\paragraph{Acknowledgments}
Part of the computing for this project was conducted on UChicago's Data Science Institute cluster. N.I. gratefully acknowledges support from NSF (DMS 2443410).

\bibliographystyle{abbrvnat}
\bibliography{impartially_bayes}

\appendix
\newpage
\renewcommand{\thesection}{\Alph{section}}

{\centering
  {\Large\bfseries$\clubsuit$\enspace Appendix: Table of Contents}\par
}
\vspace{1em}
\apptocentry{app:detail_method}{Details on methods (Section~\ref{sec:setting})}
\apptocentry{app:detail_applications}{Details on applications (Section~\ref{sec:applications})}
\apptocentry{app:proof}{Proof of theoretical results (Section~\ref{sec:theory})}
\apptocentry{app:experiment}{Details on numerical studies (Section~\ref{sec:numerical})}

\setcounter{equation}{0}
\setcounter{figure}{0}
\setcounter{table}{0}

\renewcommand{\theequation}{S\arabic{equation}}
\renewcommand{\thefigure}{S\arabic{figure}}
\renewcommand{\thetable}{S\arabic{table}}

\section{Details on methods (Section~\ref{sec:setting})}
\label{app:detail_method}
\subsection{Normal bias distribution when \texorpdfstring{$\rho=0$}{rho}}
\label{app:normal}
Consider the parametric case where we are willing to assume that the bias distribution is normal,
$$
b_i\sim \mathrm{N}(\mu,A).
$$
Combining with model \eqref{eq:modeling}, Bayes' rule gives
$$
 b_i\mid \hat b_i\sim \mathrm{N}\left(\mu+\frac{A}{A+\tau_i^2}(\hat b_i-\mu),\frac{A\tau_i^2}{A+\tau_i^2}\right),
$$
and the oracle rebiased estimator takes the form
$$
\rebiased
    =\biased_i-\left\{\mu+\frac{A}{A+\tau_i^2}(\hat b_i-\mu)\right\}
    = \hat\theta_i^{\mathrm{db}}+\frac{\tau_i^2}{A+\tau_i^2}(\hat b_i-\mu).
$$
This expression shows explicitly how the normal model interpolates between the
biased and fully debiased estimators. When $A=0$, the correction is shrunk completely toward $\mu$;
in particular, if $\mu=0$, \smash{$\rebiased=\hat\theta_i^{\mathrm b}$}, while for $A=\infty$, \smash{$\rebiased=\hat\theta_i^{\mathrm{db}}$}. 

Moreover, the oracle rebiased $(1-\alpha)$-interval
reduces to the symmetric interval
\begin{equation}
\label{eq:CI_normal}
     \mathcal{I}^{\rb}_{G,i}(1-\alpha)
    =
    \hat\theta_i^{\mathrm{rb}}
    \pm
    z_{1-\alpha/2}
    \left(
        \sigma_i^2+\frac{A\tau_i^2}{A+\tau_i^2}
    \right)^{1/2}.
\end{equation}
The unknown parameters $(\mu, A)$ from the prior $G$ can be estimated via maximum marginal likelihood (see \ref{app:computation}). With \smash{$\widehat{G}= \mathrm{N}(\widehat{\mu},\widehat{A})$} in hand, we can then plug in \eqref{eq:CI_normal} and obtain the rebiased $(1-\alpha)$-intervals $\mathcal{I}^{\rb}_{\widehat{G},i}(1-\alpha)$.

Similarly, we have the following form of the oracle rebiased p-value $P_i^{\rb}=P_{G,\theta_{i0}}^{(i)}(\hat\theta_i^{\db},\hat{b}_i)$, with
\begin{equation}
\label{eq:normal_p_value}
P_{G,\theta_{i0}}^{(i)}(z,l)=
2\Phi\left(-
\left|z+\frac{\tau_i^2}{A+\tau_i^2}(l-\mu)-\theta_{i0}\right|
\bigg/
\left(\sigma_i^2+\frac{A\tau_i^2}{A+\tau_i^2}\right)^{1/2}
\right),
\end{equation}
where $\Phi$ denotes the cumulative distribution function of a standard normal random variable. Plugging the estimated $\widehat{G}$, we can obtain the empirical Bayes rebiased p-value $\widehat{P}_i^{\rb}=P_{\widehat{G},\theta_{i0}}^{(i)}(\hat\theta_i^{\db},\hat{b}_i)$ by replacing $(\mu,A)$ with \smash{$(\wh \mu,\wh A)$} in \eqref{eq:normal_p_value}.
\subsection{Generalization of \texorpdfstring{$\rho$}{rho}}
\label{app:generalization}
Consider a more generic setting of this problem. We no longer assume independence between $(\biased_i,\hat{b}_i)$ ($\rho_i=0$ in \eqref{eq:modeling}). We can consider the following conditional distribution of $\hat{\theta}_i^{\db}$ given $\hat{b}_i$ that integrates out $b_i$ over its posterior $\Pi_i(\cdot \mid \hat{b}_i)$ under the prior $G$,
\begin{equation}
\label{eq:db_cond_dist}
    \hat{\theta}_i^{\mathrm{db}}\mid \theta_i,\hat b_i
    \sim
    \int
    \mathrm{N}\left(
    \theta_i+c_i(\hat b_i-b),\;
    \sigma_i^2(1-\rho_i^2)
    \right)
    \,\Pi_i(\dd b\mid\hat{b}_i),
\end{equation}
with $c_i=\rho_i\sigma_i/\tau_i-1$. Writing $m_i\equiv m_i(\hat b_i)=\mathbb E_G[b_i\mid \hat b_i]$, the conditional mean of
$\hat\theta_i^{\mathrm{db}}$ is $\mathbb E[\hat\theta_i^{\mathrm{db}}\mid \theta_i,\hat b_i]
=\theta_i+c_i(\hat b_i-m_i)$, and we construct
$$
\hat\theta_i^{\rb} = \hat\theta_i^{\mathrm{db}}
- c_i(\hat b_i-m_i).
$$
Let $q_{G,i,\alpha}(\hat{b}_i)$ denote the
$\alpha$-quantile of the conditional distribution of
$\hat\theta_i^{\db}-\theta_i$ given $\hat b_i$, and we could report the $(1-\alpha)$-level oracle rebiased interval $ \mathcal{I}^{\rb}_{G,i}(1-\alpha)=[\hat\theta_i^{\rb}-q_{G,i,1-\alpha/2},\,\hat\theta_i^{\rb}-q_{G,i,\alpha/2}]$.

In particular, if we assume $G=\mathrm{N}(\mu,A)$, and then \eqref{eq:db_cond_dist} can be written as
$$
\hat{\theta}_i^{\mathrm{db}}\mid \theta_i,\hat b_i
\sim\mathrm N\left(\theta_i+\frac{\rho_i\sigma_i\tau_i-\tau_i^2}{\tau_i^2+A}(\hat b_i-\mu),
\,\sigma_i^2+\tau_i^2-2\rho_i\sigma_i\tau_i-\frac{(\rho_i\sigma_i\tau_i-\tau_i^2)^2}{\tau_i^2+A}
\right).
$$
Obtain the estimated prior parameters $(\widehat{\mu},\widehat{A})$ via maximum marginal likelihood, we can plug back and form the following rebiased $(1-\alpha)$-intervals for each $i$:
$$
\hat{\theta}_i^{\rb} \pm z_{\alpha }\left[\sigma_i^2+\tau_i^2-2 \rho_i \sigma_i \tau_i-\frac{\left(\rho_i \sigma_i \tau_i-\tau_i^2\right)^2}{\tau_i^2+\widehat{A}}\right]^{1 / 2},\quad \hat\theta_i^{\rb}
:=
\hat\theta_i^{\mathrm{db}}
-
\frac{
\rho_i\sigma_i\tau_i-\tau_i^2
}{
\tau_i^2+\widehat A
}
(\hat b_i-\widehat\mu).
$$
\subsection{Estimation for \texorpdfstring{$G$}{G}}
\label{app:computation}
\paragraph{Parametric Normal} Assume $G=\mathrm{N}(\mu,A)$, and we use the maximum marginal likelihood to estimate $(\mu,A)$. In particular, combining with the model \eqref{eq:modeling}, the marginal distribution of \smash{$\hat b_i$} is
$$
\hat b_i \sim \mathrm{N}(\mu,A+\tau_i^2).
$$
Therefore, we estimate $(\mu,A)$ by maximizing the marginal likelihood:
$$
(\widehat\mu,\widehat A)
=
\arg\max_{\mu\in\mathbb{R},\,A>0} \sum_{i=1}^n
\log \varphi\!\left(\hat b_i;\,\mu,\,A+\tau_i^2\right).
$$
For implementation, we optimize over $(\mu,\log A)$ to enforce the
constraint $A>0$.

\paragraph{NPMLE} As observed before, the optimization program in \eqref{eq:npmle_G} is convex. We use the discretization technique proposed in \cite{koenker2014convex} to solve this problem. In particular, we choose $B=50$ grid points (For GWAS we choose $B=300$) that are equally spaced between the smallest and largest value of \smash{$\{\hat{b}_1,...,\hat{b}_n\}$}, and we optimize \eqref{eq:npmle_G} over all possible distributions supported on this finite grid, which is a conic programing problem and we solve it with the interior point convex programming solver MOSEK \cite{aps2024mosek}. 

\section{Details on applications (Section~\ref{sec:applications})}
\label{app:detail_applications}
\subsection{Prediction powered inference}
\label{subsec:ppi_app_details}

Here we expand on our description of the PPI/PT estimators in \S\ref{sec:applications}, using the per-task quantities and the vanilla PPI estimator already introduced in~\eqref{eq:ppi}. Throughout this subsection, we assume that the labeled samples \smash{$\{(X_{ij}, Y_{ij})\}_{j=1}^{m_i}$} and the unlabeled samples \smash{$\{\tilde X_{ij}\}_{j=1}^{M_i}$} are jointly independent both across the labeled/unlabeled split and across tasks $i = 1, \ldots, n$. The ML predictor $h$ is considered fixed and independent of all of them. To compress notation, write the second moments for the samples in the $i$-th task as
\begin{equation}
\label{eq:ppi_moments}
v_i^2 \,:=\, \VarInline{h(X_{ij})},
\qquad
w_i^2 \,:=\, \VarInline{Y_{ij}},
\qquad
c_i \,:=\, \Cov{h(X_{ij}),\, Y_{ij}},
\end{equation}
and we denote $\mu_i := \EEInline{h(X_{ij})}$ so the true bias is $b_i = \mu_i - \theta_i$.

\paragraph{Power-tuning derivation.} \citet{angelopoulos2024ppi} introduce a one-parameter family of estimators with a tuning parameter $\lambda_i \in \mathbb{R}$,
\begin{equation}
\label{eq:pt_family}
\hat\theta_{i,\lambda_i} \,=\, \bar Y_i \,+\, \lambda_i\big(\tilde Z_i^h - \bar Z_i^h\big),
\qquad \lambda_i \in \RR,
\end{equation}
which interpolates between the classical estimator $\bar{Y}_i$ ($\lambda_i = 0$) and $\hthetaPPI$ ($\lambda_i = 1$). For every choice of $\lambda_i$, $\hat\theta_{i,\lambda_i}$ is unbiased since $\EEInline{\tilde Z_i^h} = \EEInline{\bar Z_i^h} = \mu_i$. A direct computation gives
\begin{equation}
\label{eq:pt_variance}
\VarInline{\hat\theta_{i,\lambda_i}}
\,=\, \frac{w_i^2}{m_i}
\,+\, \lambda_i^2\, v_i^2\!\left(\frac{1}{M_i} + \frac{1}{m_i}\right)
\,-\, \frac{2\lambda_i\, c_i}{m_i}.
\end{equation}
Minimizing this variance with respect to $\lambda_i$ yields the optimal power tuning parameter
\begin{equation}
\label{eq:pt_lambda_star}
\lambda_i^* \,=\, \frac{M_i}{m_i + M_i}\,\frac{c_i}{v_i^2},
\end{equation}
and the corresponding power-tuned (PT) estimator is defined as $\hthetaPT := \hat\theta_{i,\lambda_i^*}$.

\paragraph{PT in the rebiasing framework.} The PT estimator is unbiased for $\theta_i$, but its variance \eqref{eq:pt_variance} at $\lambda_i^*$ is generally larger than $v_i^2/M_i$, the variance of $\hthetaML$. Following \S\ref{sec:applications}, we therefore identify
\begin{equation}
\label{eq:ppi_b_bhat_pt}
\biased_i \,:=\, \hthetaML \,=\, \tilde Z_i^h,
\qquad
\hat b_i \,:=\, \biased_i - \hthetaPT \,=\, (1-\lambda_i^*)\,\tilde Z_i^h \,+\, \lambda_i^*\,\bar Z_i^h \,-\, \bar Y_i.
\end{equation}
Then $\EEInline{\hat b_i} = (1-\lambda_i^*)\mu_i + \lambda_i^*\mu_i - \theta_i = b_i$, so $\hat b_i$ is unbiased for the bias of $\biased_i$. The pair $(\biased_i, \hat b_i)$ thus fits the model in~\eqref{eq:modeling} with $\debiased_i = \biased_i - \hat b_i = \hthetaPT$, and the rebiased estimator and intervals from \S\ref{sec:setting} apply directly. Setting $\lambda_i^* = 1$ recovers the standard vanilla-PPI debiasing ($\hat b_i = \bar Z_i^h - \bar Y_i$); the rebiasing perspective therefore subsumes both PPI and PT under a single framework.

\paragraph{Expressions for $\sigma_i^2, \tau_i^2, \rho_i$.} A short computation gives
\begin{align}
\label{eq:ppi_sigma}
\sigma_i^2 \,&=\, \VarInline{\biased_i} \,=\, \frac{v_i^2}{M_i}, \\
\label{eq:ppi_tau}
\tau_i^2 \,&=\, \VarInline{\hat b_i}
\,=\, (1-\lambda_i^*)^2\,\frac{v_i^2}{M_i}
\,+\, \frac{(\lambda_i^*)^2\, v_i^2 \,-\, 2\lambda_i^*\, c_i \,+\, w_i^2}{m_i}, \\
\label{eq:ppi_rho}
\rho_i\sigma_i\tau_i \,&=\, \Cov{\biased_i,\, \hat b_i}
\,=\, (1-\lambda_i^*)\,\frac{v_i^2}{M_i},
\qquad
\rho_i \,=\, \frac{(1-\lambda_i^*)\, v_i / \sqrt{M_i}}{\tau_i}.
\end{align}
Two limits are instructive. When $\lambda_i^* = 0$, $\hthetaPT$ collapses to the classical mean $\bar Y_i$, and $\rho_i$ is maximal because $\biased_i$ enters $\hat b_i$ undampened. When $\lambda_i^* = 1$, $\hthetaPT$ coincides with vanilla PPI and $\hat b_i = \bar Z_i^h - \bar Y_i$; the unlabeled-only $\biased_i$ and the labeled-only $\hat b_i$ are then independent, giving $\rho_i = 0$ and matching the simplified development in \S\ref{sec:setting}.

\paragraph{Plug-in estimation.} The expressions above involve $(v_i^2, w_i^2, c_i)$, which are not known in practice. We replace each by its sample analog from the labeled set (the sample variance of $Y_{ij}$ and $h(X_{ij})$, as well as the sample covariance of $(h(X_{ij}), Y_{ij})$); when $M_i \gg m_i$, $v_i^2$ may also be estimated from the larger unlabeled set $\{h(\tilde X_{ij})\}_{j=1}^{M_i}$. Plugging these estimates into~\eqref{eq:pt_lambda_star} yields a feasible $\hat\lambda_i^*$, and into~\eqref{eq:ppi_sigma}--\eqref{eq:ppi_rho} yields $(\hat\sigma_i^2, \hat\tau_i^2, \hat\rho_i)$. Following standard practice in PPI~\citep{angelopoulos2024ppi}, we treat these plug-ins as known when forming Wald and rebiased intervals.

\subsection{Family-based GWAS}
\label{subsec:gwas_app_details}

The summary statistics we analyzed provide
the pair \smash{$(\unbiased_i, \hat{b}_i)$}, and their respective variances $\smash{\tilde{\sigma}_i^2}$ and $\tau_i^2$, and correlation $\gamma_i$ are also reported.  The representation above is equivalent to the pair \smash{$(\biased_i, \unbiased_i)$}: indeed, the biased estimator can be written as $\smash{\biased_i=\unbiased_i+\hat{b}_i}$, with variance $\sigma^2_i=\smash{\tilde{\sigma}_i^2+\tau_i^2+2\gamma_i\tilde{\sigma}_i\tau_i}$, and its
correlation with \smash{$\hat{b}_i$} is
\smash{$\rho_i=(\tau_i+\gamma_i\tilde{\sigma}_i)/
(\tilde{\sigma}_i^2+\tau_i^2+2\gamma_i\tilde{\sigma}_i\tau_i)^{1/2}$}.

\section{Proof of theoretical results (Section~\ref{sec:theory})}\label{app:proof}

For notation simplicity, we denote 
$$
\Sigma_{i} = (\Sigma_{i,jq}) := \begin{pmatrix}
\tilde{\sigma}_i^2&\gamma_i\tilde{\sigma}_i\tau_i
\\
\gamma_i\tilde{\sigma}_i\tau_i&\tau_i^2
\end{pmatrix}, \quad \Omega_{i} =(\Omega_{i,jq}):= \Sigma_i^{-1}.
$$
Start by defining the class of marginal densities of $\hat{b}_i$ across $i$,
\begin{equation}
\label{def:hetero_class_mixture_density}
    \mathcal{F}_n :=\left\{\left(f_{G^{\prime}}^{(1)},...,f_{G^{\prime}}^{(n)}\right): G^{\prime}\in\mathcal{G}\right\},
\end{equation}
where $\mathcal{G}$ is the set of all possible priors $G^{\prime}$, and
$$
f_{G^{\prime}}^{(i)}(l) \equiv f_{G^{\prime}}(l;\Sigma_{i,22}) := \int \varphi(l- b;\Sigma_{i,22})~\dd G^{\prime}(b).
$$
Define the supremum norm in bounded intervals 
$$
\lVert\boldsymbol{h}\rVert_{\infty,U} =\max_{1\leq i\leq n}\lVert h_i\rVert_{\infty,U} = \max_{1\leq i\leq n}\sup_{|x|\leq U}|h_i(x)|,
$$
where $\boldsymbol{h}=(h_1,...,h_n)$, and also define the average Hellinger distance
$$
\bar{d}(G,\widehat{G}) = \left(\frac{1}{n}\sum_{i=1}^n\mathfrak{H}^2\left(f_{G}^{(i)},f_{\widehat{G}}^{(i)}\right)\right)^{\frac{1}{2}}, \quad \text{where }\mathfrak{H}^2(g,h) = \frac{1}{2}\int_{}\left(\sqrt{g(t)}-\sqrt{h(t)}\right)^2 \dd t.
$$

\subsection{Preliminary lemmata and propositions}
We state the following results from \cite{jiang2020general}.
\begin{lemm}[Lemma 4 of \cite{jiang2020general}]
\label{lemm:hetero_cardinality_cover}
    For $0<\varepsilon<1/\sqrt{2\pi}$, $U>0$,
    $$
    \log{(\mathcal{N}(\varepsilon,\mathcal{F}_n,\lVert\cdot\rVert_{\infty,U}))} \lesssim_{\underline{\sigma},\bar{\sigma}} \left(\log{\left(\frac{1}{\varepsilon}\right)}\right)^2\max\left\{\frac{U}{\{\sqrt{|\log \epsilon|}},1\right\}.
    $$
\end{lemm}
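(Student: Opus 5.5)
The plan is the standard moment-matching argument for Gaussian location mixtures, run so that the approximation is uniform over all the kernel variances $\Sigma_{i,22}=\tau_i^2\in[\underline{\sigma}^2,\bar{\sigma}^2]$. The point is that a single discrete prior $G''$ will approximate $G'$ simultaneously for every coordinate $i$. A cover of $\mathcal{F}_n$ in $\lVert\cdot\rVert_{\infty,U}$ therefore needs no more elements than a cover for one variance, and the bound does not depend on $n$.

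\textbf{Step 1: discard far-away mass.} Fix $\varepsilon$ and set $a = c_0\bar{\sigma}\sqrt{\log(1/\varepsilon)}$ for a suitable constant $c_0$. For $|x|\le U$ and $|b|>U+a$, each kernel satisfies $\varphi(x-b;\tau_i^2)\le \underline{\sigma}^{-1}e^{-a^2/(2\bar\sigma^2)}\lesssim\varepsilon$. Hence the part of $G'$ outside $I:=[-U-a,U+a]$ contributes at most $O(\varepsilon)$ uniformly in $i$. I will therefore replace $G'$ by a prior on $I$: move that mass to an endpoint of $I$, which changes $f^{(i)}_{G'}$ by $O(\varepsilon)$ on $[-U,U]$.

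\textbf{Step 2: local moment matching.} Partition $I$ into $J\asymp\max\{U/\sqrt{\log(1/\varepsilon)},1\}$ intervals $I_j$, each of length $\asymp\underline{\sigma}\sqrt{\log(1/\varepsilon)}$, with centers $c_j$. On each $I_j$, use Carath\'eodory to pick a discrete measure with at most $k+1$ atoms, where $k\asymp\log(1/\varepsilon)$, that has the same mass and the same first $k$ moments as $G'|_{I_j}$.

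The contribution of $I_j$ is then controlled in two regimes of $x$:
\begin{itemize}
\item For $x$ with $|x-c_j|\le C\sqrt{\log(1/\varepsilon)}$, Taylor-expand $b\mapsto\varphi(x-b;s^2)$ around $c_j$ to order $k$. The moments cancel, leaving only the remainder. Using $|H_k(z)|e^{-z^2/4}\lesssim\sqrt{k!}$ with $s\ge\underline{\sigma}$, the remainder is bounded by $(e\,\mathrm{len}(I_j)^2/(k\underline{\sigma}^2))^{k/2}$ up to constants, which is $\le\varepsilon$ once $k$ is a large enough multiple of $\log(1/\varepsilon)$.
\item For $x$ farther from $c_j$, both the original and the matched piece are bounded by the Gaussian tail $\lesssim\varepsilon$.
\end{itemize}
To avoid summing $J$ errors, I will exploit the Gaussian decay across intervals, so that only $O(1)$ neighbouring intervals contribute non-negligibly at any given $x$. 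This yields a $G''$ with $m\lesssim\log(1/\varepsilon)\max\{U/\sqrt{\log(1/\varepsilon)},1\}$ atoms and $\sup_i\lVert f^{(i)}_{G'}-f^{(i)}_{G''}\rVert_{\infty,U}\lesssim\varepsilon$.

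\textbf{Step 3: count.} Discretize the atom locations to an $\varepsilon$-grid of $I$; the kernels are $\underline{\sigma}^{-2}$-Lipschitz, so this costs $O(\varepsilon)$. Discretize the weights via an $\varepsilon$-net of the $m$-simplex in $\ell_1$, which costs $\le\varepsilon/(\sqrt{2\pi}\underline\sigma)$ in sup norm. The log-cardinality is
\[
m\log\bigl((U+a)/\varepsilon\bigr)+m\log(1/\varepsilon).
\]
When $\log U\lesssim\log(1/\varepsilon)$ this is $\lesssim(\log(1/\varepsilon))^2\max\{U/\sqrt{\log(1/\varepsilon)},1\}$. The remaining large-$U$ regime can be absorbed by placing the location grid separately within each $I_j$, which costs only $\log(\sqrt{\log(1/\varepsilon)}/\varepsilon)$ per atom. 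Rescaling $\varepsilon$ by a constant finishes the proof.

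The main obstacle is Step 2: obtaining a Taylor remainder bound that is uniform in the variance $s^2\in[\underline\sigma^2,\bar\sigma^2]$ while keeping the interval length and $k$ tuned so that the per-interval number of atoms stays at $O(\log(1/\varepsilon))$. I would follow the Hermite-polynomial bound used in Zhang (2009) and Jiang (2020), with all constants taken at the worst-case variance.
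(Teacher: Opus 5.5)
Your proposal is correct and is essentially the argument behind the cited result. The paper does not reprove this lemma but imports it from Lemma 4 of \cite{jiang2020general}, whose proof is the Zhang (2009)-style moment-matching construction you describe: truncate to $[-U-a,U+a]$, match $O(\log(1/\varepsilon))$ moments on $O(\max\{U/\sqrt{\log(1/\varepsilon)},1\})$ blocks of width $\asymp\sqrt{\log(1/\varepsilon)}$, then discretize atom locations blockwise and weights via a simplex net. As you note, the key point is that a single discrete $G''$ works simultaneously for all variances $\tau_i^2\in[\underline{\sigma}^2,\bar{\sigma}^2]$, so the bound is free of $n$.

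One small simplification applies. Each matched block has the same mass as $G'$ restricted to $I_j$, and your Cram\'er--Hermite remainder bound is uniform in $x$, so the block-$j$ error is at most a constant times $G'(I_j)\,\varepsilon$. These errors therefore sum to $O(\varepsilon)$ directly, and neither the two-regime split in $x$ nor the neighbouring-interval argument is needed.
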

\begin{lemm}[Theorem 4 of \cite{jiang2020general}]
\label{lemm:hetero_hellienger}
    Under Assumption~\ref{assm:subgaussian}, fix $c_0\geq 2$, then there exists constants $C_0$ depending only on $\Gamma,\underline{\sigma},\bar{\sigma}$ such that 
    $$
    \PP[G]{\bar{d}(G,\widehat{G})\geq C_0\frac{\log{n}}{\sqrt{n}}}\leq \exp{\left(-c_0\log{n}\right)} \quad\text{for all } n \in \mathbb{N}_{\geq2}.
    $$
\end{lemm}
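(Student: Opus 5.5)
The plan is the standard likelihood-ratio and entropy argument for NPMLE Hellinger rates (Ghosal–van der Vaart, Zhang 2009), adapted to heteroscedastic noise as in \cite{jiang2020general}. Set $\varepsilon_n = C_0 \log n/\sqrt{n}$. Since $\widehat G$ maximizes~\eqref{eq:npmle_G}, it satisfies
$$
\prod_{i=1}^n \frac{f^{(i)}_{\widehat G}(\hat b_i)}{f^{(i)}_G(\hat b_i)} \ge 1 .
$$
It therefore suffices to show the following: with probability at least $1-n^{-c_0}$, every $G'$ with $\bar d(G,G')\ge \varepsilon_n$ has likelihood ratio at most $\exp(-c\, n\varepsilon_n^2)<1$. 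Because the whole problem is translation equivariant, I will first recentre so that $\EE[G]{b}=0$. This makes the sub-Gaussian condition in $\mathcal G_\Gamma$ a statement about tails around $0$.

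\textbf{Step 1 (truncation).} Marginally, $\hat b_i = b_i + \text{N}(0,\tau_i^2)$ with $\tau_i^2\le\bar\sigma^2$. Hence $\hat b_i$ is sub-Gaussian with variance proxy $\Gamma+\bar\sigma^2$. A union bound shows that the event $E_U=\{\max_i|\hat b_i|\le U\}$ with $U = M\sqrt{\log n}$ has probability at least $1-n^{-c_0}/2$, once $M=M(c_0,\Gamma,\bar\sigma)$ is chosen large enough. We will work on $E_U$ from now on.

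\textbf{Step 2 (net).} Take $\eta = n^{-2}$ and apply Lemma~\ref{lemm:hetero_cardinality_cover} on $[-U,U]$. This gives a net $\{\boldsymbol h^{(j)}\}_{j\le N}$ of $\mathcal F_n$ with
$$
\log N \lesssim (\log n)^2 \cdot \frac{\sqrt{\log n}}{\sqrt{\log n}} \asymp (\log n)^2 \asymp n\varepsilon_n^2/C_0^2 .
$$
For each net element, define an upper envelope
$$
g^{(j)}_i(x) = \bigl(h^{(j)}_i(x)+2\eta\bigr)\mathbf 1\{|x|\le U\} + \bar\varphi(x)\mathbf 1\{|x|>U\},
$$
where $\bar\varphi$ is a Gaussian-type envelope dominating every $f^{(i)}_{G'}$ outside $[-U,U]$. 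Any such $f^{(i)}_{G'}$ is bounded by $1/\sqrt{2\pi}\,\underline\sigma^{-1}$, but I will only need the envelope's value at points in $[-U,U]$ on $E_U$. Then every $G'$ within $\eta$ of $\boldsymbol h^{(j)}$ satisfies $f^{(i)}_{G'}(\hat b_i)\le g^{(j)}_i(\hat b_i)$ on $E_U$. Moreover, $\int g^{(j)}_i \le 1 + O(U\eta) + (\text{tail mass})$, and both excess terms are $o(\varepsilon_n^2)$.

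\textbf{Step 3 (exponential bound per net element and union bound).} Fix a net element $j$ whose members include some $G'$ with $\bar d(G,G')\ge\varepsilon_n$. By independence across $i$,
$$
\EE[G]{\prod_i \sqrt{g^{(j)}_i(\hat b_i)/f^{(i)}_G(\hat b_i)}} = \prod_i \int \sqrt{g^{(j)}_i f^{(i)}_G}.
$$
Each factor is at most $1 - \mathfrak H^2(f^{(i)}_G,h^{(j)}_i) + O(\sqrt\eta\, U + \text{tail})$. The sup-norm closeness on $[-U,U]$ transfers Hellinger separation from $G'$ to $\boldsymbol h^{(j)}$, up to an error $O(\sqrt{U\eta})$ plus tail. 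Using $1-x\le e^{-x}$, the product is therefore at most $\exp(-n\varepsilon_n^2/2)$. By Markov's inequality, the probability that this net element's ratio exceeds $\exp(-n\varepsilon_n^2/4)$ is at most $\exp(-n\varepsilon_n^2/8)$. A union bound over the $N \le \exp(C(\log n)^2)$ net elements leaves a failure probability of at most $\exp\{(C-C_0^2/8)(\log n)^2\}\le n^{-c_0}/2$, once $C_0$ is large. Combined with Step~1, this proves the lemma.

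\textbf{Main obstacle.} The delicate point is Step~3's bookkeeping. The net controls the densities only in sup-norm on $[-U,U]$ and only up to the additive slack $\eta$, yet the argument needs a multiplicative likelihood bound and a Hellinger-affinity bound that hold uniformly over heteroscedastic $i$. I would handle this as in Zhang (2009): use the additive $\eta$-floor inside the interval, and the Gaussian tail of $f^{(i)}_G$ (from sub-Gaussianity of $G$ together with $\tau_i\le\bar\sigma$) outside it. The goal is to show that every error term is of order $n^{-1}\mathrm{polylog}(n)\ll\varepsilon_n^2$. This is exactly where the choices $U\asymp\sqrt{\log n}$ and $\eta=n^{-2}$ are needed, and where the constants come to depend only on $(\Gamma,\underline\sigma,\bar\sigma)$.
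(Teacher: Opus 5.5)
Your proposal is correct, but it takes a genuinely different route from the paper. The paper does not reprove anything here. It takes Theorem~4 of \cite{jiang2020general} as a black box, whose Hellinger rate is $\max\{\sqrt{2\log n},[n^{1/p}\sqrt{\log n}\,\mu_p(G)]^{p/(2p+2)}\}\sqrt{\log n/n}$ with $\mu_p(G)$ the $p$-th absolute moment of $G$. It then checks that choosing $p=\log n$ and using sub-Gaussian moment growth $\mu_p(G)\lesssim\sqrt{p}$ collapses this to $\lesssim\log n/\sqrt{n}$.

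You instead re-derive the likelihood-ratio/entropy argument that underlies the cited theorem. You use sub-Gaussianity directly through truncation at $U\asymp\sqrt{\log n}$, which keeps the sup-norm entropy at $(\log n)^2\asymp n\varepsilon_n^2/C_0^2$.

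The paper's route is two lines and inherits Jiang's general-moment machinery, so it also covers heavier-tailed $G$, where a $\sqrt{\log n}$ truncation and hence your entropy bound are no longer available. Your route is self-contained and shows where $(\log n)^2$, $\eta=n^{-2}$ and the $n^{-c_0}$ tail come from. Your recentering step is worth keeping. Since $\mu_p(G)$ is an uncentred moment, the paper's verification tacitly needs the same translation-equivariance reduction to get a constant depending only on $(\Gamma,\underline{\sigma},\bar{\sigma})$.

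Two details in Steps~2--3 should be stated differently, though neither is a real obstruction.

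First, no integrable function dominates every $f^{(i)}_{G'}$ off $[-U,U]$; take $G'=\delta_{x_0}$ with $|x_0|>U$. None is needed: on $E_U$ all data lie in $[-U,U]$, so simply set $g^{(j)}_i=0$ there.

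Second, do not transfer the separation to $h^{(j)}$ via the Hellinger triangle inequality, since $f^{(i)}_{G'}$ and $h^{(j)}_i$ are uncontrolled outside $[-U,U]$ for arbitrary $G'$. Use the far element $G'$ itself instead:
\begin{itemize}
\item On $[-U,U]$ you have $g^{(j)}_i\le f^{(i)}_{G'}+3\eta$.
\item By $\sqrt{a+b}\le\sqrt a+\sqrt b$ and Cauchy--Schwarz, $\int\sqrt{g^{(j)}_i f^{(i)}_G}\le 1-\mathfrak{H}^2(f^{(i)}_G,f^{(i)}_{G'})+\sqrt{6U\eta}$.
\item Hence the product over $i$ is at most $\exp\{-n\bar d^2(G,G')+\sqrt{6M}(\log n)^{1/4}\}$.
\end{itemize}
The only tail that enters is then that of the true $f^{(i)}_G$, as you intended. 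With these changes, your Markov and union-bound bookkeeping goes through as written.
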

\begin{proof}[Proof of Lemma~\ref{lemm:hetero_hellienger}]
We verify the rate obtained from Theorem 4 of \cite{jiang2020general}. In the notation of that theorem, the Hellinger rate is of the form
\begin{equation}
\label{eq:hellinger_rate}
    \varepsilon_n(n,G,p)
=
\max\left\{
\sqrt{2\log n},
\left[
n^{1/p}\sqrt{\log n}\cdot \mu_p(G)
\right]^{p/(2p+2)}
\right\}
\sqrt{\frac{\log n}{n}},
\end{equation}
where $G$ is the true bias distribution, and $\mu_p(G):=\left(\int |b|^p\,\dd G(b)\right)^{1/p}$. Since \(G\in\mathcal G_\Gamma\), the standard sub-Gaussian moment bound gives $\mu_p(G)\le c\Gamma\sqrt{p}$ for all $p\geq1$, where $c>0$ is a universal constant. 

We set $p=\log n$ in \eqref{eq:hellinger_rate}. For $A_n := \big[n^{1/p}\sqrt{\log n}\cdot \mu_p(G)\big]^{p/(2p+2)}$, we have
$$
n^{1/p}\sqrt{\log n}\cdot \mu_p(G) \le e\cdot \sqrt{\log n}\cdot c\Gamma\sqrt{\log n} = e c \Gamma\,\log n,\qquad \frac{p}{2p+2} = \frac{\log n}{2\log n + 2} \le \frac{1}{2}.
$$
Then $A_n \le (ec\Gamma\,\log n)^{1/2} = \sqrt{ec\Gamma}\cdot\sqrt{\log n}.$ Plugging back to \eqref{eq:hellinger_rate},
$$
\varepsilon_n(n,G,\log n)=\max\!\left\{\sqrt{2\log n},\,A_n\right\}\sqrt{\frac{\log n}{n}} \le C_1(\Gamma)\frac{\log n}{\sqrt{n}}\quad\text{with}\quad C_1(\Gamma) = \max\{\sqrt 2,\sqrt{ec\Gamma}\}.
$$
Therefore, with Theorem 4 of \cite{jiang2020general} and a fixed $c_0\geq2$, there exists $C_0$ depending only on $\Gamma,\underline{\sigma},\bar{\sigma}$ such that
$$
 \PP[G]{\bar{d}(G,\widehat{G})\geq C_0\frac{\log{n}}{\sqrt{n}}}\leq \exp{\left(-c_0\log{n}\right)} \quad\text{for all } n \in \mathbb{N}_{\geq2}.
$$
\end{proof}
In the following proposition, we consider to express $F_{G,i}(z\mid l)$ in terms of $f_G^{(i)}$.
\begin{proposition}
\label{prop:f_formula}
It holds that
$$
    F_{G,i}(z\mid l)
    =
    \frac{C(\Sigma_i)}{f_G^{(i)}(l)}
    \int_{-\infty}^{z}
    \exp\left(-\frac{u^2}{2\Sigma_{i,11}}\right)
    f_G\left(
    \frac{\Omega_{i,12}}{\Omega_{i,22}}u+l;
    \Omega_{i,22}^{-1}
    \right)\,\dd u,
$$
where $C(\Sigma_i)=(2\pi|\Sigma_i|\Omega_{i,22})^{-1/2}.$
\end{proposition}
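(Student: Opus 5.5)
The plan is a direct Gaussian computation. We write the joint density of $(\hat\theta_i^{\db}-\theta_i,\hat b_i)$ given $b_i=b$ and integrate against $G$. By Proposition-free bookkeeping in the notation of this appendix, $(\hat\theta_i^{\db}-\theta_i,\hat b_i)\mid b_i=b\sim \mathrm{N}\big((0,b)^\top,\Sigma_i\big)$. Indeed, $\hat\theta_i^{\db}=\biased_i-\hat b_i$ has mean $\theta_i$, variance $\tilde\sigma_i^2$, and covariance $\gamma_i\tilde\sigma_i\tau_i$ with $\hat b_i$. Hence the joint density of $(u,l)$, mixed over $b\sim G$, is
\[
p_i(u,l)=\int (2\pi)^{-1}|\Sigma_i|^{-1/2}\exp\Big(-\tfrac12 (u,\,l-b)\,\Omega_i\,(u,\,l-b)^\top\Big)\,\dd G(b).
\]
The marginal density of $\hat b_i$ is $f_G^{(i)}(l)$, so $F_{G,i}(z\mid l)=\int_{-\infty}^z p_i(u,l)\,\dd u/f_G^{(i)}(l)$. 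The proposal should also note that this coincides with the paper's definition of $F_{G,i}$, which is the distribution of $\hat\theta^{\db}_i-\theta_i$ given $\hat b_i$ with $b_i$ integrated over its posterior.

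The key step is to complete the square in $l-b$ inside the exponent. Set $w=l-b$. Then
\[
u^2\Omega_{i,11}+2u w\,\Omega_{i,12}+w^2\Omega_{i,22}=\Omega_{i,22}\Big(w+\tfrac{\Omega_{i,12}}{\Omega_{i,22}}u\Big)^2+u^2\Big(\Omega_{i,11}-\tfrac{\Omega_{i,12}^2}{\Omega_{i,22}}\Big).
\]
By the Schur complement identity for $2\times2$ inverses, $\Omega_{i,11}-\Omega_{i,12}^2/\Omega_{i,22}=1/\Sigma_{i,11}$. The first term gives a Gaussian kernel in $b$ with variance $\Omega_{i,22}^{-1}$, centered at $l+\frac{\Omega_{i,12}}{\Omega_{i,22}}u$, since $w+\frac{\Omega_{i,12}}{\Omega_{i,22}}u=(l+\frac{\Omega_{i,12}}{\Omega_{i,22}}u)-b$.

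We then rewrite $\exp(-\frac{\Omega_{i,22}}{2}(\cdot)^2)$ as $\sqrt{2\pi/\Omega_{i,22}}\,\varphi(\cdot;\Omega_{i,22}^{-1})$. Integrating against $G$ gives $f_G\big(\frac{\Omega_{i,12}}{\Omega_{i,22}}u+l;\Omega_{i,22}^{-1}\big)$, multiplied by $\exp(-u^2/(2\Sigma_{i,11}))$. Collecting constants yields $(2\pi)^{-1}|\Sigma_i|^{-1/2}(2\pi)^{1/2}\Omega_{i,22}^{-1/2}=(2\pi|\Sigma_i|\Omega_{i,22})^{-1/2}=C(\Sigma_i)$. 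Fubini/Tonelli justifies swapping the $u$- and $b$-integrals because the integrand is nonnegative, and integrating over $u\le z$ then gives the claim.

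There is no real obstacle here. The only care points are the sign convention for $\Omega_{i,12}$, which must be tracked correctly as $+\Omega_{i,12}/\Omega_{i,22}$ multiplying $u$, and the Schur-complement constant.
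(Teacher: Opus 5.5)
Your proposal is correct and follows essentially the same route as the paper. Both write the bivariate normal density of $(\hat\theta_i^{\db}-\theta_i,\hat b_i)$ given $b_i$ with precision $\Omega_i$, complete the square in the bias variable (you in $w=l-b$, the paper directly in $b$), use $\Omega_{i,11}-\Omega_{i,12}^2/\Omega_{i,22}=1/\Sigma_{i,11}$, absorb constants into $C(\Sigma_i)$, and exchange the $u$- and $b$-integrals. Your sign for $\Omega_{i,12}/\Omega_{i,22}$ and your constant $C(\Sigma_i)$ both match the paper's.
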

\begin{proof}
We start by rewriting the distribution of $(\hat\theta_i^{\mathrm{db}}-\theta_i,\hat b_i)=(z,l)$ given $b_i$,
$$
\begin{aligned}
p_i(z,l\mid b_i) &= (2\pi)^{-1}\lvert\Sigma_i\rvert^{-1/2}\exp{\left(-\frac{1}{2}\left[\Omega_{i,11}z^2+2\Omega_{i,12}z(l-b_i)+\Omega_{i,22}(l-b_i)^2\right]\right)} \\
&= C^{\prime}\exp{\left(-\frac{1}{2}\left[\Omega_{i,22}b_i^2-2(\Omega_{i,12}z+\Omega_{i,22}l)b_i\right]-\frac{1}{2}\left[\Omega_{i,11}z^2+2\Omega_{i,12}zl+\Omega_{i,22}l^2\right]\right)} \\
&=  C^{\prime}\exp{\left(-\frac{\Omega_{i,22}}{2}\left(b_i - \frac{\Omega_{i,12}z+\Omega_{22}l}{\Omega_{i,22}}\right)^2\right)}\exp{\left(-\frac{1}{2}\left(\Omega_{i,11}-\frac{\Omega_{i,12}^2}{\Omega_{i,22}}\right)z^2\right)} \\
&= C(\Sigma_i)\varphi\left(\frac{\Omega_{i,12}z+\Omega_{i,22}l}{\Omega_{i,22}}-b_i;\Omega_{i,22}^{-1}\right)\exp{\left(-\frac{1}{2}\left(\Omega_{i,11}-\frac{\Omega_{i,12}^2}{\Omega_{i,22}}\right)z^2\right)},
\end{aligned}
$$
with $C(\Sigma_i):=(2\pi|\Sigma_i|\Omega_{i,22})^{-1/2}$. Denote $p_G(u,l)=\int p_i(u,l\mid b_i)\,dG(b_i)$ as the joint marginal density of $(\hat\theta_i^{\mathrm{db}}-\theta_i,\hat b_i)$ under $G$, then we can express $F_{G,i}(z\mid l)$ as
$$
\begin{aligned}
F_{G,i}(z\mid l)
&=
\mathbb P_G(\hat\theta_i^{\mathrm{db}}-\theta_i\le z\mid \hat b_i=l)  \\
&=
\int_{-\infty}^{z}
\frac{p_G(u,l)}{f_G^{(i)}(l)}\,\dd u   \\
&=
\frac{1}{f_G^{(i)}(l)}
\int_{-\infty}^{z}
\int
p_i(u,l\mid b_i)\,\dd G(b_i)\,\dd u \\
&=
\frac{C(\Sigma_i)}{f_G^{(i)}(l)}
\int_{-\infty}^{z}
\exp\left(
-\frac{u^2}{2\Sigma_{i,11}}
\right)
\int
\varphi\left(
\frac{\Omega_{i,12}}{\Omega_{i,22}}u+l
- b_i;\Omega_{i,22}^{-1}\right)
\dd G(b_i)\,\dd u  \\
&=
\frac{C(\Sigma_i)}{f_G^{(i)}(l)}
\int_{-\infty}^{z}
\exp\left(
-\frac{u^2}{2\Sigma_{i,11}}
\right)
f_G\left(
\frac{\Omega_{i,12}}{\Omega_{i,22}}u+l;
\Omega_{i,22}^{-1}
\right)\,\dd u.
\end{aligned}
$$
\end{proof}

\subsection{Proof of Theorem~\ref{thm:covergence_pvalue}}
Since the map $x\mapsto 2\min\{x,1-x\}$ is $2$-Lipschitz on $[0,1]$, we have that under the null $H_{0i}:\theta_i=\theta_{i0}$,
$$
\begin{aligned}
&\left|
P_{G,\theta_{i0}}^{(i)}(\hat\theta_i^{\mathrm{db}},\hat b_i)
-
P_{\widehat G,\theta_{i0}}^{(i)}(\hat\theta_i^{\mathrm{db}},\hat b_i)
\right|  \\
&\qquad =
\left|
2\min\left\{
F_{G,i}(\hat{\theta}_i^{\db}-\theta_{i0}\mid \hat{b}_i),1-F_{G,i}(\hat{\theta}_i^{\db}-\theta_{i0}\mid \hat{b}_i)
\right\}\right.\\
&\qquad\qquad\left.-
2\min\left\{
F_{\widehat G,i}(\hat{\theta}_i^{\db}-\theta_{i0}\mid  \hat{b}_i),1-F_{\widehat G,i}(\hat{\theta}_i^{\db}-\theta_{i0}\mid  \hat{b}_i)
\right\}
\right| \\
&\qquad \le
2\left|
F_{G,i}(\hat{\theta}_i^{\db}-\theta_{i0}\mid  \hat{b}_i)-F_{\widehat G,i}(\hat{\theta}_i^{\db}-\theta_{i0}\mid  \hat{b}_i)
\right|  \\
&\qquad \le
2\sup_{z\in\mathbb R}
\left|
F_{G,i}(z\mid \hat{b}_i)-F_{\widehat G,i}(z\mid  \hat{b}_i)
\right|.
\end{aligned}
$$
Therefore, it suffices to prove
\[
    \frac1n\sum_{i=1}^n
    \EE[G]{
    \sup_{z\in\mathbb R}
    \left|
    F_{G,i}(z\mid  \hat{b}_i)-F_{\widehat G,i}(z\mid  \hat{b}_i)
    \right|
    }
    \lesssim
    \frac{(\log n)^{3/2}}{n^{(1-\check\gamma^2)/2}}.
\]
Let $\mathcal{A}$ be the event with the constant $C_0$ in Lemma~\ref{lemm:hetero_hellienger}:
$$
\mathcal A:=
    \left\{
    \bar d(G,\widehat G)
    <
    C_0\log n/\sqrt n
    \right\}.
$$
and for any distribution $\widetilde G$  supported on $\mathbb{R}$, define
$$
\begin{aligned}
    N_{\widetilde G}^{(i)}(z,l)&:=F_{\widetilde G,i}(z\mid l)\cdot f_{\widetilde G}^{(i)}(l), \\
    N_i(z,\widetilde G)&:= N_{\widetilde G}^{(i)}(z,\hat{b}_i),\text{ and }\\
    D_i(\widetilde G)&:=f_{\widetilde G}^{(i)}(\hat{b}_i).
\end{aligned}
$$
By these definitions, it holds that $ F_{\widetilde G,i}(z\mid \hat{b}_i)=N_i(z,\widetilde G)/D_i(\widetilde G)$. Let $ \widehat G_*=(\widehat G+G)/2$, then

$$
\begin{aligned}
&\left|
F_{G,i}(z\mid \hat{b}_i)-F_{\widehat G,i}(z\mid \hat{b}_i)
\right|  \\
&\qquad =
\left|
\frac{N_i(z,G)}{D_i(G)}
-
\frac{N_i(z,\widehat G)}{D_i(\widehat G)}
\right|  \\
&\qquad =
\left|
\frac{N_i(z,G)}{D_i(G)} -\frac{N_i(z,G)}{D_i(\widehat G_*)}+\frac{N_i(z,G)}{D_i(\widehat G_*)}-\frac{N_i(z,\widehat G)}{D_i(\widehat G_*)}+\frac{N_i(z,\widehat G)}{D_i(\widehat G_*)}-\frac{N_i(z,\widehat G)}{D_i(\widehat G)}
\right| \\
&\qquad \le
\frac{N_i(z,G)}{D_i(G)}\frac{|D_i(G)-D_i(\widehat G_*)|}{D_i(\widehat G_*)}+\frac{|N_i(z,G)-N_i(z,\widehat G)|}{D_i(\widehat G_*)}  + \frac{N_i(z,\widehat G)}{D_i(\widehat G)}\frac{|D_i(\widehat G_*)-D_i(\widehat G)|}{D_i(\widehat G_*)} \\
&\qquad \le \frac{|N_i(z,G)-N_i(z,\widehat G)|}{D_i(\widehat G_*)}
+
\frac{|D_i(G)-D_i(\widehat G)|}{D_i(\widehat G_*)}.
\end{aligned}
$$
In the last step, we used two facts: first, it holds that $N_i(z,\widetilde{G})/D_i(\widetilde{G})\in[0,1]$ for all $\widetilde{G}$ since they are conditional CDFs, and second, the map $\widetilde{G} \mapsto D_i(\widetilde{G})$ is linear, which implies that
$$
D_i(G)-D_i(\widehat{G}_*) = (D_i(G)-D_i(\widehat{G}))/2,\quad D_i(\widehat{G}_*)-D_i(\widehat{G}) = (D_i(G)-D_i(\widehat{G}))/2.
$$
Combining all of the above results, we get
\begin{align}
    & \frac{1}{n}\sum_
    {i=1}^n \EE[G]{\sup_{z\in\mathbb R}
    \left|
    F_{G,i}(z\mid  \hat{b}_i)-F_{\widehat G,i}(z\mid  \hat{b}_i)
    \right|}\nonumber\\
    &\quad\leq \PP[G]{\mathcal{A}^c} +\frac{1}{n}\sum_
    {i=1}^n \EE[G]{\sup_{z\in\mathbb{R}}\left|\frac{N_i(z,G)-N_i(z,\widehat{G})}{D_i(\widehat{G}_*)}\right|\mathds{1}(\mathcal{A})}+\frac{1}{n}\sum_
    {i=1}^n \EE[G]{\left|\frac{D_i(G)-D_i(\widehat{G})}{D_i(\widehat{G}_*)}\right|\mathds{1}(\mathcal{A})}\nonumber\\
    &\quad=:\PP[G]{\mathcal{A}^c}+\mathrm{I}+\mathrm{II}.\label{eq:hetero_final_goal}
\end{align}
 By Lemma~\ref{lemm:hetero_hellienger}, $\PP[G]{\mathcal{A}^c}\leq\exp{(-c_0\log{n})}$ with $c_0\geq2$. We consider the following two lemmas that bound terms $\mathrm{I}$ and $\mathrm{II}$.
\begin{lemm}
\label{lemm:hetero_n_ratio_bound}
    It holds that 
    $$
    \frac{1}{n}\sum_
    {i=1}^n \EE[G]{\sup_{z\in\mathbb{R}}\left|\frac{N_i(z,G)-N_i(z,\widehat{G})}{D_i(\widehat{G}_*)}\right|\mathds{1}(\mathcal{A})}\lesssim_{\Gamma,\underline{\sigma},\bar{\sigma},\underline{\gamma},\bar{\gamma}} \frac{(\log{n})^{3/2}}{n^{(1-\check{\gamma}^2)/2}}.
    $$
\end{lemm}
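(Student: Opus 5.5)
The plan is to bound the numerator difference pointwise by a deconvolution-type estimate, turning the Hellinger control of Lemma~\ref{lemm:hetero_hellienger} into control of a mixture density at a \emph{smaller} variance, and then to integrate out the $1/D_i(\widehat G_*)$ factor. Write $a_i=\Omega_{i,12}/\Omega_{i,22}$ and $s_i^2=\Omega_{i,22}^{-1}=\tau_i^2(1-\gamma_i^2)$.

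\textbf{Step 1 (reduction to a density difference).} By Proposition~\ref{prop:f_formula},
$$
\sup_z|N_i(z,G)-N_i(z,\widehat G)|\le C(\Sigma_i)\int e^{-u^2/(2\Sigma_{i,11})}\,\bigl|f_G(a_iu+\hat b_i;s_i^2)-f_{\widehat G}(a_iu+\hat b_i;s_i^2)\bigr|\,\dd u .
$$
So it suffices to control $\Delta_i:=\sup_x|f_G(x;s_i^2)-f_{\widehat G}(x;s_i^2)|$. The Gaussian weight then contributes only a constant depending on $\underline\sigma,\bar\sigma,\underline\gamma,\bar\gamma$.

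\textbf{Step 2 (Fourier deconvolution).} Let $\hat\mu(\omega)$ denote the characteristic function of $G-\widehat G$. Then $|\hat\mu(\omega)|e^{-\tau_i^2\omega^2/2}\le \|f_G^{(i)}-f_{\widehat G}^{(i)}\|_1=:L_i\lesssim \mathfrak H(f^{(i)}_G,f^{(i)}_{\widehat G})$. By Fourier inversion at variance $s_i^2$, splitting at frequency $M$,
$$
\Delta_i\lesssim \int_{|\omega|\le M}L_i\,e^{(\tau_i^2-s_i^2)\omega^2/2}\,\dd\omega+\int_{|\omega|>M}e^{-s_i^2\omega^2/2}\,\dd\omega\lesssim M L_i\,e^{\gamma_i^2\tau_i^2M^2/2}+e^{-s_i^2M^2/2}.
$$
We choose $M$ with $e^{\tau_i^2M^2/2}=1/L_i$, i.e.\ $M\asymp\sqrt{\log(1/L_i)}$. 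This gives $\Delta_i\lesssim L_i^{1-\gamma_i^2}\sqrt{\log(1/L_i)}\le L_i^{1-\check\gamma^2}\sqrt{\log n}$ on $\mathcal A$, using $L_i\le 2$ and the polynomial lower truncation $L_i\vee n^{-1}$. By concavity of $x\mapsto x^{1-\check\gamma^2}$ and Cauchy--Schwarz,
$$
\frac1n\sum_i L_i^{1-\check\gamma^2}\le \bigl(\bar d(G,\widehat G)\bigr)^{1-\check\gamma^2}\lesssim\Bigl(\frac{\log n}{\sqrt n}\Bigr)^{1-\check\gamma^2}
$$
on $\mathcal A$. This produces exactly the exponent $(1-\check\gamma^2)/2$.

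\textbf{Step 3 (the denominator).} Since $D_i(\widehat G_*)\ge D_i(G)/2$ and each ratio $|N_i(z,\cdot)|/D_i(\widehat G_*)$ is at most $2$, we truncate at $|\hat b_i|\le U$ with $U\asymp\sqrt{\log n}$; the tail has probability $\lesssim n^{-1}$ by sub-Gaussianity of $\hat b_i$. On $\{|\hat b_i|\le U\}$, if $\widehat G$ were fixed, $\EE[G]{h(\hat b_i)/f^{(i)}_G(\hat b_i)\,\mathds 1(|\hat b_i|\le U)}=\int_{-U}^U h(l)\,\dd l\le 2U\sup h$. This contributes the extra $\sqrt{\log n}$ factor, giving $(\log n)^{1/2}\cdot(\log n)^{1/2}\cdot(\log n)^{(1-\check\gamma^2)/2}\le(\log n)^{3/2}$ overall.

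\textbf{Main obstacle.} The main obstacle is that $\widehat G$ depends on $\hat b_i$, so the change of variables in Step~3 is not directly valid. The planned fix is to use the covering bound of Lemma~\ref{lemm:hetero_cardinality_cover}. Take an $\eta$-net (with $\eta=n^{-2}$) of $\mathcal F_n$ in $\|\cdot\|_{\infty,U'}$, replace $\widehat G$ by the nearest net element $G_k$ (also restricted to those elements satisfying the Hellinger bound), and control the approximation error $\eta/D_i$ by the same integral trick. Then bound $\max_k$ of the resulting averages: for fixed $G_k$ each summand has expectation at most $2U\Delta_i(G_k)$. The sum over $i$ is handled via a union bound over the net, whose log-cardinality is $O((\log n)^{2}\sqrt{\log n})$; alternatively, work on the event where Step~2 bounds $\Delta_i(\widehat G)$ deterministically as a function of $\widehat G$ alone and take the maximum over the net. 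I would first try the deterministic route: $\Delta_i(\widehat G)$ is bounded uniformly over all $\widetilde G$ in the Hellinger ball, so the sup over that ball can be pulled outside and only the $1/D_i(G)$ integration is needed. If the heterogeneity in $i$ obstructs this, I would fall back to the net argument.
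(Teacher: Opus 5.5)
Your Steps 1--2 are correct and differ from the paper only in technique. You control $\sup_z|N^{(i)}_G-N^{(i)}_{\widetilde G}|$ by the sup-norm $\Delta_i(\widetilde G)$ of the mixture difference at variance $\Omega_{i,22}^{-1}$, using the $L^1$ Fourier bound with an adaptive cutoff. The paper instead uses Cauchy--Schwarz in $u$, Plancherel, and the fixed cutoff $U_i=(\log n/\Sigma_{i,22})^{1/2}$. Either way one gets the averaged bound $\frac1n\sum_i\Delta_i(\widetilde G)\lesssim n^{-(1-\check\gamma^2)/2}$, up to logarithms, for each \emph{fixed} $\widetilde G$ in the Hellinger ball $\mathcal G_n$.

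The gap is the step you yourself flag, which is the real content of the lemma and is not resolved. After pulling the supremum out, your ``deterministic route'' needs a bound on $\frac1n\sum_i\sup_{\widetilde G\in\mathcal G_n}\Delta_i(\widetilde G)$. Step 2 only bounds $\sup_{\widetilde G\in\mathcal G_n}\frac1n\sum_i\Delta_i(\widetilde G)$. The ball constrains only the average $\bar d$; for a single task its definition gives only $\mathfrak H(f_G^{(i)},f_{\widetilde G}^{(i)})\le\sqrt n\,\bar d(G,\widetilde G)$. So the claim that $\Delta_i$ is small uniformly over the ball is unjustified once the $\tau_i$ vary. It is fine in the homoscedastic case, where $\mathfrak H_i$ does not depend on $i$.

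The route can be completed, but only with an ingredient missing from your sketch. Gaussian convolution is a Markov kernel, so $\tau\mapsto\mathfrak H(f_G(\cdot;\tau^2),f_{\widetilde G}(\cdot;\tau^2))$ is nonincreasing. Hence $F_i\,\mathfrak H_i(\widetilde G)^2\le\bar d(G,\widetilde G)^2$ with $F_i=n^{-1}\#\{j:\tau_j\le\tau_i\}$. Combined with your Step 2 this gives $\frac1n\sum_i\sup_{\widetilde G\in\mathcal G_n}\Delta_i(\widetilde G)\lesssim\delta_n^{1-\check\gamma^2}\frac1n\sum_{k=1}^n(n/k)^{(1-\check\gamma^2)/2}+n^{-(1-\check\gamma^2)}\lesssim\delta_n^{1-\check\gamma^2}$, where $\delta_n=C_0\log n/\sqrt n$. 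Your $1/f^{(i)}_G$ integration then applies directly, with no cover at all.

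Your fallback, as sketched, would not go through without the two devices the paper uses.
\begin{itemize}
\item First, an $\eta$-net of $\mathcal F_n$ in $\|\cdot\|_{\infty,U'}$ controls the marginals at variance $\tau_i^2$. Replacing $N_i(z,\widehat G)$ by $N_i(z,G_k)$ requires closeness of the mixtures at the smaller variance $\Omega_{i,22}^{-1}=\tau_i^2(1-\gamma_i^2)$, which sup-norm closeness at $\tau_i^2$ on a window does not directly give. The paper covers the augmented class in the metric $d_{N,U}$ (Lemma~\ref{lemm:hetero_cardinality_cover_N}).
\item Second, once $D_i(\widehat G_*)$ is lower bounded by $f_G^{(i)}(\hat b_i)/2$, the summands $\sup_z|N^{(i)}_G(z,\hat b_i)-N^{(i)}_{G_k}(z,\hat b_i)|/f^{(i)}_G(\hat b_i)$ are unbounded. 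They can be as large as $1/\min_{|l|\le U}f_G^{(i)}(l)$, which is polynomially large in $n$, so Hoeffding plus a union bound over the net yields nothing useful. The paper instead uses $D_i(\widehat G_*)\ge\{f_G^{(i)}+f_{G_{\hat j}}^{(i)}\}/4$ whenever $f_G^{(i)}+f_{G_{\hat j}}^{(i)}\ge2\eta$, so it works with $V_{ij}\in[0,1]$. Only afterwards does it bound $\mathbb E_G[V_{ij}(\hat b_i)]\le n^{-1}+2B_nS_{ij}$ by the $1/f_G^{(i)}$ integration. Capping your summands at $2$ would serve the same purpose.
\end{itemize}

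Finally, your logarithm count is off. The factor $\bar d^{\,1-\check\gamma^2}$ contributes $(\log n)^{1-\check\gamma^2}$, not $(\log n)^{(1-\check\gamma^2)/2}$. With the extra $\sqrt{\log n}$ from Step 2 you would get $(\log n)^{2-\check\gamma^2}$, which is worse than the claimed $(\log n)^{3/2}$ when $\check\gamma^2<1/2$. To remove that factor, use $\int_0^Me^{c\omega^2}\,\mathrm d\omega\lesssim e^{cM^2}/(cM)$ with $c=\gamma_i^2\tau_i^2/2\ge\underline\gamma^2\underline\sigma^2/2$.
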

\begin{lemm}
\label{lemm:hetero_marginal_ratio_bound}
    It holds that 
    $$
    \frac{1}{n}\sum_
    {i=1}^n \EE[G]{\left|\frac{D_i(G)-D_i(\widehat{G})}{D_i(\widehat{G}_*)}\right|\mathds{1}(\mathcal{A})} \lesssim_{\Gamma,\underline{\sigma},\bar{\sigma}} \frac{\log{n}}{\sqrt{n}}.
    $$
\end{lemm}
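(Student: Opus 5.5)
The plan is to turn the ratio into a Hellinger-type quantity and then handle the dependence between $\widehat G$ and $\hat b_i$ with a covering argument based on Lemma~\ref{lemm:hetero_cardinality_cover}.

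\emph{Step 1 (pointwise reduction).} Write $a=D_i(G)=f_G^{(i)}(\hat b_i)$ and $b=D_i(\widehat G)$, so that $D_i(\widehat G_*)=(a+b)/2$. The map $(a,b)\mapsto |a-b|/\{(a+b)/2\}$ has two useful properties.
\begin{itemize}
\item It is bounded by $2$.
\item It satisfies
$$
\frac{|a-b|}{(a+b)/2}=\frac{2|\sqrt a-\sqrt b|(\sqrt a+\sqrt b)}{a+b}\le 2\sqrt2\,\frac{|\sqrt a-\sqrt b|}{\sqrt a}.
$$
\end{itemize}
Suppose $\widehat G$ were a fixed prior $G'$ independent of $\hat b_i$. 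Since $\hat b_i\sim f_G^{(i)}$ marginally, Cauchy--Schwarz would give
$$
\EE[G]{\,|\sqrt{f_G^{(i)}}-\sqrt{f_{G'}^{(i)}}|/\sqrt{f_G^{(i)}}\,(\hat b_i)}=\int|\sqrt{f_G^{(i)}}-\sqrt{f_{G'}^{(i)}}|\sqrt{f_G^{(i)}}\le \sqrt2\,\mathfrak H(f_G^{(i)},f_{G'}^{(i)}).
$$
Averaging over $i$ and applying Cauchy--Schwarz once more would bound the mean by $4\,\bar d(G,G')$. On $\mathcal A$ this is $\lesssim \log n/\sqrt n$.

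\emph{Step 2 (decoupling by a net).} $\widehat G$ depends on all the $\hat b_i$, so Step~1 cannot be applied directly. I would handle this as follows.
\begin{itemize}
\item Fix $U=C\sqrt{\log n}$ and $\varepsilon=n^{-2}$.
\item Take an $\varepsilon$-net $\{\boldsymbol h^{(k)}\}_{k\le N}$ of $\mathcal F_n$ in $\lVert\cdot\rVert_{\infty,U}$. By Lemma~\ref{lemm:hetero_cardinality_cover}, $\log N\lesssim(\log n)^{2}$ (up to a $\sqrt{\log n}$ factor that will not matter).
\item For each fixed $k$, define $X_i^{(k)}=|f_G^{(i)}-h^{(k)}_i|/\{(f_G^{(i)}+h^{(k)}_i)/2\}$ evaluated at $\hat b_i$. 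These are independent across $i$ by Assumption~\ref{assm:subgaussian} and lie in $[0,2]$.
\item Hoeffding's inequality plus a union bound over the $N$ net elements gives, with probability $1-n^{-2}$,
$$
\tfrac1n\textstyle\sum_i X_i^{(k)}\le \tfrac1n\textstyle\sum_i\EE{X_i^{(k)}}+C\sqrt{\log N/n}\quad\text{uniformly in }k.
$$
Here $\sqrt{\log N/n}\lesssim \log n/\sqrt n$ (or $(\log n)^{5/4}/\sqrt n$, still within the claimed rate up to logs). On the failure event I use the bound $2$, which contributes only $O(n^{-2})$.
\end{itemize}

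\emph{Step 3 (approximation errors), which I expect to be the main obstacle.} Choose $k$ with $\boldsymbol h^{(k)}$ within $\varepsilon$ of $(f_{\widehat G}^{(i)})_i$ on $[-U,U]$. Two errors must be controlled.

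(a) \emph{Swapping $f_{\widehat G}^{(i)}$ for $h^{(k)}_i$ inside the ratio.} For indices with $|\hat b_i|\le U$ and $f_G^{(i)}(\hat b_i)\ge \eta:=n^{-1}$, the ratio is Lipschitz in the second argument with constant $\lesssim 1/f_G^{(i)}(\hat b_i)$. The swap therefore costs $\lesssim \varepsilon/\eta=n^{-1}$. The remaining indices contribute at most $2$ each, and their expected fraction is bounded by
$$
\PP{|\hat b_i|>U}+\int_{|l|\le U,\ f_G^{(i)}(l)<\eta}f_G^{(i)}(l)\,\dd l\le n^{-2}+2U\eta\lesssim \sqrt{\log n}/n,
$$
using that $\hat b_i$ is sub-Gaussian (with variance proxy depending on $\Gamma,\bar\sigma$).

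(b) \emph{Relating the mean for the net element to $\bar d(G,\widehat G)$.} By Step~1, the expected value $\EE{X_i^{(k)}}$ is bounded by a Hellinger distance between $f_G^{(i)}$ and $h^{(k)}_i$. I then need
$$
\bar d(G,\boldsymbol h^{(k)})\lesssim \bar d(G,\widehat G)+\text{(negligible)}.
$$
I would obtain this by bounding the squared Hellinger distance between $h^{(k)}_i$ and $f_{\widehat G}^{(i)}$ as follows.
\begin{itemize}
\item On $[-U,U]$, use $|\sqrt x-\sqrt y|\le\sqrt{|x-y|}$, giving a contribution $\le 2U\varepsilon$.
\item Outside $[-U,U]$, use tail mass. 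The mixture density's tails are controlled because $\widehat G$ is supported in $[\min_i\hat b_i,\max_i\hat b_i]$, which lies within $[-U/2,U/2]$ with probability $1-n^{-2}$.
\item Alternatively, restrict the net to elements that are also Hellinger-close to $G$ and simply intersect with $\mathcal A$.
\end{itemize}

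Combining Steps~1--3 with the bound $C_0\log n/\sqrt n$ on $\bar d(G,\widehat G)$ on $\mathcal A$ from Lemma~\ref{lemm:hetero_hellienger} yields the stated rate, with constants depending on $\Gamma,\underline\sigma,\bar\sigma$. If the $\sqrt{\log N/n}$ term turns out to carry an extra $(\log n)^{1/4}$, it is still absorbed by the $(\log n)^{3/2}$ rate of Theorem~\ref{thm:covergence_pvalue}.
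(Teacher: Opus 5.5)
Your proposal is correct and is essentially the paper's argument. You truncate at $|\hat b_i|\le B_n\asymp\sqrt{\log n}$ and use a sup-norm cover on $[-B_n,B_n]$ to swap $\widehat G$ for a deterministic center; the paper pays $O(\eta B_n)$ for this swap via $\mathbb{E}[\mathds{1}(|\hat b_i|\le B_n)/f_G^{(i)}(\hat b_i)]=2B_n$, while you threshold $f_G^{(i)}$ at $n^{-1}$, which is equivalent. You then apply Hoeffding plus a union bound, and bound the mean by Cauchy--Schwarz against the Hellinger distance using a proper cover of the local Hellinger ball on $\mathcal A$. That proper cover is your stated ``alternative'' and is exactly what the paper does; your triangle-inequality route also works, provided the net elements are genuine probability densities.

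One small point concerns the size of the net. With $U\asymp\sqrt{\log n}$ and $\varepsilon=n^{-2}$, Lemma~\ref{lemm:hetero_cardinality_cover} gives $\log N\lesssim(\log n)^2$ with no extra factor, so your hedge is unnecessary. It matters, because a $(\log n)^{5/4}/\sqrt n$ term would suffice for Theorem~\ref{thm:covergence_pvalue} but would not prove the lemma as stated.
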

Plugging in the conclusions of the above two lemmas in \eqref{eq:hetero_final_goal}, the assertion of the theorem follows.
To prove Lemma~\ref{lemm:hetero_n_ratio_bound}, we consider the following covering lemma.
\begin{lemm}
\label{lemm:hetero_cardinality_cover_N}
    For $U>0$, define the distance 
    \begin{align}
    \label{eq:hetero_metric_N}
        d_{N,U}(G_1,G_2)
        :=
        \max_{1\le i\le n}
        \bigg[
        \left\|f_{G_1}^{(i)}-f_{G_2}^{(i)}\right\|_{\infty,U}
        +
        \sup_{z\in\mathbb R}\sup_{|l|\le U}
        \left|
        N_{G_1}^{(i)}(z,l)-N_{G_2}^{(i)}(z,l)
        \right|
        \bigg].
    \end{align}
    Then for any $0<\varepsilon<1/\sqrt{2\pi}$ and $U>0$, it holds that
    $$
    \log \mathcal N
    \left(\varepsilon,\mathcal{G},d_{N,U}\right)
    \lesssim_{\underline{\sigma},\bar{\sigma},\bar{\gamma}}
    \left(\log{\left(\frac1\varepsilon\right)}\right)^2
    \max\left\{
    \frac{U}{\sqrt{|\log\varepsilon|}},1
    \right\}.
    $$
    Moreover, the same bound holds for a proper cover of any subset of
    $\mathcal G$, up to changing the covering radius by a universal constant.
\end{lemm}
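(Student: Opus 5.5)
The plan is to reduce the covering of $\mathcal G$ under $d_{N,U}$ to the covering in Lemma~\ref{lemm:hetero_cardinality_cover}, using the integral representation of Proposition~\ref{prop:f_formula}. The first summand of $d_{N,U}$ is exactly the $\|\cdot\|_{\infty,U}$ distance on $\mathcal F_n$. For the second summand, Proposition~\ref{prop:f_formula} gives
\[
N_{G'}^{(i)}(z,l)=C(\Sigma_i)\int_{-\infty}^{z}\exp\Big(-\frac{u^2}{2\Sigma_{i,11}}\Big)f_{G'}\Big(\frac{\Omega_{i,12}}{\Omega_{i,22}}u+l;\Omega_{i,22}^{-1}\Big)\,\dd u .
\]
This is a linear functional of the mixture density $f_{G'}(\cdot;\Omega_{i,22}^{-1})$, which has variance parameter $\Omega_{i,22}^{-1}=\tau_i^2(1-\gamma_i^2)$. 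That variance lies in $[\underline\sigma^2(1-\bar\gamma^2),\bar\sigma^2]$, so it is again a Gaussian location mixture with variance bounded above and below. I will therefore cover an enlarged family of mixture-density vectors
\[
\big(f_{G'}(\cdot;\Sigma_{i,22}),\,f_{G'}(\cdot;\Omega_{i,22}^{-1})\big)_{i\le n},
\]
which has $2n$ heterogeneous variances, all within constants depending on $\underline\sigma,\bar\sigma,\bar\gamma$. Lemma~\ref{lemm:hetero_cardinality_cover}, applied with $2n$ coordinates, covers this family at the same entropy order, and its bound does not depend on the number of coordinates.

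The key step is to bound $\sup_z\sup_{|l|\le U}|N_{G_1}^{(i)}-N_{G_2}^{(i)}|$ by the sup-distance of the second family on a slightly enlarged window. I split the $u$-integral at $|u|\le K$, with $K\asymp\sqrt{\log(1/\varepsilon)}$.
\begin{itemize}
\item On $|u|\le K$, the argument $\frac{\Omega_{i,12}}{\Omega_{i,22}}u+l$ stays in $[-U',U']$, where $U'=U+cK$ and $|\Omega_{i,12}/\Omega_{i,22}|=|\gamma_i|\tau_i/\tilde\sigma_i$ is bounded. Hence this part contributes at most $C(\Sigma_i)\sqrt{2\pi\Sigma_{i,11}}\,\|f_{G_1}-f_{G_2}\|_{\infty,U'}$.
\item On $|u|>K$, I use $f_{G'}\le(2\pi\Omega_{i,22}^{-1})^{-1/2}$ uniformly. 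The tail is then $\lesssim e^{-K^2/(2\bar\sigma^2)}\le\varepsilon$ for a suitable constant in $K$.
\end{itemize}
Consequently, an $\varepsilon$-cover in $\|\cdot\|_{\infty,U'}$ of the enlarged family yields a $C\varepsilon$-cover in $d_{N,U}$. Since $U'/\sqrt{|\log\varepsilon|}\lesssim U/\sqrt{|\log\varepsilon|}+1$, the entropy bound $(\log(1/\varepsilon))^2\max\{U/\sqrt{|\log\varepsilon|},1\}$ is preserved up to constants. Rescaling $\varepsilon$ by a constant finishes the main claim; for $\varepsilon$ near $1/\sqrt{2\pi}$, constants absorb the change.

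One subtlety is that Lemma~\ref{lemm:hetero_cardinality_cover} covers the image set $\mathcal F_n$, whereas we want a cover indexed by priors. Each cover element can be taken of the form $f_{G_j}$, because the Jiang–Zhang construction uses discrete priors that match moments. Alternatively, one passes to a proper cover in the standard way: replacing each ball center by a point of the set inside the ball doubles the radius. The same doubling argument gives the final statement about proper covers of subsets of $\mathcal G$.

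I expect the main obstacle to be checking that Lemma~\ref{lemm:hetero_cardinality_cover} genuinely applies to the augmented $2n$-vector with variances $\Omega_{i,22}^{-1}$. Concretely, one must confirm that its implicit constants depend only on the variance bounds, now involving $1-\bar\gamma^2$. If that is unclear, I would redo the moment-matching construction directly: approximate $G'$ on each interval of length $\asymp\sqrt{\log(1/\varepsilon)}$ by a discrete prior with $O(\log(1/\varepsilon))$ atoms matching moments. Taylor-expanding the Gaussian kernel then controls both $f$ families, and hence $N$, simultaneously.
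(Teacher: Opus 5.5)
Your proposal is correct and follows essentially the same route as the paper. Both arguments augment $\mathcal F_n$ with the mixtures $f_{G'}(\cdot;\Omega_{i,22}^{-1})$ and apply Lemma~\ref{lemm:hetero_cardinality_cover} to the resulting $2n$-coordinate family on the enlarged window $U+C\sqrt{\log(1/\varepsilon)}$. Both then split the $u$-integral from Proposition~\ref{prop:f_formula} at radius $\asymp\sqrt{\log(1/\varepsilon)}$, control the tail with a uniform sup-bound on the mixture density, and obtain a proper cover by the standard halving-of-the-radius argument. Your explicit point that the cover centers must themselves be of the form $f_{G_j}$ (resolved via properness or the discrete-prior construction) is something the paper uses only implicitly when it asserts the existence of $G_1,\ldots,G_J$.
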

\begin{proof}
Denote $a_i:=\Omega_{i,12}/\Omega_{i,22}$. Under Assumption~\ref{assm:subgaussian}, there exists a constant
    $C=C(\underline{\sigma},\bar{\sigma},\bar{\gamma})$
    such that
    \[
        |a_i|\le C,\qquad
        C(\Sigma_i)\le C,\qquad
        \Sigma_{i,11}\le C,\qquad
        \Sigma_{i,22}\in[C^{-1},C],
        \qquad
        \Omega_{i,22}^{-1}\in[C^{-1},C],
    \]
    for all $i\in\{1,\ldots,n\}$. Let
    $R_\varepsilon=C_1\sqrt{|\log\varepsilon|}$, with $C_1$ sufficiently large, and define $$U_\varepsilon:=U+C R_\varepsilon.$$ We claim that for all $l\in\mathbb{R}$,
    \begin{equation}
    \label{eq:hetero_entropy_tail}
        \sup_{\widetilde G\in\mathcal G}
        \sup_{1\le i\le n}
        C(\Sigma_i)
        \int_{|u|>R_\varepsilon}
        \exp\left(-\frac{u^2}{2\Sigma_{i,11}}\right)
        f_{\widetilde G}\left(a_i u+l;\Omega_{i,22}^{-1}\right)
        \dd u
        \le
        \frac{\varepsilon}{4}.
    \end{equation}
    In particular, since $\Omega_{i,22}^{-1} \ge C^{-1}$ and $\|\varphi(\cdot;\sigma^2)\|_\infty \le 1/\sqrt{2\pi\sigma^2}$, we have $f_{\widetilde G}(t;\Omega_{i,22}^{-1}) \le 1/\sqrt{2\pi C^{-1}}$. Then there is a constant $K_1=K_1(\underline{\sigma},\bar{\sigma},\bar{\gamma})$ such that
    $\sup_{\widetilde G}\,\sup_{i}\,\sup_{t \in \mathbb R}f_{\widetilde G}(t;\Omega_{i,22}^{-1}) \le K_1.$ Combining with standard Gaussian tail bound and $\Sigma_{i,11}\leq C$, we have
    $$
    \begin{aligned}
        \int_{|u|>R_\varepsilon}
        \exp\left(-\frac{u^2}{2\Sigma_{i,11}}\right)
        f_{\widetilde G}\left(a_i u+l;\Omega_{i,22}^{-1}\right)
        \dd u
        &\le K_1\int_{|u|>R_\varepsilon}\exp\left(-\frac{u^2}{2\Sigma_{i,11}}\right)\dd u\\
        &\leq 2K_1\sqrt{2\pi C}\exp\left(-\frac{C_1^2|\log \varepsilon|}{2C}\right).
    \end{aligned}
    $$
    Since $C(\Sigma_i)\leq C$, choose $K_2=K_2(\underline{\sigma},\bar{\sigma},\bar{\gamma}):=2CK_1\sqrt{2\pi C}$, and choose $C_1$ large enough so that $K_2 \varepsilon^{C_1^2/(2C)-1} \le 1/4$ uniformly for $\varepsilon \in (0, 1/\sqrt{2\pi})$; this is possible by taking $C_1^2/(2C)$ as large as needed. Then we plug back and prove \eqref{eq:hetero_entropy_tail}.

    Now consider the augmented class of Gaussian mixture densities
    $$
    \begin{aligned}
        \mathcal F_n^{\mathrm{aug}}
        :=
        \bigg\{&
        \Big(
        f_{\widetilde G}(\cdot;\Sigma_{1,22}),\ldots,
        f_{\widetilde G}(\cdot;\Sigma_{n,22}),
        f_{\widetilde G}(\cdot;\Omega_{1,22}^{-1}),\ldots,
        f_{\widetilde G}(\cdot;\Omega_{n,22}^{-1})
        \Big):
        \widetilde G\in\mathcal{G}
        \bigg\}.
    \end{aligned}
    $$
    has $2n$ Gaussian convolutions of $\widetilde{G}$, all with variances bounded above and below by constants depending only on $\underline{\sigma},\bar{\sigma},\bar{\gamma}$. Hence, the proof of Lemma \ref{lemm:hetero_cardinality_cover} applies verbatim to this augmented class. Therefore, there exist $G_1,...,G_J$ with
    $$
    \log J\lesssim_{\underline{\sigma},\bar{\sigma},\bar{\gamma}}\left(\log{\left(\frac1\varepsilon\right)}\right)^2\max\left\{\frac{U_\varepsilon}{\sqrt{|\log\varepsilon|}},1\right\}
    \lesssim\left(\log{\left(\frac1\varepsilon\right)}\right)^2\max\left\{\frac{U}{\sqrt{|\log\varepsilon|}},1\right\},
    $$
   such that for every $\widetilde G\in\mathcal{G}$, there is some
    $j\in\{1,\ldots,J\}$ satisfying
    $$
    \begin{aligned}
        \max_{1\le i\le n}
        \left\|f_{\widetilde G}(\cdot;\Sigma_{i,22})
        -f_{G_j}(\cdot;\Sigma_{i,22})\right\|_{\infty,U}
        +
        \max_{1\le i\le n}
        \left\|f_{\widetilde G}(\cdot;\Omega_{i,22}^{-1})
        -f_{G_j}(\cdot;\Omega_{i,22}^{-1})\right\|_{\infty,U_\varepsilon}
        \le c\varepsilon,
    \end{aligned}
    $$
    for a sufficiently small constant $c>0$. 
    
    We claim that the same $\{G_j\}$ also approximates
    $N_{\widetilde G}^{(i)}$ uniformly on $\mathbb R\times[-U,U]$. Indeed, by Proposition~\ref{prop:f_formula}, for $|l|\le U$,
    $$
    \begin{aligned}
        &\sup_{z\in\mathbb R}
        \left|
        N_{\widetilde G}^{(i)}(z,l)-N_{G_j}^{(i)}(z,l)
        \right|
        \\
        &\quad\le
        C(\Sigma_i)
        \int_{\mathbb R}
        \exp\left(-\frac{u^2}{2\Sigma_{i,11}}\right)
        \left|
        f_{\widetilde G}\left(a_i u+l;\Omega_{i,22}^{-1}\right)
        -
        f_{G_j}\left(a_i u+l;\Omega_{i,22}^{-1}\right)
        \right|
        \dd u
        \\
        &\quad\le
        Cc\varepsilon
        +
        C(\Sigma_i)
        \int_{|u|>R_\varepsilon}
        \exp\left(-\frac{u^2}{2\Sigma_{i,11}}\right)
        \left[
        f_{\widetilde G}\left(a_i u+l;\Omega_{i,22}^{-1}\right)
        +
        f_{G_j}\left(a_i u+l;\Omega_{i,22}^{-1}\right)
        \right]
        \dd u
        \\
        &\quad\le
        Cc\varepsilon+\epsilon/2\leq \varepsilon,
    \end{aligned}
    $$
    where we use the fact that $|a_i u+l|\le U_\varepsilon$ whenever $|u|\le R_\varepsilon$ and $|l|\le U$, combined with \eqref{eq:hetero_entropy_tail}, and by choosing a $c>0$ sufficiently small. Thus, the same cover is an $\varepsilon$-cover under $d_{N,U}$, after adjusting constants.

    Indeed, if we restrict to any subset of
    $\mathcal{G}$, an improper
    $\varepsilon/2$-cover can be converted into a proper
    $\varepsilon$-cover by replacing each center whose ball intersects the
    subset by one point in that intersection. This proves the last claim.
\end{proof}

\begin{proof}[Proof of Lemma~\ref{lemm:hetero_n_ratio_bound}]
Since $b_i\sim G\in\mathcal{G}_{\Gamma}$, and  $\hat{b}_i = b_i +\varepsilon_i$ with $\varepsilon_i\sim \mathrm{N}(0,\tau_i^2)$, then $\hat b_i$ is a sub-Gaussian with a proxy constant depending only on $(\Gamma,\bar\sigma)$ (since $\tau_i^2\leq\bar{\sigma}^2$). In particular, there exists a constant
$c>0$, depending only on $(\Gamma,\bar\sigma)$, such that $\mathbb P(|\hat b_i|>t)\le 2e^{-ct^2}$. Hence, we can choose $B_n:=C_B\sqrt{\log n}$ with $C_B$ sufficiently large such that 
$$
\PP[G]{|\hat b_i|>B_n}\le \frac{1}{n}.
$$
Also let $\eta:=n^{-2}$. Define the local class
$$
\mathcal G_n:=\left\{\widetilde G\in\mathcal{G}:\;\bar d(G,\widetilde G)<C_0\frac{\log n}{\sqrt n}\right\},
$$
where $C_0$ is the constant in Lemma~\ref{lemm:hetero_hellienger}. By applying Lemma~\ref{lemm:hetero_cardinality_cover_N} with $U=B_n$, $\varepsilon=\eta$, there exists a proper $\eta$-cover $\{G_j:j=1,\ldots,J\}\subseteq \mathcal G_n$ of $\mathcal G_n$ under $d_{N,B_n}$ satisfying
$$
\log J   \lesssim_{\underline{\sigma},\bar{\sigma},\bar{\gamma}} (\log n)^2 \max\left\{ \frac{B_n}{\sqrt{\log n}},1 \right\}\lesssim(\log n)^2 .
$$
 On the event $\mathcal A$, we have $\widehat G\in\mathcal G_n$, and
    hence there exists a random index $\widehat j\in\{1,\ldots,J\}$ such
    that $d_{N,B_n}(\widehat G,G_{\widehat j})\le \eta$, that is
    \begin{equation}
    \label{eq:hetero_cover_Ghat}
        \sup_{z\in\mathbb R}
        \left|
        N_{\widehat G}^{(i)}(z,l)-N_{G_{\widehat j}}^{(i)}(z,l)
        \right|
        \le \eta,
        \qquad
        \left|
        f_{\widehat G}^{(i)}(l)-f_{G_{\widehat j}}^{(i)}(l)
        \right|
        \le \eta \qquad \text{for all }|l| \le B_n,\ i \le n.
    \end{equation}
    For each deterministic $G_j$, define
    $$
    V_{ij}(l)
        :=
        \frac{
        \sup_{z\in\mathbb R}
        \left|
        N_G^{(i)}(z,l)-N_{G_j}^{(i)}(z,l)
        \right|
        }{
        f_G^{(i)}(l)+f_{G_j}^{(i)}(l)
        },
    $$
    and $V_{ij}(l)\in[0,1]$ since $0\le N_{\widetilde G}^{(i)}(z,l)\le f_{\widetilde G}^{(i)}(l)$ for
    any $\widetilde G$.

    We claim that on $\mathcal{A}$,
    \begin{align}
    \label{eq:hetero_ratio_cover_reduction}
        &\sup_{z\in\mathbb R}
        \left|
        \frac{
        N_i(z,G)-N_i(z,\widehat G)
        }{
        D_i(\widehat G_*)
        }
        \right|
        \lesssim
        \mathds 1(|\hat b_i|>B_n)
        +
        V_{i\widehat j}(\hat b_i)
        +
        \eta
        \frac{\mathds 1(|\hat b_i|\le B_n)}
        {f_G^{(i)}(\hat b_i)}.
    \end{align}
    We prove this by case analysis. For every $z\in\mathbb{R}$, write
    $$
    \frac{|N_i(z,G) - N_i(z,\widehat{G})|}{D_i(\widehat{G}_*)} \le \frac{|N_i(z,G) - N_i(z, G_{\hat j})|}{D_i(\widehat{G}_*)} + \frac{|N_i(z, G_{\hat j}) - N_i(z, \widehat{G})|}{D_i(\widehat{G}_*)}.
    $$
    For the second term, by \eqref{eq:hetero_cover_Ghat} applied at $l =\hat{b}_i$ when $|\hat b_i| \le B_n$, the numerator is $\le \eta$. The denominator satisfies $D_i(\widehat{G}_*) = (f_G^{(i)}(\hat b_i) + f_{\widehat{G}}^{(i)}(\hat b_i))/2 \ge f_G^{(i)}(\hat b_i)/2$, hence
    $$
    \frac{|N_i(z, G_{\hat j}) - N_i(z, \widehat{G})|}{D_i(\widehat{G}_*)} \le \frac{2\eta}{f_G^{(i)}(\hat b_i)} \quad \text{on } \{|\hat b_i| \le B_n\}.
    $$
    For the first term, for $|l|\le B_n$. If
    $f_G^{(i)}(l)+f_{G_{\widehat j}}^{(i)}(l)\ge 2\eta$, then
    $D_i(\widehat G_*)\ge
    \{f_G^{(i)}(l)+f_{G_{\widehat j}}^{(i)}(l)\}/4$, and hence
    $$
        \frac{
        \sup_z|N_G^{(i)}(z,l)-N_{G_{\widehat j}}^{(i)}(z,l)|
        }{
        D_i(\widehat G_*)
        }
        \le
        4V_{i\widehat j}(l).
    $$
    On the other hand, if
    $f_G^{(i)}(l)+f_{G_{\widehat j}}^{(i)}(l)<2\eta$, then both $f_G^{(i)}(l)<2\eta$ and $f_{G_{\widehat j}}^{(i)}(l)<2\eta$, then $\sup_z|N_G^{(i)}(z,l)-N_{G_{\widehat j}}^{(i)}(z,l)|\le 2\eta$. Combined with $D_i(\widehat G_*)\ge f_G^{(i)}(l)/2$, then 
    $$
    \frac{|N_i(z, G) - N_i(z, G_{\hat j})|}{D_i(\widehat{G}_*)} \le \frac{4\eta}{f_G^{(i)}(l)}.
    $$
    For $|\hat{b}_i|>B_n$, then $|N_i(z,G) - N_i(z, \widehat{G})| \le f_G^{(i)}(\hat b_i) + f_{\widehat G}^{(i)}(\hat b_i) = 2D_i(\widehat{G}_*)$, so
    $$
    \frac{|N_i(z, G) - N_i(z, \widehat G)|}{D_i(\widehat{G}_*)} \le 2\, \mathds{1}(|\hat b_i| > B_n).
    $$
    Combining all together claims \eqref{eq:hetero_ratio_cover_reduction}. We further take expectations in
    \eqref{eq:hetero_ratio_cover_reduction},
    \begin{align}
    \label{eq:hetero_bound_N_replacement}
        \mathrm I
        &\lesssim
        \frac1n
        +
        \eta\cdot \frac1n\sum_{i=1}^n
        \EE[G]{
        \frac{\mathds 1(|\hat b_i|\le B_n)}
        {f_G^{(i)}(\hat b_i)}
        }
        +
        \EE[G]{
        \max_{1\le j\le J}
        \frac1n\sum_{i=1}^n V_{ij}(\hat b_i)
        }
        \nonumber\\
        &=
        \frac1n
        +
        2B_n\eta
        +
        \EE[G]{
        \max_{1\le j\le J}
        \frac1n\sum_{i=1}^n V_{ij}(\hat b_i)
        },
    \end{align}
    where the second term is justified by
    $$
    \EE[G]{\frac{\mathds{1}(|\hat b_i| \le B_n)}{f_G^{(i)}(\hat b_i)}}= \int_{-B_n}^{B_n} \frac{1}{f_G^{(i)}(t)} f_G^{(i)}(t)\, \dd t = 2 B_n.
    $$
    For a fixed $j$, the random variables
    $V_{ij}(\hat b_i)$ are independent and take values in $[0,1]$.
    Hence, by Hoeffding's lemma and a union bound,
    \begin{align}
        &\mathbb E_G\left[
\max_{1\le j\le J}
\frac1n\sum_{i=1}^n V_{ij}(\hat b_i)
\right] \nonumber\\
&\quad\le
\max_{1\le j\le J}
\frac1n\sum_{i=1}^n
\mathbb E_G\left[V_{ij}(\hat b_i)\right] +
\mathbb E_G\left[
\max_{1\le j\le J}
\frac1n\sum_{i=1}^n
\left\{
V_{ij}(\hat b_i)
-
\mathbb E_G[V_{ij}(\hat b_i)]
\right\}
\right]  \nonumber\\
&\quad\le
\max_{1\le j\le J}
\frac1n\sum_{i=1}^n
\mathbb E_G\left[V_{ij}(\hat b_i)\right]
+
\inf_{\lambda>0}
\left\{
\frac{\log J}{\lambda}
+
\frac{\lambda}{8n}
\right\} \nonumber \\
&\quad\le
\max_{1\le j\le J}
\frac1n\sum_{i=1}^n
\mathbb E_G\left[V_{ij}(\hat b_i)\right]
+
C\sqrt{\frac{\log J}{n}}. \label{eq:hetero_finite_maximal}
    \end{align}
    We next control the deterministic means in
    \eqref{eq:hetero_finite_maximal}. For fixed $i$ and $j$, set
    $$
        S_{ij}
        :=
        \sup_{z\in\mathbb R}\sup_{|l|\le B_n}
        \left|
        N_G^{(i)}(z,l)-N_{G_j}^{(i)}(z,l)
        \right|.
    $$
    Since $V_{ij}\le 1$ and
    $f_G^{(i)}(l)/\{f_G^{(i)}(l)+f_{G_j}^{(i)}(l)\}\le 1$, we have
    \begin{align}
        \EE[G]{V_{ij}(\hat b_i)}
        &\le
        \PP[G]{|\hat b_i|>B_n}
        +
        \int_{-B_n}^{B_n}
        \sup_{z\in\mathbb R}
        \left|
        N_G^{(i)}(z,l)-N_{G_j}^{(i)}(z,l)
        \right|
        \dd l
        \nonumber\\
        &\le
        \frac{1}{n}
        +
        2B_nS_{ij}.
        \label{eq:hetero_mean_V_bound}
    \end{align}
    It remains to bound $S_{ij}$. By Proposition~\ref{prop:f_formula}, 
   \begin{align}
    & \left\lvert N_{G}^{(i)}(z,l) - N_{G_j}^{(i)}(z,l)\right\rvert\big/C(\Sigma_i) \nonumber \\
    =&\left\lvert\int_{-\infty}^z\exp{\left(-\frac{u^2}{2\Sigma_{i,11}}\right)}\left(f_{G}\left(\frac{\Omega_{i,12}}{\Omega_{i,22}}u+l;\Omega_{i,22}^{-1}\right)-f_{G_j}\left(\frac{\Omega_{i,12}}{\Omega_{i,22}}u+l;\Omega_{i,22}^{-1}\right)\right)~\dd u\right\rvert \nonumber \\
    \leq& \left[\int_{-\infty}^z \exp{\left(-\frac{u^2}{\Sigma_{i,11}}\right)}~\dd u\right]^\frac{1}{2}\left[\int_{-\infty}^z\left(f_{G}\left(\frac{\Omega_{i,12}}{\Omega_{i,22}}u+l;\Omega_{i,22}^{-1}\right)-f_{G_j}\left(\frac{\Omega_{i,12}}{\Omega_{i,22}}u+l;\Omega_{i,22}^{-1}\right)\right)^2\dd u\right]^{\frac{1}{2}}. \nonumber
    \end{align}
    Let
    $$
        h_{ij}(t):=f_G(t;\Omega_{i,22}^{-1})-f_{G_j}(t;\Omega_{i,22}^{-1}),
        \qquad
        a_i:=\frac{\Omega_{i,12}}{\Omega_{i,22}}=-\gamma_i\frac{\tau_i}{\tilde{\sigma}_i},
    $$
    and by Assumption~\ref{assm:subgaussian}, $|a_i|$ is bounded away from zero,
    $$
    \begin{aligned}
        \int_{-\infty}^z\left(f_{G}\left(\frac{\Omega_{i.12}}{\Omega_{i,22}}u+l;\Omega_{i,22}^{-1}\right)-f_{G_j}\left(\frac{\Omega_{i,12}}{\Omega_{i,22}}u+l;\Omega_{i,22}^{-1}\right)\right)^2\dd u &= \int_{-\infty}^z h_{ij}(a_iu+l)^2\dd u \\&\leq \frac{1}{|a_i|} \int_{-\infty}^{\infty} h_{ij}(t)^2 \dd t,
    \end{aligned}
    $$

    Therefore
    \begin{equation}
    \label{eq:bound_all_N}
         \sup_{z\in\mathbb R}\sup_{l\in\mathbb R}
        \left|
        N_G^{(i)}(z,l)-N_{G_j}^{(i)}(z,l)
        \right|
        \lesssim_{\underline\sigma,\bar\sigma,\underline\gamma,\bar\gamma}
        \left(
        \int_{\mathbb R}h_{ij}(t)^2\dd t
        \right)^{1/2}.
    \end{equation}
     Notice that $\Omega_{i,22}^{-1} = \Sigma_{i,22}(1-\gamma_i^2)$, and denote $f^*$ as the Fourier transform of $f$, then
     $$
    f^*_{G}(\omega;\Omega_{i,22}^{-1}) = G^*(\omega)e^{-\frac{1}{2}\Omega_{i,22}^{-1}\omega^2} = \left(G^*(\omega)e^{-\frac{1}{2}\Sigma_{i,22}\omega^2}\right)e^{\frac{1}{2}\Sigma_{i,22}\gamma_i^2\omega^2} = f^*_{G}(\omega;\Sigma_{i,22})e^{\frac{1}{2}\Sigma_{i,22}\gamma_i^2\omega^2}.
     $$
     Similar results also hold for $\{G_j\}$. By Plancherel's identity, and a non-random $U_i>0$ to be chosen later,
     $$
    \begin{aligned}
        \int_{-\infty}^{\infty} h_{ij}(t)^2 \dd t\quad&=\frac{1}{2\pi}\int_{-\infty}^{\infty}\left\vert f_{G}^{*}(\omega;\Omega_{i,22}^{-1})-f_{G_j}^*(\omega;\Omega_{i,22}^{-1})\right\rvert^2d\omega \\
        &\quad=\frac{1}{2\pi}\int_{-\infty}^{\infty}e^{\Sigma_{i,22}\gamma_i^2\omega^2}\left\vert f_{G}^{*}(\omega;\Sigma_{i,22})-f_{G_j}^*(\omega;\Sigma_{i,22})\right\vert^2d\omega \\
        &\quad= \frac{1}{2\pi}\int_{[- U_i,U_i]}e^{\Sigma_{i,22}\gamma_i^2\omega^2}\left\vert f_{G}^{*}(\omega;\Sigma_{i,22})-f_{G_j}^*(\omega;\Sigma_{i,22})\right\vert^2d\omega\\
        &\quad\quad\quad+\frac{1}{2\pi}\int_{[-U_i,U_i]^c}e^{\Sigma_{i,22}\gamma_i^2\omega^2}\left\vert f_{G}^{*}(\omega;\Sigma_{i,22})-f_{G_j}^*(\omega;\Sigma_{i,22})\right\vert^2d\omega\\
        &\quad:= I_{ij1}+I_{ij2}.
    \end{aligned}
    $$
    For $I_{ij1}$,
    $$
    \begin{aligned}
    I_{ij1} &\leq e^{\Sigma_{i,22}\gamma_i^2U_i^2}\cdot \frac{1}{2\pi}\int_{-\infty}^{\infty}\left\vert f_{G}^{*}(\omega;\Sigma_{i,22})-f_{G_j}^*(\omega;\Sigma_{i,22})\right\vert^2d\omega \\
    &=e^{\Sigma_{i,22}\gamma_i^2U_i^2}\int_{-\infty}^{\infty}\left(f_{G}(t;\Sigma_{i,22})-f_{G_j}(t;\Sigma_{i,22})\right)^2dt\\
    &=e^{\Sigma_{i,22}\gamma_i^2U_i^2}\int_{-\infty}^{\infty}\left(\sqrt{f_{G}(t;\Sigma_{i,22})}-\sqrt{f_{G_j}(t;\Sigma_{i,22})}\right)^2\left(\sqrt{f_{G}(t;\Sigma_{i,22})}+\sqrt{f_{G_j}(t;\Sigma_{i,22})}\right)^2dt \\
    &\leq 2e^{\Sigma_{i,22}\gamma_i^2U_i^2}\left\Vert\sqrt{f_{G}(t;\Sigma_{i,22})}+\sqrt{f_{G_j}(t;\Sigma_{i,22})}\right\Vert_{\infty}^{2}\mathfrak{H}^2(f_{G}(\cdot;\Sigma_{i,22}),f_{G_j}(\cdot;\Sigma_{i,22}))\\
    &\lesssim e^{\Sigma_{i,22}\gamma_i^2U_i^2}\mathfrak{H}^2(f_{G}(\cdot;\Sigma_{i,22}),f_{G_j}(\cdot;\Sigma_{i,22})).
    \end{aligned}
    $$
    For $I_{ij2}$, first noticing that
    $$
    \left\vert f_{G}^{*}(\omega;\Sigma_{i,22})-f_{G_j}^*(\omega;\Sigma_{i,22})\right\vert^2 = \left\vert G^*(\omega)-G_j^*(\omega)\right\vert^2e^{-\Sigma_{i,22}\omega^2}\leq 4e^{-\Sigma_{i,22}\omega^2},
    $$
    then applying the standard Gaussian-tail bound,
    $$
    I_{i2} \leq \frac{4}{2\pi}\int_{[-U_i,U_i]^c} e^{-\Sigma_{i,22}\omega^2}e^{\Sigma_{i,22}\gamma_i^2\omega^2}~d\omega\lesssim \frac{1}{\Sigma_{i,22}(1-\gamma_i^2)U_i}e^{-\Sigma_{i,22}(1-\gamma_i^2)U_i^2}.
    $$
    Take $U_i=(\log{n}/\Sigma_{i,22})^{1/2}$, then we have
    $$
    \begin{aligned}
        \int_{-\infty}^{\infty} h_{ij}(t)^2dt \lesssim n^{\gamma_i^2}\mathfrak{H}^2(f_G^{(i)},f_{G_j}^{(i)})+\frac{n^{-(1-\gamma_i^2)}}{\sqrt{\log{n}}}.
    \end{aligned}
    $$
    By using the inequality $\sqrt{a+b} \leq \sqrt{a}+\sqrt{b}$ for $a , b>0$, and combined with \eqref{eq:bound_all_N}, we have that
    \begin{equation}
    \label{eq:hetero_deterministic_N_bound}
        S_{ij}
        \lesssim_{\underline{\sigma},\bar{\sigma}\underline{\gamma},\bar{\gamma}}
        n^{\gamma_i^2/2}
        \mathfrak H(f_G^{(i)},f_{G_j}^{(i)})
        +
        n^{-(1-\gamma_i^2)/2}(\log n)^{-1/4}.
    \end{equation}
    Since the cover is proper, each $G_j$ belongs to $\mathcal G_n$.
    Hence, uniformly over $j\in\{1,\ldots,J\}$,
    \begin{align}
        \frac1n\sum_{i=1}^n S_{ij}
        &\lesssim
        n^{\check\gamma^2/2}
        \frac1n\sum_{i=1}^n
        \mathfrak H(f_G^{(i)},f_{G_j}^{(i)})
        +
        n^{-(1-\check\gamma^2)/2}(\log n)^{-1/4}
        \nonumber\\
        &\le
        n^{\check\gamma^2/2}
        \bar d(G,G_j)
        +
        n^{-(1-\check\gamma^2)/2}(\log n)^{-1/4}
        \nonumber\\
        &\lesssim
        \frac{\log n}{n^{(1-\check\gamma^2)/2}}.
        \label{eq:hetero_avg_Sij_bound}
    \end{align}
    Combining \eqref{eq:hetero_mean_V_bound} and
    \eqref{eq:hetero_avg_Sij_bound}, we get
    $$
        \max_{1\le j\le J}
        \frac1n\sum_{i=1}^n
        \EE[G]{V_{ij}(\hat b_i)}
        \lesssim
        \frac1n
        +
        \frac{B_n\log n}{n^{(1-\check\gamma^2)/2}}
        \lesssim
        \frac{(\log n)^{3/2}}{n^{(1-\check\gamma^2)/2}}.
    $$
    Also, combined with the Hoeffding term,
    $$
        \sqrt{\frac{\log J}{n}}
        \lesssim
        \frac{\log n}{\sqrt n}
        \lesssim
        \frac{(\log n)^{3/2}}{n^{(1-\check\gamma^2)/2}}.
    $$
    Plugging these bounds into \eqref{eq:hetero_bound_N_replacement},
    $$
        \mathrm I
        \lesssim_{\Gamma,\underline{\sigma},\bar{\sigma},\underline{\gamma},\bar{\gamma}}
        \frac{(\log n)^{3/2}}{n^{(1-\check\gamma^2)/2}}.
    $$
    This completes the proof.

\end{proof}
\begin{proof}[Proof of Lemma~\ref{lemm:hetero_marginal_ratio_bound}]
    Let $\eta=1/n$. Consider the following class of densities
    $$
    \mathcal{F}_n^{-1}:=\left\{\left(f_{\widetilde{G}}^{(1)},...,f_{\widetilde{G}}^{(n)}\right): \widetilde{G}\in\mathcal{G},\;\bar{d}(G,{\widetilde{G}})<\frac{C_0\log{n}}{\sqrt{n}}\right\}. 
    $$
    The constant $C_0$ is in Lemma \ref{lemm:hetero_hellienger}. Again, we take $B_n := C_B\sqrt{\log{n}}$. Let $\mathcal{S}=\{(f^{(1)}_{G_j},...,f^{(n)}_{G_j}):j\in\mathcal{J}\}\subseteq \mathcal{F}_n^{-1}$  , $\mathcal{J}=\{1,...,J\}$, $J= \#S$ be a proper $(\Vert\cdot\rVert_{\infty,B_n},\eta)$-cover of $\mathcal{F}_n^{-1}$. Here, a proper cover means that the centers of the cover are themselves elements of $\mathcal{F}_n^{-1}$. Lemma \ref{lemm:hetero_cardinality_cover} provides a cover for a larger class of functions, however, it is not a proper cover for $\mathcal{F}_n^{-1}$. By a standard argument, an $\eta/2$-cover in Lemma \ref{lemm:hetero_cardinality_cover} yields a proper $\eta$-cover for $\mathcal{F}_n^{-1}$. Therefore, we get that 
   $$\log{J} \lesssim_{\underline{\sigma},\bar{\sigma}}(\log{n})^2\max\left\{\frac{B_n}{\sqrt{\log{n}}},1\right\}.$$
   By the definition of $B_n$, we further conclude that $\log{J} \lesssim_{\Gamma,\underline{\sigma},\bar{\sigma}}(\log{n})^2.$ Now consider the main argument of the lemma. On the event $\mathcal{A}$, there must exist a (random) index \smash{$\hat{j}$} such that 
   $$\max_{1\leq i \leq n}\max_{|t|\leq B_n}\left|f_{\widehat{G}}^{(i)}(t)-f_{G_{\hat{j}}}^{(i)}(t)\right|\leq\eta.$$
   In particular, on $\{|\hat{b}_i|\leq B_n\}$, $\left|D_i(\widehat{G})-D_i(G_{\hat{j}})\right|\leq \eta.$ Also, we have $\PP[G]{|\hat{b}_i|>B_n}\leq 1/n$ for $n\geq1$. Then,
    \begin{align}
    &\frac{1}{n}\sum_
    {i=1}^n \EE[G]{\left|\frac{D_i(G)-D_i(\widehat{G})}{D_i(\hat{G}_*)}\right|\mathds{1}(\mathcal{A})} \nonumber\\
    &\quad = \frac{2}{n}\sum_
    {i=1}^n \EE[G]{\left|\frac{D_i(G)-D_i(\widehat{G})}{D_i(G)+D_i(\widehat{G})}\right|\mathds{1}(\mathcal{A})}\nonumber\\
    &\quad\leq\frac{2}{n}\sum_{i=1}^n\left\{\EE[G]{\left|\frac{D_i(G)-D_i(\widehat{G})}{D_i(G)+D_i(\widehat{G})}\right|\mathds{1}(\mathcal{A})\mathds{1}(|\hat{b}_i|\leq B_n)}+\PP{|\hat{b}_i|> B_n)}\right\}\nonumber\\
    &\quad\leq\frac{2}{n}+\frac{2}{n}\sum_{i=1}^n\EE[G]{\left|\frac{D_i(G)-D_i(G_{\hat{j}})}{D_i(G)+D_i(G_{\hat{j}})}\right|\mathds{1}(\mathcal{A})\mathds{1}(|\hat{b}_i|\leq B_n)}\nonumber\\
    &\quad\quad\quad\quad+\frac{2}{n}\sum_{i=1}^n\EE[G]{\left|\frac{D_i(G)-D_i(\widehat{G})}{D_i(G)+D_i(\widehat{G})}-\frac{D_i(G)-D_i(G_{\hat{j}})}{D_i(G)+D_i(G_{\hat{j}})}\right|\mathds{1}(\mathcal{A})\mathds{1}(|\hat{b}_i|\leq B_n)}.\label{eq:density_transform}
    \end{align}
    Proceeding similarly as in the proof of Lemma S10 of \cite{ignatiadis2025empirical}, we can show that
    $$
    \begin{aligned}
        &\frac{1}{n}\sum_{i=1}^n\EE[G]{\left|\frac{D_i(G)-D_i(\widehat{G})}{D_i(G)+D_i(\widehat{G})}-\frac{D_i(G)-D_i(G_{\hat{j}})}{D_i(G)+D_i(G_{\hat{j}})}\right|\mathds{1}(\mathcal{A})\mathds{1}(|\hat{b}_i|\leq B_n)}\\
        &\quad\leq2\eta\frac{1}{n}\sum_{i=1}^n\EE[G]{\frac{\mathds{1}(|\hat{b}_i|\leq B_n)}{D_i(G)}}\leq4\eta B_n.
    \end{aligned}
    $$
    For the remaining term, 
    $$
    \begin{aligned}
    &\frac{1}{n}\sum_{i=1}^n\EE[G]{\left|\frac{D_i(G)-D_i(G_{\hat{j}})}{D_i(G)+D_i(G_{\hat{j}})}\right|\mathds{1}(\mathcal{A})\mathds{1}(|\hat{b}_i|\leq B_n)}\\
    &\quad\leq\EE[G]{\sup_{j\in\mathcal{J}}\left\{\frac{1}{n}\sum_{i=1}^n\left|\frac{D_i(G)-D_i(G_{j})}{D_i(G)+D_i(G_{j})}\right|\right\}}\\
    &\quad\leq \EE[G]{\sup_{j\in\mathcal{J}}\left\{\frac{1}{n}\sum_{i=1}^n\left(\left|\frac{D_i(G)-D_i(G_{j})}{D_i(G)+D_i(G_{j})}\right|-\EE[G]{\left|\frac{D_i(G)-D_i(G_{j})}{D_i(G)+D_i(G_{j})}\right|}\right)\right\}}\\
    &\quad\quad\quad + \sup_{j\in\mathcal{J}}\left\{\EE[G]{\frac{1}{n}\sum_{i=1}^n\left|\frac{D_i(G)-D_i(G_{j})}{D_i(G)+D_i(G_{j})}\right|}\right\}.
    \end{aligned}
    $$
    Both terms in the last step can be controlled using the same techniques applied in the proof of Lemma S10 of \cite{ignatiadis2025empirical}. In fact, using Lemma \ref{lemm:hetero_cardinality_cover} and Lemma \ref{lemm:hetero_hellienger}, we can show that
    $$
    \sup_{j\in\mathcal{J}}\left\{\EE[G]{\frac{1}{n}\sum_{i=1}^n\left|\frac{D_i(G)-D_i(G_{j})}{D_i(G)+D_i(G_{j})}\right|}\right\}\lesssim_{\Gamma,\underline{\sigma},\bar{\sigma}} \frac{\log n}{\sqrt{n}},
    $$
    and
    $$
    \EE[G]{\sup_{j\in\mathcal{J}}\left\{\frac{1}{n}\sum_{i=1}^n\left(\left|\frac{D_i(G)-D_i(G_{j})}{D_i(G)+D_i(G_{j})}\right|-\EE[G]{\left|\frac{D_i(G)-D_i(G_{j})}{D_i(G)+D_i(G_{j})}\right|}\right)\right\}}\lesssim_{\Gamma,\underline{\sigma},\bar{\sigma}} \frac{\log n}{\sqrt{n}}.
    $$
    Therefore, combining above result into \eqref{eq:density_transform}, we have
    $$
        \frac{1}{n}\sum_
        {i=1}^n \EE[G]{\left|\frac{D_i(G)-D_i(\widehat{G})}{D_i(\widehat{G}_*)}\right|\mathds{1}(\mathcal{A})} \lesssim_{\Gamma,\underline{\sigma},\bar{\sigma}} \frac{\log{n}}{\sqrt{n}}.
    $$
\end{proof}

\subsection{Proof of Theorem~\ref{thm:rate_coverage}}

    Fix $i\in\{1,...,n\}$. We consider the testing problem $H_{0i}:\theta_i=\theta_{i0}$, and take $\theta_{i0}=\theta_{i}$ to be the true value. Under $H_{0i}$, we have 
    $$
    \binom{\hat{\theta}^{\mathrm{db}}_i-\theta_i}{\hat{b}_i}\;\bigg|\; b_i
    \sim
    N\left\{
    \binom{0}{b_i},
    \Sigma_i
    \right\}.
    $$
    By definition of $F_{G,i}(\cdot\mid L_i)$, the conditional CDF of $\hat{\theta}^{\mathrm{db}}_i-\theta_i$ given $\hat{b}_i$ under the oracle model is $F_{G,i}(\cdot\mid \hat{b}_i)$. Therefore, by the probability integral transform, 
    $$
    U_i:=F_{G,i}(\hat{\theta}^{\mathrm{db}}_i-\theta_i\mid \hat{b}_i) \sim\mathrm{Unif}(0,1).
    $$
    Since $\widehat G$ is computed from $\{\hat{b}_1,\ldots,\hat{b}_n\}$, we may also condition $ \mathcal L_n:=\sigma(\hat{b}_1,\ldots,\hat{b}_n)$, and $U_i\mid \mathcal L_n\sim \mathrm{Unif}(0,1).$

    Now we define the plug-in conditional probability transform
    $$
     \widehat U_i
    :=
    F_{\widehat G,i}(\hat{\theta}^{\mathrm{db}}_i-\theta_i\mid \hat{b}_i),
    $$
    and we have that 
    $$
    |\widehat{U}_i-U_i| \leq \sup_{z\in\mathbb{R}}\left|F_{\widehat{G},i}(z\mid\hat{b}_i)-F_{G,i}(z\mid\hat{b}_i)\right|=:\Delta_i.
    $$
    By the definition of $\mathcal{I}_{\widehat{G},i}(1-\alpha)$, 
    $$
    \{\theta_i\in \mathcal{I}_{\widehat{G},i}(1-\alpha)\}
    =
    \left\{
    \frac{\alpha}{2}
    \le
    F_{\widehat G,i}(\hat\theta_i^{\mathrm{db}}-\theta_i\mid \hat b_i)
    \le
    1-\frac{\alpha}{2}
    \right\} =
    \left\{
    \frac{\alpha}{2}
    \le
    \widehat U_i
    \le
    1-\frac{\alpha}{2}
    \right\}.
    $$
    Combined with $|\widehat{U}_i-U_i| \leq\Delta_i$, we have the following set inclusions
    $$
    \left\{
    \frac{\alpha}{2}+\Delta_i\le U_i\le 1-\frac{\alpha}{2}-\Delta_i\right\}
    \subseteq
    \{\theta_i\in \mathcal I_{\widehat{G},i} (1-\alpha)\}\left\{\frac{\alpha}{2}-\Delta_i\le U_i \le 1-\frac{\alpha}{2}+\Delta_i
    \right\}.
    $$
    Taking conditional probabilities given \(\mathcal L_n\), and using
    \(U_i\mid\mathcal L_n\sim \mathrm{Unif}(0,1)\), we obtain
    \[
        1-\alpha-2\Delta_i
        \le
        \mathbb P_G\left\{
        \theta_i\in \mathcal{I}_{\widehat{G},i}(1-\alpha)
        \mid \mathcal L_n
        \right\}
        \le
        1-\alpha+2\Delta_i .
    \]
    Hence
    \[
        \left|
        \mathbb P_G\left\{
        \theta_i\in \mathcal{I}_{\widehat{G},i}(1-\alpha)
        \mid \mathcal L_n
        \right\}
        -
        (1-\alpha)
        \right|
        \le
        2\Delta_i .
    \]
    Then by taking the expectation on both sides, we have
    $$
    \begin{aligned}
        &\left|\mathbb P_G\left\{
        \theta_i\in \mathcal{I}_{\widehat{G},i}(1-\alpha)\right\}-(1-\alpha)\right|\\
        &~~=\left|\EE[G]{P_G\left\{
        \theta_i\in \mathcal{I}_{\widehat{G},i}(1-\alpha)
        \mid \mathcal L_n
        \right\}-(1-\alpha)}\right|\\
        &~~\leq\EE[G]{\left|\mathbb P_G\left\{
        \theta_i\in \mathcal{I}_{\widehat{G},i}(1-\alpha)
        \mid \mathcal L_n
        \right\}
        -
        (1-\alpha)\right|}\leq2\EE[G]{\Delta_n}.
    \end{aligned}
    $$
    Averaging over $i$, we get
    $$
    \begin{aligned}
    \frac1n\sum_{i=1}^n
    \left|
    \mathbb P_G\left\{
    \theta_i\in \mathcal{I}_{\widehat{G},i}(1-\alpha)
    \right\}
    -
    (1-\alpha)
    \right| \le
    \frac2n\sum_{i=1}^n\mathbb E_G[\Delta_i]
    \le
    2C\frac{(\log{n})^{3/2}}{n^{(1-\check{\gamma}^2)/2}},
    \end{aligned}
    $$
    where the last inequality comes from the proof of Theorem \ref{thm:covergence_pvalue}. This holds for all $\alpha\in(0,1)$, then the following completes the proof:
     $$
     \sup_{\alpha\in(0,1)}
    \begin{aligned}
    \frac1n\sum_{i=1}^n
    \left|
    \mathbb P_G\left\{
    \theta_i\in \mathcal{I}_{\widehat{G},i}(1-\alpha)
    \right\}
    -
    (1-\alpha)
    \right| \le
    C^{\prime}\frac{(\log{n})^{3/2}}{n^{(1-\check{\gamma}^2)/2}}.
    \end{aligned}
    $$

\subsection{Proof of Proposition~\ref{prop:minimax}}

\begin{proof}
Since $\tilde{\sigma}_i^2,\tau_i^2,\gamma_i$ do not depend on $i$, we drop the subscript $i$ throughout the proof. Write
$
\nu=\tau^2(1-\gamma^2).
$
Recall from Proposition~\ref{prop:f_formula} that, in this homoscedastic case,
$
F_G(z\mid l)=N_G(z,l)/D_G(l),
$
where
$
D_G(l)=f_G(l;\tau^2)
$
and
$$
N_G(z,l)
=
C\int_{-\infty}^{z}
\exp\left(-\frac{u^2}{2\tilde\sigma^2}\right)
f_G\left(l-\frac{\gamma\tau}{\tilde\sigma}u;\nu\right)\,\dd u,
\qquad
C=(2\pi\tilde\sigma^2)^{-1/2}.
$$
Here we used that
$$
\frac{\Omega_{12}}{\Omega_{22}}
=
-\frac{\gamma\tau}{\tilde\sigma},
\qquad
\Omega_{22}^{-1}=\tau^2(1-\gamma^2)=\nu .
$$

Our strategy is to use Le Cam's two-point lemma. Define
$$
r_A
=
\frac{\nu(A+\tau^2)}{\tau^2(A+\nu)}
=
\frac{(1-\gamma^2)(A+\tau^2)}
{A+\tau^2(1-\gamma^2)}.
$$
Since $\gamma\ne0$, we have $r_A<1$, and also $r_A\to1-\gamma^2$ as $A\to\infty$. Since $\beta>(1-\gamma^2)/2$, we may choose $A$ large enough so that $r_A/2<\beta$, and then choose $\Gamma>A$. Below we fix these values of $A$ and $\Gamma$.

As the first point, take $G_0=\mathrm{N}(0,A)$, and write $g_0$ for its density. Then $G_0$ is $A$-sub-Gaussian, and hence $G_0\in\mathcal G_\Gamma$. For the second point, let
$$
\omega^2
=
\log n\cdot\frac{A+\tau^2}{A\tau^2}
$$
and define
$$
h(b)
=
g_0(b)\cb{\cos(\omega b)-\exp(-A\omega^2/2)}.
$$
For $\varepsilon\in(0,1/4)$, to be chosen sufficiently small below, let $G_1$ have density
$$
g_1(b)=g_0(b)+\varepsilon h(b).
$$
We first check that $g_1$ is a density. Indeed,
$$
g_1(b)
\ge
g_0(b)\cb{1-\varepsilon\left(|\cos(\omega b)|+\exp(-A\omega^2/2)\right)}
\ge
g_0(b)(1-2\varepsilon)>0,
$$
and
$$
\int h(b)\,\dd b
=
\int g_0(b)\cb{\cos(\omega b)-\exp(-A\omega^2/2)}\,\dd b
=0.
$$
Thus $g_1$ is nonnegative and integrates to one.

We next check the sub-Gaussian property. Since $g_1$ is even,
$\EEInline[G_1]{b}=0$. Moreover, for any $s\in\RR$,
$$
\begin{aligned}
\EE[G_1]{\exp(sb)}
&=
\exp(As^2/2)
+
\varepsilon
\exp(As^2/2-A\omega^2/2)
\cb{\cos(As\omega)-1}  \\
&\le
\exp(As^2/2)
\le
\exp(\Gamma s^2/2),
\end{aligned}
$$
where we used $\cos(As\omega)-1\le0$ and $\Gamma>A$. Hence $G_1\in\mathcal G_\Gamma$.

We will use the following elementary Gaussian convolution identity: for any $v>0$,
$$
\begin{aligned}
&\int
\varphi(x-b;v)\,
\varphi(b;A)
\cb{\cos(\omega b)-\exp(-A\omega^2/2)}\,\dd b  \\
&\qquad =
\varphi(x;A+v)
\left[
\exp\left\{-\frac{1}{2}\omega^2\frac{Av}{A+v}\right\}
\cos\left(\omega\frac{A}{A+v}x\right)
-
\exp(-A\omega^2/2)
\right].
\end{aligned}
$$
Let $\Delta D(l)=D_{G_1}(l)-D_{G_0}(l)$ and
$\Delta N(z,l)=N_{G_1}(z,l)-N_{G_0}(z,l)$. Applying the identity with
$v=\tau^2$ gives
$$
\frac{\Delta D(l)}{D_{G_0}(l)}
=
\varepsilon
\left[
n^{-1/2}
\cos\left(\omega\frac{A}{A+\tau^2}l\right)
-
n^{-\frac{A+\tau^2}{2\tau^2}}
\right].
$$
In particular,
$$
\sup_{l\in\RR}
\left|
\frac{\Delta D(l)}{D_{G_0}(l)}
\right|
\le
2\varepsilon n^{-1/2}.
$$

Applying the same identity with $v=\nu$ gives
$$
\begin{aligned}
\Delta N(t,l)
&=
\varepsilon C
\int_{-\infty}^{t}
\exp\left(-\frac{u^2}{2\tilde\sigma^2}\right)
\varphi\left(l-\frac{\gamma\tau}{\tilde\sigma}u;A+\nu\right) \\
&\qquad\qquad\times
\left[
n^{-r_A/2}
\cos\left(
\omega\frac{A}{A+\nu}
\left(l-\frac{\gamma\tau}{\tilde\sigma}u\right)
\right)
-
n^{-\frac{A+\tau^2}{2\tau^2}}
\right]\dd u .
\end{aligned}
$$
Thus
$$
\frac{\Delta N(t,l)}{D_{G_0}(l)}
=
\varepsilon n^{-r_A/2}B_n(l)
-
\varepsilon n^{-\frac{A+\tau^2}{2\tau^2}}F_{G_0}(t\mid l),
$$
where
$$
B_n(l)
:=
\int_{-\infty}^{t}
W_l(u)
\cos\left(
\omega\frac{A}{A+\nu}
\left(l-\frac{\gamma\tau}{\tilde\sigma}u\right)
\right)\dd u
$$
and
$$
W_l(u)
:=
\frac{C}{D_{G_0}(l)}
\exp\left(-\frac{u^2}{2\tilde\sigma^2}\right)
\varphi\left(l-\frac{\gamma\tau}{\tilde\sigma}u;A+\nu\right).
$$
We now lower bound $B_n(l)$ on a set of $l$'s that has probability bounded away from zero. Fix a compact interval $I=[-1,1]$. Uniformly over $l\in I$, the function $u\mapsto W_l(u)$ is smooth and has Gaussian tails; moreover, for constants $0<c<C<\infty$ not depending on $n$,
$$
0<c\le W_l(t)\le C,
\qquad
|W_l'(t)|+\int_{-\infty}^{t}|W_l''(u)|\,\dd u\le C .
$$
Write, only for this calculation,
$$
\phi_l(u)
=
\omega\frac{A}{A+\nu}
\left(l-\frac{\gamma\tau}{\tilde\sigma}u\right).
$$
Since
$$
\frac{\dd}{\dd u}\sin\{\phi_l(u)\}
=
-\frac{\omega A\gamma\tau}{(A+\nu)\tilde\sigma}
\cos\{\phi_l(u)\},
$$
integration by parts gives
$$
B_n(l)
=
-
\frac{W_l(t)\sin\{\phi_l(t)\}}
{\omega A\gamma\tau/\{(A+\nu)\tilde\sigma\}}
+
\frac{1}
{\omega A\gamma\tau/\{(A+\nu)\tilde\sigma\}}
\int_{-\infty}^{t}W_l'(u)\sin\{\phi_l(u)\}\,\dd u .
$$
Applying integration by parts once more to the remaining integral, using
\[
\frac{\dd}{\dd u}\cos\{\phi_l(u)\}
=
\frac{\omega A\gamma\tau}{(A+\nu)\tilde\sigma}
\sin\{\phi_l(u)\},
\]
we get, uniformly over $l\in I$,
$$
\begin{aligned}
B_n(l)
&=
-
\frac{W_l(t)\sin\{\phi_l(t)\}}
{\omega A\gamma\tau/\{(A+\nu)\tilde\sigma\}}  \\
&\qquad
+
\frac{W_l'(t)\cos\{\phi_l(t)\}
-
\int_{-\infty}^{t}W_l''(u)\cos\{\phi_l(u)\}\,\dd u}
{\left(\omega A\gamma\tau/\{(A+\nu)\tilde\sigma\}\right)^2}  \\
&=
-
\frac{W_l(t)\sin\{\phi_l(t)\}}
{\omega A\gamma\tau/\{(A+\nu)\tilde\sigma\}}
+
O(\omega^{-2}).
\end{aligned}
$$
Here the boundary terms at $-\infty$ vanish because $W_l$ and $W_l'$ have Gaussian tails.

The sine is not uniformly large, so define
$$
S_n
=
\left\{
l\in I:
\left|
\sin\left(
\omega\frac{A}{A+\nu}
\left(l-\frac{\gamma\tau}{\tilde\sigma}t\right)
\right)
\right|
\ge \frac12
\right\}.
$$
On $S_n$, the leading term has size at least a constant multiple of $\omega^{-1}$, while the remainder is $O(\omega^{-2})$. Hence, for all sufficiently large $n$,
$$
|B_n(l)|
\gtrsim
\omega^{-1}
\gtrsim
\frac{1}{\sqrt{\log n}},
\qquad l\in S_n.
$$
Finally, as a function of $l$, the sine above has period of order $\omega^{-1}$, and on each full period the set where its absolute value is at least $1/2$ occupies a fixed positive fraction of the period. Since $\omega\to\infty$, $I$ contains many periods for large $n$. Since under $G_0$ we have $\hat b_i\sim \mathrm N(0,A+\tau^2)$ and this density is bounded below on $I$, there is a constant $p>0$ such that
$$
\PP[G_0]{\hat b_i\in S_n}\ge p
$$
for all sufficiently large $n$.

Combining the expressions for $\Delta D(l)$ and $\Delta N(t,l)$, we get
$$
\begin{aligned}
F_{G_1}(t\mid l)-F_{G_0}(t\mid l)
&=
\frac{
\Delta N(t,l)/D_{G_0}(l)
-
F_{G_0}(t\mid l)\Delta D(l)/D_{G_0}(l)
}{
1+\Delta D(l)/D_{G_0}(l)
} \\
&=
\frac{
\varepsilon n^{-r_A/2}B_n(l)
-
\varepsilon n^{-1/2}
F_{G_0}(t\mid l)
\cos\left(\omega\frac{A}{A+\tau^2}l\right)
}{
1+\Delta D(l)/D_{G_0}(l)
}.
\end{aligned}
$$
The denominator is bounded away from zero for all sufficiently large $n$, while the second term in the numerator is negligible relative to
$n^{-r_A/2}/\sqrt{\log n}$ because $r_A<1$. Hence, for all sufficiently large $n$ and all $l\in S_n$,
$$
\left|
F_{G_1}(t\mid l)-F_{G_0}(t\mid l)
\right|
\gtrsim
\varepsilon
\frac{n^{-r_A/2}}{\sqrt{\log n}}.
$$

It remains to verify that the two experiments are close. Let $P_j$ denote the
joint law of $(\hat b_1,\ldots,\hat b_n)$ under $G_j$, $j=0,1$, and let
$P_{j,1}$ denote the corresponding one-coordinate marginal law. The density of
$P_{j,1}$ is $D_{G_j}$. Therefore,
$$
\chi^2(P_{1,1},P_{0,1})
=
\int
\left(
\frac{D_{G_1}(l)}{D_{G_0}(l)}-1
\right)^2
D_{G_0}(l)\,\dd l .
$$
From the expression above for $\Delta D(l)=D_{G_1}(l)-D_{G_0}(l)$,
$$
\frac{D_{G_1}(l)}{D_{G_0}(l)}-1
=
\varepsilon
\left[
n^{-1/2}
\cos\left(\omega\frac{A}{A+\tau^2}l\right)
-
n^{-\frac{A+\tau^2}{2\tau^2}}
\right].
$$
Under $P_{0,1}$, we have $\hat b_i\sim \mathrm N(0,A+\tau^2)$. Hence,
using $\EE{\cos(sX)}=\exp(-s^2\VarInline{X}/2)$ for a centered normal random
variable $X$,
$$
\EE[P_{0,1}]{
\cos\left(\omega\frac{A}{A+\tau^2}\hat b_i\right)
}
=
\exp\left\{
-\frac12
\omega^2\frac{A^2}{A+\tau^2}
\right\}
=
n^{-A/(2\tau^2)}.
$$
Similarly,
$$
\begin{aligned}
\EE[P_{0,1}]{
\cos^2\left(\omega\frac{A}{A+\tau^2}\hat b_i\right)
}
&=
\frac12
+
\frac12
\EE[P_{0,1}]{
\cos\left(2\omega\frac{A}{A+\tau^2}\hat b_i\right)
}  \\
&=
\frac12
\left(
1+n^{-2A/\tau^2}
\right).
\end{aligned}
$$
Therefore,
$$
\begin{aligned}
\chi^2(P_{1,1},P_{0,1})
&=
\varepsilon^2
\EE[P_{0,1}]{
\left[
n^{-1/2}
\cos\left(\omega\frac{A}{A+\tau^2}\hat b_i\right)
-
n^{-\frac{A+\tau^2}{2\tau^2}}
\right]^2
} \\
&=
\varepsilon^2
\left[
\frac{1}{2n}\left(1+n^{-2A/\tau^2}\right)
-
2n^{-1/2-\frac{A+\tau^2}{2\tau^2}}n^{-A/(2\tau^2)}
+
n^{-\frac{A+\tau^2}{\tau^2}}
\right] \\
&=
\frac{\varepsilon^2}{2n}
\left(1-n^{-A/\tau^2}\right)^2
\le
\frac{\varepsilon^2}{2n}.
\end{aligned}
$$
Since the coordinates are independent under both $P_0$ and $P_1$,
$$
1+\chi^2(P_1,P_0)
=
\left\{1+\chi^2(P_{1,1},P_{0,1})\right\}^n.
$$
Thus
$$
\chi^2(P_1,P_0)
\le
\left(1+\frac{\varepsilon^2}{2n}\right)^n-1
\le
\exp(\varepsilon^2/2)-1.
$$
Consequently,
$$
\mathrm{TV}(P_0,P_1)
\le
\left\{\chi^2(P_1,P_0)\right\}^{1/2}
\le
\left\{\exp(\varepsilon^2/2)-1\right\}^{1/2}.
$$
Choosing $\varepsilon>0$ small enough ensures that
$\mathrm{TV}(P_0,P_1)\le p/2$.

Now lift $S_n$ to an event in the full $n$-dimensional sample space:
$
\mathcal E_{n,i}
=
\left\{
(l_1,\ldots,l_n)\in\RR^n:l_i\in S_n
\right\}.
$
Then $P_0(\mathcal E_{n,i})\ge p$. We write $P_0\wedge P_1$ for the common part of the two measures; if $p_0,p_1$ are their densities, then $P_0\wedge P_1$ has density $\min\{p_0,p_1\}$. Hence
$$
(P_0\wedge P_1)(\mathcal E_{n,i})
\ge
P_0(\mathcal E_{n,i})-\mathrm{TV}(P_0,P_1)
\ge
p/2.
$$

Let $\widehat\psi(t)$ be any measurable function of
$(\hat b_1,\ldots,\hat b_n)$. Pointwise,
$$
\left|\widehat\psi(t)-F_{G_0}(t\mid l_i)\right|
+
\left|\widehat\psi(t)-F_{G_1}(t\mid l_i)\right|
\ge
\left|F_{G_1}(t\mid l_i)-F_{G_0}(t\mid l_i)\right|.
$$
By Le Cam's argument,
$$
\begin{aligned}
&\max_{j\in\{0,1\}}
\EE[G_j]{
\left|
\widehat\psi(t)-F_{G_j}(t\mid \hat b_i)
\right|
}  \\
&\qquad\ge
\frac12
\int
\left|
F_{G_1}(t\mid l_i)-F_{G_0}(t\mid l_i)
\right|
\,\dd(P_0\wedge P_1)(l_1,\ldots,l_n) \\
&\qquad\ge
\frac12
\int_{\mathcal E_{n,i}}
\left|
F_{G_1}(t\mid l_i)-F_{G_0}(t\mid l_i)
\right|
\,\dd(P_0\wedge P_1)(l_1,\ldots,l_n) \\
&\qquad\gtrsim
\varepsilon
\frac{n^{-r_A/2}}{\sqrt{\log n}}
(P_0\wedge P_1)(\mathcal E_{n,i})
\gtrsim
\frac{n^{-r_A/2}}{\sqrt{\log n}}.
\end{aligned}
$$
Since $r_A/2<\beta$,
$$
\frac{n^{-r_A/2}}{\sqrt{\log n}}
\gtrsim
n^{-\beta}
$$
for all sufficiently large $n$. Since $G_0,G_1\in\mathcal G_\Gamma$, this implies
$$
\inf_{\widehat\psi(t)}
\sup_{G\in\mathcal G_\Gamma}
\EE[G]{
\left|
\widehat\psi(t)-F_{G,i}(t\mid\hat b_i)
\right|
}
\ge
c n^{-\beta}
$$
for some $c>0$, after reducing $c$ if necessary to handle finitely many small values of $n$. This proves the claim.
\end{proof}

\section{Details on numerical studies (Section~\ref{sec:numerical})}\label{app:experiment}

This appendix collects the dataset construction, hyperparameter choices, and
supplementary tables for the numerical studies in
Section~\ref{sec:numerical}.

\paragraph{Computational resources.} We run all the numerical studies on an HPC cluster with 16-core Intel Xeon CPUs and 32GB memory. With our choice of $K=200$ Monte Carlo (in the synthetic Amazon case, $K = 500$) replicates, all our results can be obtained within 5 minutes.

\subsection{LMArena}
\label{app:experiment-lmarena}

\begin{table}[H]
\footnotesize
\setlength{\tabcolsep}{4pt}
\renewcommand{\arraystretch}{1.2}
\caption{LMArena results across $\alpha\in\{0.01,0.05,0.10,0.20,0.30\}$, averaged over $K=200$ random labeled/unlabeled splits with $n=298$ pairwise LLM problems. Each cell reports mean $\pm$ 1 Monte-Carlo SE. \emph{Classical} is the interval without ML information; \emph{Pred Mean} is the prediction-only interval; \emph{PT} denotes the power-tuned PPI baseline; \emph{RB Normal} and \emph{RB NPMLE} are the rebiased PT estimators with Normal and NPMLE priors for bias. The width-ratio is normalized by the Classical (CLT) interval.}
\label{tbl:lmarena_ci}
\makebox[\textwidth][c]{
\begin{tabular}{ccccccc}
\toprule
$\alpha$ & & \textbf{Classical} 
& \textbf{Pred Mean} 
& \textbf{PT} & \textbf{RB Normal} & \textbf{RB NPMLE} \\
\hline
\multirow{3}{*}{0.01} & coverage (\%) 
& $96.5{\scriptstyle\,\pm\,0.1}$ & $55.1{\scriptstyle\,\pm\,0.1}$ 
& $95.8{\scriptstyle\,\pm\,0.1}$ & $96.3{\scriptstyle\,\pm\,0.1}$ & $94.1{\scriptstyle\,\pm\,0.1}$ \\
& width        & $0.675{\scriptstyle\,\pm\,0.000}$ & $0.210{\scriptstyle\,\pm\,0.000}$ 
& $0.647{\scriptstyle\,\pm\,0.000}$ & $0.454{\scriptstyle\,\pm\,0.001}$ & $0.420{\scriptstyle\,\pm\,0.002}$ \\
& width-ratio     & $1.000{\scriptstyle\,\pm\,0.000}$ & $0.311{\scriptstyle\,\pm\,0.000}$ 
& $0.958{\scriptstyle\,\pm\,0.000}$ 
& $0.673{\scriptstyle\,\pm\,0.002}$ & $0.622{\scriptstyle\,\pm\,0.003}$ \\
\hline
\multirow{3}{*}{0.05} & coverage (\%)  
& $92.9{\scriptstyle\,\pm\,0.1}$ 
& $43.4{\scriptstyle\,\pm\,0.1}$ 
& $91.3{\scriptstyle\,\pm\,0.1}$ & $91.4{\scriptstyle\,\pm\,0.1}$ & $87.1{\scriptstyle\,\pm\,0.2}$ \\
& width        & $0.513{\scriptstyle\,\pm\,0.000}$ & $0.160{\scriptstyle\,\pm\,0.000}$ 
& $0.492{\scriptstyle\,\pm\,0.000}$ & $0.345{\scriptstyle\,\pm\,0.001}$ & $0.316{\scriptstyle\,\pm\,0.001}$ \\
& width-ratio     & $1.000{\scriptstyle\,\pm\,0.000}$ & $0.311{\scriptstyle\,\pm\,0.000}$ 
& $0.958{\scriptstyle\,\pm\,0.000}$ & $0.673{\scriptstyle\,\pm\,0.002}$ & $0.615{\scriptstyle\,\pm\,0.002}$ \\
\hline
\multirow{3}{*}{0.1} & coverage (\%)  & $88.4{\scriptstyle\,\pm\,0.1}$ 
& $36.7{\scriptstyle\,\pm\,0.1}$ 
& $86.4{\scriptstyle\,\pm\,0.1}$ & $86.1{\scriptstyle\,\pm\,0.2}$ & $80.6{\scriptstyle\,\pm\,0.2}$ \\
& width        & $0.431{\scriptstyle\,\pm\,0.000}$ & $0.134{\scriptstyle\,\pm\,0.000}$ 
& $0.413{\scriptstyle\,\pm\,0.000}$ & $0.290{\scriptstyle\,\pm\,0.001}$ & $0.264{\scriptstyle\,\pm\,0.001}$ \\
& width-ratio     & $1.000{\scriptstyle\,\pm\,0.000}$ & $0.311{\scriptstyle\,\pm\,0.000}$ 
& $0.958{\scriptstyle\,\pm\,0.000}$ & $0.673{\scriptstyle\,\pm\,0.002}$ & $0.612{\scriptstyle\,\pm\,0.002}$ \\
\hline
\multirow{3}{*}{0.2} & coverage (\%)  & $79.9{\scriptstyle\,\pm\,0.2}$ 
& $28.5{\scriptstyle\,\pm\,0.1}$ 
& $77.4{\scriptstyle\,\pm\,0.2}$ & $76.2{\scriptstyle\,\pm\,0.2}$ & $69.5{\scriptstyle\,\pm\,0.3}$ \\
& width        & $0.336{\scriptstyle\,\pm\,0.000}$ & $0.105{\scriptstyle\,\pm\,0.000}$ 
& $0.322{\scriptstyle\,\pm\,0.000}$ & $0.226{\scriptstyle\,\pm\,0.001}$ & $0.205{\scriptstyle\,\pm\,0.001}$ \\
& width-ratio     & $1.000{\scriptstyle\,\pm\,0.000}$ & $0.311{\scriptstyle\,\pm\,0.000}$ 
& $0.958{\scriptstyle\,\pm\,0.000}$ & $0.673{\scriptstyle\,\pm\,0.002}$ & $0.610{\scriptstyle\,\pm\,0.002}$ \\
\hline
\multirow{3}{*}{0.3} & coverage (\%)  & $71.1{\scriptstyle\,\pm\,0.2}$ & $22.9{\scriptstyle\,\pm\,0.1}$
& $68.3{\scriptstyle\,\pm\,0.2}$ & $66.3{\scriptstyle\,\pm\,0.2}$ & $59.6{\scriptstyle\,\pm\,0.3}$ \\
& width        & $0.272{\scriptstyle\,\pm\,0.000}$ & $0.085{\scriptstyle\,\pm\,0.000}$ 
& $0.260{\scriptstyle\,\pm\,0.000}$ & $0.183{\scriptstyle\,\pm\,0.000}$ & $0.165{\scriptstyle\,\pm\,0.001}$ \\
& width-ratio     & $1.000{\scriptstyle\,\pm\,0.000}$ & $0.311{\scriptstyle\,\pm\,0.000}$ 
& $0.958{\scriptstyle\,\pm\,0.000}$ & $0.673{\scriptstyle\,\pm\,0.002}$ & $0.609{\scriptstyle\,\pm\,0.003}$ \\
\bottomrule
\end{tabular}
}
\end{table}

\paragraph{Dataset construction.} We begin with the public dataset \url{https://huggingface.co/datasets/lmarena-ai/arena-human-preference-140k}, which contains 140k pairwise LLM data, where each row consists of two LLMs responses to a given prompt, and human's decision of which response is better. Our data cleaning pipeline then removes all rows containing non-English prompts and involving multi-turn conversation (i.e. the user asks a follow-up question). Then we group the rows by the pair of LLMs involved (we take LLM A to be the one whose model name is alphabetically smaller), this gives us $n = 298$ LLM pairs for our numerical experiments. All tasks have at least 100 data points (i.e. $M_i + m_i \geq 100$).

\paragraph{Reward model predictions.} For each comparison $X_{ij}$, the reward model \texttt{Skywork-Reward-V2} directly outputs two scalar scores $s^A_j$ and $s^B_j$ for the responses of LLM A and B, respectively. We then convert them into a probabilistic prediction via the Bradley-Terry model
$$
  \hat p_j \;=\; \frac{1}{1 + \exp(s^B_j - s^A_j)}.
$$
The biased estimator $\hat\theta_i^{\mathrm{ML}}$ is the average of $\hat p_j$ across all comparisons for the pair. The pooled MSE of these predictions against the true preference frequencies is $0.38$, indicating substantial predictor mis-specification.

\subsection{Amazon reviews}
\label{app:experiment-amazon}

\paragraph{Dataset.} We use the publicly available \emph{Amazon Fine Food Reviews} dataset, hosted by the Stanford Network Analysis Project (SNAP) on Kaggle.\footnote{\url{https://www.kaggle.com/datasets/snap/amazon-fine-food-reviews}} Reviews are grouped by their \texttt{ProductID}; for each review we form the covariate $X_{ij}$ by concatenating the review title and body text, and we let the response $Y_{ij} \in \{1,2,3,4,5\}$ be the integer rating (higher is better) chosen by the $j$-th reviewer of this product. Following the experimental design of \citet{li2025predictionpowered}, we restrict attention to the $n = 200$ products with the most reviews (totaling $74{,}913$ reviews). Selecting the top-reviewed products mitigates extreme heteroscedasticity across the per-task variances $\sigma_i^2$ that would otherwise dominate any cross-task comparison. For each product we randomly split its reviews into a labeled and unlabeled partition with a $20/80$ ratio, repeating the random split for $K = 200$ trials; the results reported in~\cref{fig:ppi-over-alphas} and~\cref{tbl:amazon_ci} are computed over these trials.

\paragraph{Predictor.} The prediction model $h$ is a fine-tuned BERT~\citep{devlin2019bert} neural network. We start from the publicly available \texttt{bert-base-multilingual} checkpoint\footnote{\url{https://huggingface.co/nlptown/bert-base-multilingual-uncased-sentiment}}\!\!, which is pre-trained on general multilingual product reviews (not exclusive to Amazon) for the same 1--5 star prediction task, and we further fine-tune it for two full epochs on the reviews \emph{outside} the top-200 products, using the Hugging Face \texttt{transformers} library. Fine-tuning improves prediction accuracy on a disjoint validation set of 100 products ($\sim 46$k reviews) from $67.5\%$ (off-the-shelf checkpoint) to $78.8\%$ (fine-tuned). We thus use this fine-tuned version for our model $h$. Crucially, the per-product samples used to compute $\biased_i$ and $\hat b_i$ in \eqref{eq:ppi_b_bhat_pt} are entirely disjoint from the corpus on which $h$ was fine-tuned, so the predictor is independent of the samples that we use to construct estimators.

\begin{table}[H]
\footnotesize
\setlength{\tabcolsep}{4pt}
\renewcommand{\arraystretch}{1.2}
\caption{Amazon results across $\alpha\in\{0.01,0.05,0.10,0.20,0.30\}$, averaged over 200 random splits with $n=200$ tasks. width-ratio is normalized by the Classical interval. Entries are reported as mean $\pm$ Monte Carlo SE.}
\label{tbl:amazon_ci}
\makebox[\textwidth][c]{
\begin{tabular}{ccccccc}
\toprule
$\alpha$ 
& 
& \textbf{Classical} 
& \textbf{Pred Mean} 
& \textbf{PT} 
& \textbf{RB Normal} 
& \textbf{RB NPMLE} \\
\hline
\multirow{3}{*}{0.01}
& coverage (\%)
& $98.7{\scriptstyle\,\pm\,0.1}$
& $98.2{\scriptstyle\,\pm\,0.1}$
& $99.4{\scriptstyle\,\pm\,0.1}$
& $99.2{\scriptstyle\,\pm\,0.1}$
& $99.5{\scriptstyle\,\pm\,0.1}$ \\
& width
& $0.696{\scriptstyle\,\pm\,0.001}$
& $0.358{\scriptstyle\,\pm\,0.000}$
& $0.438{\scriptstyle\,\pm\,0.001}$
& $0.357{\scriptstyle\,\pm\,0.001}$
& $0.357{\scriptstyle\,\pm\,0.001}$ \\
& width-ratio
& $1.000{\scriptstyle\,\pm\,0.000}$
& $0.525{\scriptstyle\,\pm\,0.001}$
& $0.640{\scriptstyle\,\pm\,0.001}$
& $0.521{\scriptstyle\,\pm\,0.001}$
& $0.520{\scriptstyle\,\pm\,0.001}$ \\
\hline

\multirow{3}{*}{0.05}
& coverage (\%)
& $95.9{\scriptstyle\,\pm\,0.1}$
& $95.9{\scriptstyle\,\pm\,0.1}$
& $97.7{\scriptstyle\,\pm\,0.1}$
& $98.5{\scriptstyle\,\pm\,0.1}$
& $98.7{\scriptstyle\,\pm\,0.1}$ \\
& width
& $0.530{\scriptstyle\,\pm\,0.001}$
& $0.273{\scriptstyle\,\pm\,0.000}$
& $0.334{\scriptstyle\,\pm\,0.001}$
& $0.271{\scriptstyle\,\pm\,0.001}$
& $0.270{\scriptstyle\,\pm\,0.001}$ \\
& width-ratio
& $1.000{\scriptstyle\,\pm\,0.000}$
& $0.525{\scriptstyle\,\pm\,0.001}$
& $0.640{\scriptstyle\,\pm\,0.001}$
& $0.521{\scriptstyle\,\pm\,0.001}$
& $0.517{\scriptstyle\,\pm\,0.001}$ \\
\hline

\multirow{3}{*}{0.10}
& coverage (\%)
& $92.2{\scriptstyle\,\pm\,0.2}$
& $91.9{\scriptstyle\,\pm\,0.2}$
& $95.3{\scriptstyle\,\pm\,0.1}$
& $97.4{\scriptstyle\,\pm\,0.1}$
& $97.6{\scriptstyle\,\pm\,0.1}$ \\
& width
& $0.445{\scriptstyle\,\pm\,0.001}$
& $0.229{\scriptstyle\,\pm\,0.000}$
& $0.280{\scriptstyle\,\pm\,0.001}$
& $0.228{\scriptstyle\,\pm\,0.001}$
& $0.226{\scriptstyle\,\pm\,0.001}$ \\
& width-ratio
& $1.000{\scriptstyle\,\pm\,0.000}$
& $0.525{\scriptstyle\,\pm\,0.001}$
& $0.640{\scriptstyle\,\pm\,0.001}$
& $0.521{\scriptstyle\,\pm\,0.001}$
& $0.516{\scriptstyle\,\pm\,0.001}$ \\
\hline

\multirow{3}{*}{0.20}
& coverage (\%)
& $83.7{\scriptstyle\,\pm\,0.2}$
& $82.5{\scriptstyle\,\pm\,0.2}$
& $89.3{\scriptstyle\,\pm\,0.2}$
& $94.1{\scriptstyle\,\pm\,0.1}$
& $94.0{\scriptstyle\,\pm\,0.1}$ \\
& width
& $0.346{\scriptstyle\,\pm\,0.001}$
& $0.178{\scriptstyle\,\pm\,0.000}$
& $0.218{\scriptstyle\,\pm\,0.001}$
& $0.177{\scriptstyle\,\pm\,0.001}$
& $0.176{\scriptstyle\,\pm\,0.001}$ \\
& width-ratio
& $1.000{\scriptstyle\,\pm\,0.000}$
& $0.525{\scriptstyle\,\pm\,0.001}$
& $0.640{\scriptstyle\,\pm\,0.001}$
& $0.521{\scriptstyle\,\pm\,0.001}$
& $0.515{\scriptstyle\,\pm\,0.001}$ \\
\hline

\multirow{3}{*}{0.30}
& coverage (\%)
& $74.3{\scriptstyle\,\pm\,0.2}$
& $72.4{\scriptstyle\,\pm\,0.2}$
& $81.4{\scriptstyle\,\pm\,0.2}$
& $88.9{\scriptstyle\,\pm\,0.2}$
& $88.6{\scriptstyle\,\pm\,0.2}$ \\
& width
& $0.280{\scriptstyle\,\pm\,0.001}$
& $0.144{\scriptstyle\,\pm\,0.000}$
& $0.176{\scriptstyle\,\pm\,0.000}$
& $0.144{\scriptstyle\,\pm\,0.001}$
& $0.142{\scriptstyle\,\pm\,0.001}$ \\
& width-ratio
& $1.000{\scriptstyle\,\pm\,0.000}$
& $0.525{\scriptstyle\,\pm\,0.001}$
& $0.640{\scriptstyle\,\pm\,0.001}$
& $0.521{\scriptstyle\,\pm\,0.001}$
& $0.514{\scriptstyle\,\pm\,0.001}$ \\
\bottomrule
\end{tabular}
}
\end{table}

\subsection{Synthetic study}
\label{app:experiment-synth}

\paragraph{Data generation process.} For each task $i\in\{1,\ldots,n\}$ with $n=200$, we draw a prediction mean
$\theta_{0i} \sim \mathrm{N}(4,\,0.01^2)$ and a prediction bias $b_i$ according to the prior under study (Normal or two-point; see below). The true target parameter is $\theta_i = \theta_{0i} - b_i$. Synthetic summary-level observations $(\bar Y_i,\bar Z_i^h,\tilde Z_i^h)$ are then generated from an empirical covariance structure extracted from one task chosen at random from the Amazon dataset; from these we form the PT summary pair $(\hat\theta_i^{\mathrm{PT}},\,\hat b_{i,\hat\lambda_i})$
using the power-tuning constant as in the main text for each $i$. Each setting is repeated for $K=500$ Monte Carlo replications, and we report average coverage, average length, and average length-ratio.

\paragraph{Choice of $A$ for the Normal prior.} For
$b_i\sim\mathrm{N}(-0.1,A)$, we sweep $A^{1/2}\in\{0.01,0.03,0.05,0.1\}$. The grid is anchored on the parametric prior fitted from the real Amazon split in
Section~\ref{app:experiment-amazon}, whose marginal-MLE estimate has
standard deviation $\approx 0.022$. The smallest value
($\sigma=0.01$) corresponds to a regime in which the bias is negligible
and a rebiased estimator should behave like the biased one; the largest
($\sigma=0.1$) is roughly $4\times$ wider than the empirical estimate and
forces the bias to dominate the prior, so that a rebiased estimator should
behave more like the debiased one. The two intermediate values trace the
transition.

\paragraph{Two-point prior.} For the misspecification probe, we draw
$b_i$ from a discrete two-point distribution \(b_i\sim \frac{1}{2}\delta_0+\frac{1}{2}\delta_{b_0}\), and we sweep \(b_0\in\{0.05,0.1,0.2,0.5\}\). This grid controls the degree to which the bias distribution departs from the normal working prior. For \(b_0=0.05\), the two support points are close enough that the distribution is difficult to distinguish from a nearly degenerate prior around zero, and the normal prior approximation is expected to work well. As \(b_0\) increases, the discreteness and asymmetry of the bias distribution become more pronounced. At \(b_0=0.5\), the two components are well separated, making the normal prior substantially misspecified. The two intermediate values trace the transition from a nearly normal-approximable regime to a strongly non-normal regime.

\paragraph{Per-$A$ numerical results.}
Table~\ref{tbl:synthetic_amazon} reports average coverage, average width, and 
average width-ratio for each estimator under the Gaussian-prior simulation,
across the four values of $A^{1/2}$.

\begin{table}[H]
\footnotesize
\setlength{\tabcolsep}{4pt}
\renewcommand{\arraystretch}{1.2}
\caption{Simulation results under normal prior with covariance structure adopted from the Amazon data ($n=200$, $K=500$ replications). Entries are reported as mean $\pm$ Monte Carlo SE.}
\label{tbl:synthetic_amazon}
\makebox[\textwidth][c]{
\begin{tabular}{cccccccc}
\toprule
\textbf{$A^{1/2}$} &
& \textbf{Oracle}
& \textbf{Classical}
& \textbf{Pred Mean}
& \textbf{PT}
& \textbf{RB Normal}
& \textbf{RB NPMLE} \\
\hline
\multirow{3}{*}{0.01}
& coverage (\%)
& $95.1{\scriptstyle\,\pm\,0.1}$
& $95.0{\scriptstyle\,\pm\,0.1}$
& $66.2{\scriptstyle\,\pm\,0.2}$
& $95.1{\scriptstyle\,\pm\,0.1}$
& $95.1{\scriptstyle\,\pm\,0.1}$
& $95.2{\scriptstyle\,\pm\,0.1}$ \\
& width
& $0.259{\scriptstyle\,\pm\,0.001}$
& $0.525{\scriptstyle\,\pm\,0.001}$
& $0.274{\scriptstyle\,\pm\,0.001}$
& $0.331{\scriptstyle\,\pm\,0.001}$
& $0.260{\scriptstyle\,\pm\,0.001}$
& $0.261{\scriptstyle\,\pm\,0.001}$ \\
& width-ratio
& $0.498{\scriptstyle\,\pm\,0.001}$
& $1.000{\scriptstyle\,\pm\,0.000}$
& $0.527{\scriptstyle\,\pm\,0.001}$
& $0.642{\scriptstyle\,\pm\,0.001}$
& $0.500{\scriptstyle\,\pm\,0.001}$
& $0.503{\scriptstyle\,\pm\,0.001}$ \\
\hline

\multirow{3}{*}{0.03}
& coverage (\%)
& $95.1{\scriptstyle\,\pm\,0.1}$
& $95.0{\scriptstyle\,\pm\,0.1}$
& $65.4{\scriptstyle\,\pm\,0.2}$
& $95.1{\scriptstyle\,\pm\,0.1}$
& $94.9{\scriptstyle\,\pm\,0.1}$
& $94.9{\scriptstyle\,\pm\,0.1}$ \\
& width
& $0.269{\scriptstyle\,\pm\,0.001}$
& $0.525{\scriptstyle\,\pm\,0.001}$
& $0.274{\scriptstyle\,\pm\,0.001}$
& $0.331{\scriptstyle\,\pm\,0.001}$
& $0.268{\scriptstyle\,\pm\,0.001}$
& $0.269{\scriptstyle\,\pm\,0.001}$ \\
& width-ratio
& $0.519{\scriptstyle\,\pm\,0.001}$
& $1.000{\scriptstyle\,\pm\,0.000}$
& $0.527{\scriptstyle\,\pm\,0.001}$
& $0.642{\scriptstyle\,\pm\,0.001}$
& $0.517{\scriptstyle\,\pm\,0.001}$
& $0.518{\scriptstyle\,\pm\,0.001}$ \\
\hline

\multirow{3}{*}{0.05}
& coverage (\%)
& $95.1{\scriptstyle\,\pm\,0.1}$
& $95.0{\scriptstyle\,\pm\,0.1}$
& $63.9{\scriptstyle\,\pm\,0.2}$
& $95.1{\scriptstyle\,\pm\,0.1}$
& $95.0{\scriptstyle\,\pm\,0.1}$
& $94.8{\scriptstyle\,\pm\,0.1}$ \\
& width
& $0.282{\scriptstyle\,\pm\,0.001}$
& $0.525{\scriptstyle\,\pm\,0.001}$
& $0.274{\scriptstyle\,\pm\,0.001}$
& $0.331{\scriptstyle\,\pm\,0.001}$
& $0.281{\scriptstyle\,\pm\,0.001}$
& $0.280{\scriptstyle\,\pm\,0.001}$ \\
& width-ratio
& $0.545{\scriptstyle\,\pm\,0.001}$
& $1.000{\scriptstyle\,\pm\,0.000}$
& $0.527{\scriptstyle\,\pm\,0.001}$
& $0.642{\scriptstyle\,\pm\,0.001}$
& $0.544{\scriptstyle\,\pm\,0.001}$
& $0.542{\scriptstyle\,\pm\,0.001}$ \\
\hline

\multirow{3}{*}{0.10}
& coverage (\%)
& $95.1{\scriptstyle\,\pm\,0.1}$
& $95.0{\scriptstyle\,\pm\,0.1}$
& $58.0{\scriptstyle\,\pm\,0.2}$
& $95.1{\scriptstyle\,\pm\,0.1}$
& $95.0{\scriptstyle\,\pm\,0.1}$
& $94.5{\scriptstyle\,\pm\,0.1}$ \\
& width
& $0.305{\scriptstyle\,\pm\,0.001}$
& $0.525{\scriptstyle\,\pm\,0.001}$
& $0.274{\scriptstyle\,\pm\,0.001}$
& $0.331{\scriptstyle\,\pm\,0.001}$
& $0.304{\scriptstyle\,\pm\,0.001}$
& $0.302{\scriptstyle\,\pm\,0.001}$ \\
& width-ratio
& $0.591{\scriptstyle\,\pm\,0.001}$
& $1.000{\scriptstyle\,\pm\,0.000}$
& $0.527{\scriptstyle\,\pm\,0.001}$
& $0.642{\scriptstyle\,\pm\,0.001}$
& $0.590{\scriptstyle\,\pm\,0.001}$
& $0.584{\scriptstyle\,\pm\,0.001}$ \\
\bottomrule
\end{tabular}
}
\end{table}

\paragraph{Per-$b_0$ numerical results.}
Table~\ref{tbl:synthetic_amazon_twopoint} reports average coverage, average width, and average width-ratio for each estimator under the two-point prior simulation, across the four values of $b_0$.

\begin{table}[H]
\footnotesize
\setlength{\tabcolsep}{4pt}
\renewcommand{\arraystretch}{1.2}
\caption{Simulation results under two-point prior bias with covariance structure adopted from the Amazon data ($n=200$, $K=500$ replications). Entries are reported as mean $\pm$ Monte Carlo SE.}
\label{tbl:synthetic_amazon_twopoint}
\makebox[\textwidth][c]{
\begin{tabular}{cccccccc}
\toprule
\textbf{$b_0$} &
& \textbf{Oracle}
& \textbf{Classical}
& \textbf{Pred Mean}
& \textbf{PT}
& \textbf{RB Normal}
& \textbf{RB NPMLE} \\
\hline
\multirow{3}{*}{0.05}
& coverage (\%)
& $95.0{\scriptstyle\,\pm\,0.1}$
& $95.1{\scriptstyle\,\pm\,0.1}$
& $91.1{\scriptstyle\,\pm\,0.1}$
& $95.0{\scriptstyle\,\pm\,0.1}$
& $94.8{\scriptstyle\,\pm\,0.1}$
& $94.8{\scriptstyle\,\pm\,0.1}$ \\
& width
& $0.266{\scriptstyle\,\pm\,0.001}$
& $0.525{\scriptstyle\,\pm\,0.001}$
& $0.274{\scriptstyle\,\pm\,0.001}$
& $0.332{\scriptstyle\,\pm\,0.001}$
& $0.265{\scriptstyle\,\pm\,0.001}$
& $0.266{\scriptstyle\,\pm\,0.001}$ \\
& width-ratio
& $0.513{\scriptstyle\,\pm\,0.001}$
& $1.000{\scriptstyle\,\pm\,0.000}$
& $0.527{\scriptstyle\,\pm\,0.001}$
& $0.643{\scriptstyle\,\pm\,0.001}$
& $0.511{\scriptstyle\,\pm\,0.001}$
& $0.513{\scriptstyle\,\pm\,0.001}$ \\
\hline

\multirow{3}{*}{0.10}
& coverage (\%)
& $95.0{\scriptstyle\,\pm\,0.1}$
& $95.1{\scriptstyle\,\pm\,0.1}$
& $80.4{\scriptstyle\,\pm\,0.2}$
& $95.0{\scriptstyle\,\pm\,0.1}$
& $94.8{\scriptstyle\,\pm\,0.1}$
& $94.8{\scriptstyle\,\pm\,0.1}$ \\
& width
& $0.280{\scriptstyle\,\pm\,0.001}$
& $0.525{\scriptstyle\,\pm\,0.001}$
& $0.274{\scriptstyle\,\pm\,0.001}$
& $0.332{\scriptstyle\,\pm\,0.001}$
& $0.281{\scriptstyle\,\pm\,0.001}$
& $0.280{\scriptstyle\,\pm\,0.001}$ \\
& width-ratio
& $0.541{\scriptstyle\,\pm\,0.001}$
& $1.000{\scriptstyle\,\pm\,0.000}$
& $0.527{\scriptstyle\,\pm\,0.001}$
& $0.643{\scriptstyle\,\pm\,0.001}$
& $0.544{\scriptstyle\,\pm\,0.001}$
& $0.542{\scriptstyle\,\pm\,0.001}$ \\
\hline

\multirow{3}{*}{0.20}
& coverage (\%)
& $94.9{\scriptstyle\,\pm\,0.1}$
& $95.1{\scriptstyle\,\pm\,0.1}$
& $57.2{\scriptstyle\,\pm\,0.2}$
& $95.0{\scriptstyle\,\pm\,0.1}$
& $94.8{\scriptstyle\,\pm\,0.1}$
& $95.0{\scriptstyle\,\pm\,0.1}$ \\
& width
& $0.289{\scriptstyle\,\pm\,0.001}$
& $0.525{\scriptstyle\,\pm\,0.001}$
& $0.274{\scriptstyle\,\pm\,0.001}$
& $0.332{\scriptstyle\,\pm\,0.001}$
& $0.305{\scriptstyle\,\pm\,0.001}$
& $0.291{\scriptstyle\,\pm\,0.001}$ \\
& width-ratio
& $0.555{\scriptstyle\,\pm\,0.001}$
& $1.000{\scriptstyle\,\pm\,0.000}$
& $0.527{\scriptstyle\,\pm\,0.001}$
& $0.643{\scriptstyle\,\pm\,0.001}$
& $0.591{\scriptstyle\,\pm\,0.001}$
& $0.560{\scriptstyle\,\pm\,0.001}$ \\
\hline

\multirow{3}{*}{0.50}
& coverage (\%)
& $95.0{\scriptstyle\,\pm\,0.1}$
& $95.1{\scriptstyle\,\pm\,0.1}$
& $47.5{\scriptstyle\,\pm\,0.2}$
& $95.0{\scriptstyle\,\pm\,0.1}$
& $94.9{\scriptstyle\,\pm\,0.1}$
& $95.2{\scriptstyle\,\pm\,0.1}$ \\
& width
& $0.261{\scriptstyle\,\pm\,0.001}$
& $0.525{\scriptstyle\,\pm\,0.001}$
& $0.274{\scriptstyle\,\pm\,0.001}$
& $0.332{\scriptstyle\,\pm\,0.001}$
& $0.325{\scriptstyle\,\pm\,0.001}$
& $0.266{\scriptstyle\,\pm\,0.001}$ \\
& width-ratio
& $0.501{\scriptstyle\,\pm\,0.001}$
& $1.000{\scriptstyle\,\pm\,0.000}$
& $0.527{\scriptstyle\,\pm\,0.001}$
& $0.643{\scriptstyle\,\pm\,0.001}$
& $0.630{\scriptstyle\,\pm\,0.001}$
& $0.510{\scriptstyle\,\pm\,0.001}$ \\
\bottomrule
\end{tabular}
}
\end{table}

\subsection{Family-based GWAS}
\label{app:experiment-gwas}

\begin{figure}
  \begin{minipage}{0.58\textwidth}
    \includegraphics[width=\linewidth]{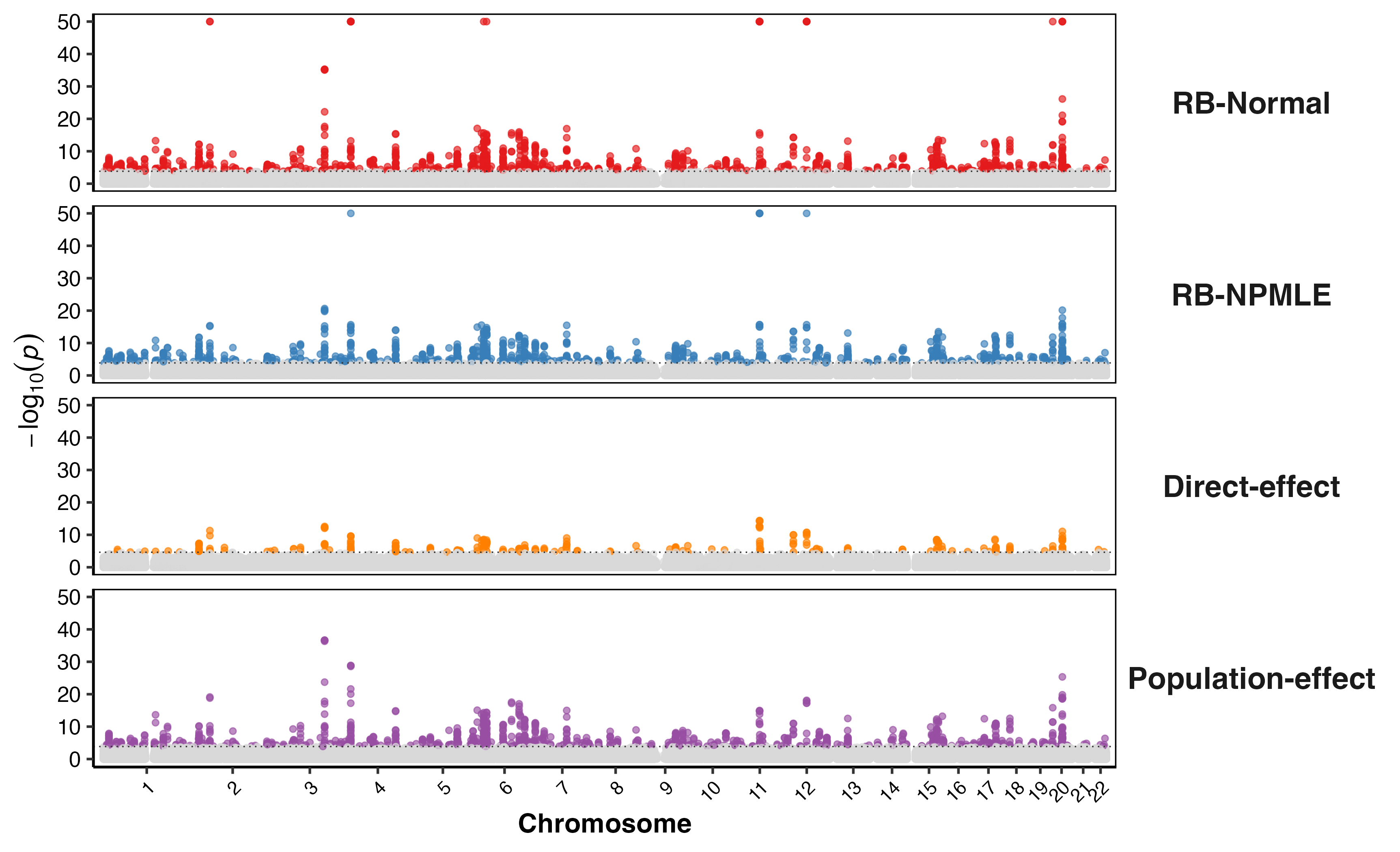}
  \end{minipage}%
  \hfill 
  \begin{minipage}{0.4\textwidth}
    \vspace{-2em}
    \caption{Manhattan plots of SNP-level p-values, shown on the $-\log_{10}$ scale. Horizontal dashed lines indicate the corresponding B-H thresholds. SNPs passing the threshold are highlighted as significant. For visualization, values exceeding $-\log_{10}(10^{-50})=50$ are truncated at 50.}
    \label{fig:gwas_manhattan}
  \end{minipage}
\end{figure}

\begin{table}[t]
  \centering
  \caption{Summary of LD-clumped SNP discoveries from the height family-based GWAS summary statistics. We report the numbers of LD-clumped discoveries from BH at FDR $0.05$ applied to our \textit{RB-NPMLE} and \textit{RB-Normal} p-values, the (unbiased) direct-effect
p-values, and the (biased) population-effect p-values, together with their
overlaps with \citet{Howe2022-fm} signals.
  }
  \label{tab:discovery-overlap-LD-clumped}
  \begin{tabular}{lrrrrr}
  \toprule
   & \multicolumn{1}{c}{\textbf{RB-NPMLE}}
   & \multicolumn{1}{c}{\textbf{RB-Normal}}
   & \multicolumn{1}{c}{\textbf{Direct-effect}}
   & \multicolumn{1}{c}{\textbf{Population-effect}}
    \\
   & \multicolumn{1}{c}{\textbf{(ours)}}
   & \multicolumn{1}{c}{\textbf{(ours)}}
   & \multicolumn{1}{c}{\textbf{(unbiased)}}
   & \multicolumn{1}{c}{\textbf{(biased)}}
    \\
  \midrule
  \textbf{\# discoveries} & $544$ & $743$ & $111$ & $560$  \\
  \textbf{\# overlaps} & $176$ & $175$ & $77$ & $162$  \\
  \bottomrule
\end{tabular}
\end{table}

\paragraph{Datasets.} The height family-based GWAS summary statistics from \citet{Guan2025-jq} are publicly available at \url{https://thessgac.com}. The estimates were obtained by running family-based SNP-wise regressions on $44{,}570$ ``white British'' individuals in the UK Biobank \citep{Bycroft2018-zp}, controlling for 40 genetic principal components and other covariates. Sibling (close to our target direct effect) estimate summary statistics from \citet{Howe2022-fm} are accessible from OpenGWAS \citep{opengwas} through the \texttt{ieugwasr} R package \citep{ieugwasr} with OpenGWAS ID ieu-b-4813. 1000 Genomes Phase 3 EUR reference panel \citep{1000_Genomes_Project_Consortium2015-vn} can be obtained from \url{http://fileserve.mrcieu.ac.uk/ld/1kg.v3.tgz}.

\paragraph{Details on overlap and LD-matching analysis.}
For each height analysis in \citet{Howe2022-fm}, we define putatively significant SNPs using their suggested liberal threshold, $\text{p-value} < 1\times 10^{-6}$, and perform the comparison separately for the sibling-based direct-effect and population-effect estimates. We first restrict our discoveries to variants present in the 1000 Genomes European reference panel \cite{1000_Genomes_Project_Consortium2015-vn}, which was used to estimate linkage disequilibrium (LD). We then count a discovery as matching Howe et al. if it is either identical to one of their putatively significant SNPs or is in LD with one, defined as being within 250 kb and having $r^2 \ge 0.8$. This LD-based comparison provides a stringent operational definition for treating nearby correlated variants as evidence for the same underlying association signal. We used \texttt{plink2} \citep{PLINK} to compute LD and retain variant pairs within 250 kb with $r^2 \ge 0.8$.

\paragraph{Overlap analysis on LD clumped discoveries}
One possible concern is that the larger number of overlaps from our rebiasing procedure, compared with the population-effect baseline, could simply reflect the discovery of more correlated SNPs. To investigate this further, for each discovery set, we identify independent significant SNPs via LD clumping \citep{PurcellPLINK}, so that discovered SNPs that are physically close to each other (within 250 kb) and are correlated ($r^2>0.1$) are reduced to one single representative SNP. We repeat the overlap analysis, with results summarized in Fig.~\ref{tab:discovery-overlap-LD-clumped}. The observed patterns are similar to those shown in Table~\ref{tab:discovery-overlap}, with our RB-NPMLE-based procedure showing more overlaps than the biased population-effect baseline but with fewer total LD-clumped discoveries.

\paragraph{Comments on small bias assumptions for height.} 
We expect the true bias $b_i$ for height to be small for the following reasons: (1) the top 40 genetic principal components are included as controls in the regressions, so much of the confounding bias due to population stratification should be accounted for \citep{Price2006-ng}; and (2) previous analyses \citep{Yengo2018-jb, Kong2018-ld, Young2022-tz} have shown that parental indirect genetic effects are weak for height. Although confounding due to assortative mating may still be present, it is expected to be small.

\end{document}